\documentclass[11pt]{article}

\usepackage{enumitem}
 
\usepackage[T1]{fontenc}
\usepackage[utf8]{inputenc}

\usepackage{authblk}

\usepackage{float}
\usepackage{pdfpages}
\usepackage{fullpage}
\usepackage{changepage}
\usepackage{a4wide}
\usepackage{geometry}
\usepackage{nameref}
\usepackage{hyperref}
\hypersetup{
    colorlinks,
    linkcolor={red!50!black},
    citecolor={blue!50!black},
    urlcolor={blue!80!black}
}
\usepackage{booktabs}
\usepackage{amsthm}
\usepackage{amsmath}
\usepackage{multirow}
\usepackage{amssymb}
\usepackage{mathtools}
\usepackage{mathabx}
\usepackage{amsfonts} 
\usepackage{dsfont}
\usepackage{subcaption}

\usepackage{csquotes}
\usepackage{comment}
\usepackage[normalem]{ulem}
\usepackage{cancel}

\usepackage{placeins}
\usepackage{float}
\usepackage{todonotes}
\usepackage{xcolor}

\usepackage{complexity}

\newcommand{\cA}{\mathcal{A}}
\newcommand{\cB}{\mathcal{B}}
\newcommand{\cC}{\mathcal{C}}
\newcommand{\cD}{\mathcal{D}}
\newcommand{\cE}{\mathcal{E}}
\newcommand{\cF}{\mathcal{F}}
\newcommand{\cG}{\mathcal{G}}

\newcommand{\cN}{\mathcal{N}}

\newcommand{\calP}{\mathcal{P}}

\newcommand{\calS}{\mathcal{S}}
\newcommand{\cT}{\mathcal{T}}

\definecolor{myviolet}{rgb}{0.4,0.0,0.8}

\definecolor{mygreen}{rgb}{0.0,0.6,0.0}

\renewcommand{\triangle}{\tikz[scale=0.18]{\draw (-30:1)--(90:1)--(210:1)--(-30:1)}}

\newcommand{\OneCirc}{
\tikz[scale=0.18, baseline=-0.5ex]{
\coordinate (A) at (0,0);
\node[circle, draw, fill=lightgray, inner sep=1.5pt] at (A) {};
}
}

\newcommand{\TwoCirc}{
\tikz[scale=0.18, baseline=-0.5ex]{
\coordinate (B) at (-1,0);
\coordinate (C) at (1,0);
\node[circle, draw, fill=lightgray, inner sep=1.5pt] at (B) {};
\node[circle, draw, fill=lightgray, inner sep=1.5pt] at (C) {};
}
}

\newcommand{\TwoCircLinked}{
\tikz[scale=0.18, baseline=-0.5ex]{
\coordinate (B) at (-1,0);
\coordinate (C) at (1,0);
\draw (B) -- (C);
\node[circle, draw, fill=lightgray, inner sep=1.5pt] at (B) {};
\node[circle, draw, fill=lightgray, inner sep=1.5pt] at (C) {};
}
}

\newcommand{\ThreeCirc}{
\tikz[scale=0.18, baseline=-0.5ex]{
\coordinate (A) at (90:1);
\coordinate (B) at (210:1);
\coordinate (C) at (-30:1);
\draw (B) -- (C);

\node[circle, draw, fill=lightgray, inner sep=1.5pt] at (A) {};
\node[circle, draw, fill=lightgray, inner sep=1.5pt] at (B) {};
\node[circle, draw, fill=lightgray, inner sep=1.5pt] at (C) {};
}
}

\newcommand{\ThreeCircIndependant}{
\tikz[scale=0.18, baseline=-0.5ex]{
\coordinate (A) at (90:1);
\coordinate (B) at (210:1);
\coordinate (C) at (-30:1);
\node[circle, draw, fill=lightgray, inner sep=1.5pt] at (A) {};
\node[circle, draw, fill=lightgray, inner sep=1.5pt] at (B) {};
\node[circle, draw, fill=lightgray, inner sep=1.5pt] at (C) {};
}
}

\newcommand{\ThreeCircTriangle}{
\tikz[scale=0.18, baseline=-0.5ex]{
\coordinate (A) at (90:1);
\coordinate (B) at (210:1);
\coordinate (C) at (-30:1);
\draw (B) -- (C);
\draw (A) -- (B);
\draw (A) -- (C);
\node[circle, draw, fill=lightgray, inner sep=1.5pt] at (A) {};
\node[circle, draw, fill=lightgray, inner sep=1.5pt] at (B) {};
\node[circle, draw, fill=lightgray, inner sep=1.5pt] at (C) {};
}
}

\newcommand{\ThreeCircPath}{
\tikz[scale=0.18, baseline=-0.5ex]{
\coordinate (A) at (90:1);
\coordinate (B) at (210:1);
\coordinate (C) at (-30:1);
\draw (B) -- (C);
\draw (A) -- (B);
\node[circle, draw, fill=lightgray, inner sep=1.5pt] at (A) {};
\node[circle, draw, fill=lightgray, inner sep=1.5pt] at (B) {};
\node[circle, draw, fill=lightgray, inner sep=1.5pt] at (C) {};
}
}

\newlength{\contentheight}

\newcommand{\dir}[1]{\overrightarrow{#1}}

\theoremstyle{plain}
\newtheorem{problem}{Problem}

\newtheorem*{theorem*}{Theorem}
\newtheorem*{corollary*}{Corollary}

\newtheorem{theorem}{Theorem}[section]
\newtheorem{lemma}[theorem]{Lemma}

\newtheorem{claim}[theorem]{Claim}

\newtheorem{corollary}[theorem]{Corollary}

\theoremstyle{remark}
\newtheorem*{remark}{Remark}

\renewenvironment{proof}{\medskip \par
\noindent \textbf{Proof.} \newline
}{\hfill$\Box$\medskip}

\usepackage[
	backend=biber,        
	backref=false,         
	sorting=none,          
	citestyle= numeric,   
	bibstyle= numeric,    
    doi=false,            
    isbn=false,           
    url=false,            
    giveninits=true,      
    eprint=false,         
    hyperref=true,        
    alldates=year,        
    datezeros=true,       
    natbib=true,          
]{biblatex}

\renewbibmacro*{issue+date}{%
	\iffieldundef{year}
	{} 
	{\printtext[parens]{\printfield{year}}} 
}

\AtEveryBibitem{\clearfield{pagetotal}}
\AtEveryBibitem{\clearfield{visited}}
\renewbibmacro{in:}{%
    \ifentrytype{article}
    {}
    {\bibstring{in}%
        \printunit{\intitlepunct}}}

\DeclareCiteCommand{\fullcite} 
    {\boolfalse{citetracker}%
	\boolfalse{pagetracker}%
	\usebibmacro{prenote}}
    {\ifciteindex
        {\indexnames{labelname}} 								
        {}														
	\printnames{labelname}										
	\iffieldundef{year}											
		{}														
		{\setunit{\printdelim{nonameyeardelim}} 				
		\printfield[parens]{year}}								
	\setunit{\printdelim{nonameyeardelim}}
	\printtext[bibhyperref]{									
		\printtext[brackets]{\printfield{labelnumber}}}}		
	{\multicitedelim}
	{\usebibmacro{postnote}}

\DeclareCiteCommand{\cite} 
    {\boolfalse{citetracker}%
	\boolfalse{pagetracker}%
	}
	{\printtext[bibhyperref]{
		\printtext[brackets]{\printfield{labelnumber}}}}  
    {\multicitedelim}  
    {}

\DeclareCiteCommand{\citeold} 
	{\usebibmacro{prenote}} 
	{\printnames{author}
	 \addspace
	 \printdate
	 \addspace
	 \printtext[bibhyperref]{
		\printtext[brackets]{\printfield{labelnumber}}}} 
	{\multicitedelim} 
	{\usebibmacro{postnote}} 

\DeclareCiteCommand{\citetitle}
	{\boolfalse{citetracker}%
	\boolfalse{pagetracker}%
	\usebibmacro{prenote}}
	{\ifciteindex
		{\indexfield{indextitle}}
		{}%
		\printtext[bibhyperref]{\printfield[citetitle]{labeltitle}}}
	{\multicitedelim}
	{\usebibmacro{postnote}}

\DeclareCiteCommand{\citeauthor}
	{\boolfalse{citetracker}%
	\boolfalse{pagetracker}%
	\usebibmacro{prenote}}
	{\ifciteindex
		{\indexnames{labelname}}
		{}%
		\printtext[bibhyperref]{\printnames{labelname}}}
	{\multicitedelim}
	{\usebibmacro{postnote}}

\DeclareFieldFormat{backrefparens}{#1.}

\defbibfilter{education}{
  	keyword={education}
}

\defbibfilter{graph}{
  	keyword={graph}
}

\defbibfilter{no_educ}{
  	not keyword={education}
}

\begin{document}

\title{Some polynomial classes for the acyclic orientation with parity constraint problem }

\author[1,3]{Sylvain Gravier}
\author[1,3]{Matthieu Petiteau}
\author[2,3]{Isabelle Sivignon}

\affil[1]{Univ. Grenoble Alpes, CNRS, Institut Fourier, 38000 Grenoble, France}
\affil[2]{Univ. Grenoble Alpes, CNRS, Grenoble INP, GIPSA-lab, 38000 Grenoble, France}
\affil[3]{Univ. Grenoble Alpes, Maths à Modeler, 38000 Grenoble, France}

\date{\today}
\maketitle
 
\abstract{

We study the problem of finding an acyclic orientation of an undirected
graph with constrained in-degree parities specified by a
subset $T$ of vertices. An orientation is called $T$-odd if a vertex $v$
has odd in-degree if and only if $v \in T$. While the unconstrained
parity orientation problem is polynomial (Chevalier
et al. (1983)), imposing acyclicity makes it more
challenging, and its complexity remains an open question. Szegedy and
Szegedy (2006) proposed a randomized polynomial-time algorithm for
this problem, but it is not known whether it belongs to
co-$\NP$. Furthermore, Gravier et al. (2025) showed the problem becomes
$\NP$-complete on partially directed graphs, even when restricted to
planar cubic graphs. 

\noindent We identify three necessary conditions for the existence of
acyclic $T$-odd orientation: a global parity condition $\calP$, and two
conditions $\calS$ and $\overline{\calS}$ ensuring the existence of
potential sources and sinks. Following the 
work of Frank and Kiraly (2002), we define graph
classes containing the graphs for which a given subset of
the necessary conditions $\calP$, $\calS$ and $\overline{\calS}$ is also
sufficient for the existence of an acyclic $T$-odd orientation. We 
establish the inclusion relationships between these classes.

\noindent We complete the study of these classes by a characterization
of the solvable instances for Cartesian products of paths and
cycles. The proofs of these results are all constructive, so that 
acyclic $T$-odd orientations can be built in polynomial time whenever
they exist. We use these families, along with cliques, to demonstrate
the strictness of the class inclusions in our hierarchy.
}

\tableofcontents

\section{Introduction}

\noindent In this paper, we study an orientation problem with parity constraints on simple graphs.
We first consider a local constraint on the parity of the in-degree of vertices.
For a graph $G$ and a subset $T \subseteq V(G)$, an orientation of $G$ is called $T$-\textbf{odd} if any vertex $v \in V$ has an odd in-degree if and only if it is in $T$.
Chevalier, Jaeger, Payan, and Xuong (1983) \cite{chevalierOddRootedOrientations1983} demonstrated that finding a $T$-odd orientation of an undirected graph $G$ can be solved in polynomial time.
Specifically, they showed that such an orientation exists if and only if $T$ satisfies $|E| + |T| \equiv_2 0$. In the following, we call this property $\calP$.\newline

\noindent Frank and Király (2002) \cite{frankGraphOrientationsEdgeconnection2002} later considered adding a $k$-connectivity global constraint.
They provided a characterization of undirected graphs $G = (V, E)$ for which there exists a $k$-arc-connected $T$-odd orientation if and only if $T \subset V$ satisfies $\calP$.
This work also raised questions about other global constraints such as acyclicity, which can be seen as another type of connectivity constraint.
This leads to the following problem:

\begin{problem}[Acyclic Orientation with Parity Constraints]\label{prob:AOP}\ \newline
	\textbf{Instance:} A pair $(G,T)$ with $G$ an undirected graph and $T \subseteq V(G)$ a subset of its vertex set.\newline
	\textbf{Question:} Does $G$ have an acyclic $T$-odd orientation?
\end{problem}

\noindent The characterization of the satisfying instances of this problem is referenced on the Egres platform\footnote{\url{https://lemon.cs.elte.hu/egres/open/Acyclic_orientation_with_parity_constraints}} and is known as the "Acyclic orientation with parity constraints" problem.
The complexity of this decision problem remains an open question.
Szegedy (2005) \cite{szegedyApplicationsWeightedCombinatorial2005} proposed a randomized polynomial algorithm to answer Problem \ref{prob:AOP}, it is closely related to results on matroïds presented also in Szegedy, Szegedy (2006) \cite{szegedySymplecticSpacesEarDecomposition2006}; however, it is not known whether it belongs to co-$\NP$, and the problem becomes $\NP$-complete on partially directed graphs, even restricted to graphs being both planar and cubic \cite{gravierNoteComplexityAcyclic2025}.\newline

\noindent Following the work of Frank and Kiraly (2002) \cite{frankGraphOrientationsEdgeconnection2002}, an associated problem considering acyclicity is also introduced on the Egres platform:
\begin{problem}[]\label{prob:AOP_allT}\ \newline
	Find a good characterization for undirected graphs $G=(V,E)$ which, for every possible $T\subsetneq V$ satisfying $\calP$, have an acyclic orientation $T$-odd orientation.
\end{problem}

\noindent The condition $\calP$ remains a necessary condition for the existence of an acyclic $T$-odd orientation.
Note also that the acyclicity constraint gives rise to two other necessary conditions: the existence of a potential source and a potential sink.
Any acyclic orientation of a finite graph $G$ must have at least one vertex with in-degree $0$ and one vertex with out-degree $0$.
These vertices are respectively called a \textbf{source} and a \textbf{sink}. 
In the context of $T$-odd orientations, a vertex is called a \textbf{potential source} if it belongs to $Source(T) = V \setminus T$ and a \textbf{potential sink} if it belongs to $Sink(T) = \{v \in T, d(v) \equiv_2 1\} \cup \{v \in V\setminus T, d(v) \equiv_2 0\}$.
These names arise from the fact that, in any acyclic $T$-odd orientation of $G$, the sources and sinks must be potential sources and potential sinks, respectively.
It is also worth noticing that if $Source(T) = Sink(T) = \{v\}$ for some vertex $v$, then no acyclic $T$-odd orientation exists unless the graph is a singleton.
This last necessary condition is relatively weak; indeed, any odd degree vertex in $T$ belongs to $Sink(T)\setminus Source(T)$ and any odd degree vertex in $V\setminus T$ belongs to $Source(T)\setminus Sink(T)$, while even degree vertices in $V\setminus T$ belong to $Source(T)\cap Sink(T)$, so $Source(T)=Sink(T)$ if and only if the graph has no odd degree vertex. Thus, we do not consider it apart from the other conditions, and for symmetry reasons, we consider it in both the \textbf{source condition} $\calS$ and \textbf{sink condition} $\overline\calS$.
This leads to the following necessary conditions for the existence of an acyclic $T$-odd orientation:
\begin{itemize}
	\item [$\calP$:] $|E| + |T| \equiv_2 0$.

	\item [$\calS$:] $Source(T) \neq \emptyset$ and if $|Source(T)|=1$ then either
	$\begin{cases}
    |V|=1, \\
    \text{or} \\
    Source(T) \neq Sink(T).
    \end{cases}$

	\item [$\overline\calS$:] $Sink(T) \neq \emptyset$ and if $|Sink(T)|=1$ then either
	$\begin{cases}
    |V|=1, \\
    \text{or} \\
    Sink(T) \neq Source(T).
    \end{cases}$
\end{itemize}

\noindent We now define the classes $\cC_{\calP}, \cC_{\calS}, \cC_{\overline\calS}, \cC_{\calP\calS}, \cC_{\calP\overline\calS}, \cC_{\calS\overline\calS}, \cC_{\calP \calS \overline\calS}$ as the sets of graphs for which the indexed necessary conditions are also sufficient for the existence of an acyclic $T$-odd orientation.
Specifically, a graph belongs to $\cC_{\cN}$ for $\cN \subseteq \{\calP\calS\overline\calS\}$ if it admits an acyclic $T$-odd orientation for any subset $T$ satisfying all necessary conditions in $\cN$, thereby implying that all the necessary conditions in $\{\calP\calS\overline\calS\} \setminus \cN$ are always satisfied for all such such $T$.
In particular, this gives straightforward inclusions between these classes: if $\cN' \subseteq \cN \subseteq \{\calP\calS\overline\calS\}$, then $\cC_{\cN'} \subseteq \cC_{\cN}$.

\noindent Our first result is the description of all the classes mentioned above. The proof of this theorem is the object of Section \ref{sec:classes}.

\begin{theorem}\label{thm:treilli}~
    \begin{enumerate}[label= $(\alph*)$]
        \item $\cC_{\calS} = \cC_{\overline\calS}= \{\OneCirc, \TwoCirc\}$,
        \item $\cC_{\calS\overline\calS}=\{\OneCirc, \TwoCirc, \TwoCircLinked\}$,
        \item $\cC_{\calP}= \{\OneCirc\}\cup \{G \in \cC_{\calP \calS \overline\calS} \ | \ G \text{ is connected, non-Eulerian, and } |V(G)| + |E(G)| \equiv_2 1\}$,

        \item $\cC_{\calP\calS} \setminus \cC_{\calP}
        =\cC_{\calP\overline\calS} \setminus \cC_{\calP}
        = \{\TwoCirc, \ThreeCircIndependant\}\cup \{G\in \cC_{\calP \calS \overline\calS} \mid G \text{ is Eulerian}\}$,

		\item $\cC_{\calP \calS \overline\calS} \setminus  \cC_{\calP \calS} \subset \{\ThreeCirc\} \cup \{G \ | \ G \text{ is connected, non-Eulerian, and } |V(G)| + |E(G)| \equiv_2 0\}$.
    \end{enumerate}
\end{theorem}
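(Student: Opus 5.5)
The whole argument turns on one observation. For a graph $G$ and any single necessary condition, membership in $\cC_{\cN}$ requires that every $T$ satisfying the conditions in $\cN$ also satisfies the conditions missing from $\cN$, and that $G$ actually admits an acyclic $T$-odd orientation. Since each condition is decidable from $T$, the degree parities and $|E|$, I plan to work out, for each $\cN$, which graphs force the missing conditions for every admissible $T$. After that I intersect with $\cC_{\calP\calS\overline\calS}$. Throughout I use the symmetry $T \mapsto T'$, where $T'$ is obtained from $T$ by toggling membership of every vertex $v$ with $d(v)$ odd and complementing. Concretely, reversing an acyclic $T$-odd orientation gives an acyclic orientation whose in-degrees are the old out-degrees $d(v)-d^-(v)$. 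This swaps $Source(T)$ and $Sink(T)$, preserves $\calP$ (because $|T'|\equiv |T|+\sum_v d(v)+|V|$ up to even terms; I will check this precisely), and exchanges $\calS$ with $\overline\calS$. It yields $\cC_{\calS}=\cC_{\overline\calS}$ and $\cC_{\calP\calS}=\cC_{\calP\overline\calS}$ at once, so in (a) and (d) I only treat one side.

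\textbf{Part (a).} I will show that if $G$ has an edge $uv$, then some $T$ satisfying $\calS$ violates $\calP$ or $\overline\calS$. I first try $T$ chosen so that $Sink(T)=\emptyset$ while $Source(T)$ has at least two vertices. If that fails, I flip one vertex to break $\calP$. The small cases with $|V|\le 2$ I will check by hand: on the edge graph, $T=\emptyset$ satisfies $\calS$ but violates $\calP$. The edgeless graphs with $|V|\ge 3$ fail because $T=V$ has no source. The one- and two-vertex edgeless graphs work, since there every $T$ with a source orients trivially.

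\textbf{Part (b).} I take the same approach. For $|E|\ge1$ I exhibit a $T$ satisfying $\calS$ and $\overline\calS$ but not $\calP$; for instance, start from a $T$ with both conditions and toggle one even-degree vertex lying outside the relevant witnesses. This works unless the graph is tiny. For $\TwoCircLinked$ I check directly that $\calS$ and $\overline\calS$ force $|T|=1$.

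\textbf{Parts (c)–(e).} These become parity bookkeeping. For (c), $G\in\cC_{\calP}$ requires that $\calP$ implies $\calS$. If $G$ is disconnected, or has fewer sources than needed, I take $T=V$ or $T=V\setminus\{v\}$ with parity adjusted. This gives $Source(T)$ empty or a single vertex $\{v\}$ with $Source(T)=Sink(T)$, possible exactly when $v$ has even degree. The case $T=V$ is $\calP$-admissible iff $|V|+|E|$ is even, which forces $|V|+|E|$ odd. Then $T=V\setminus\{v\}$ is admissible, and it must fail to produce the bad singleton. That requires every vertex to have odd degree, or at least no even-degree vertex with the right sink behaviour; I will compute $Sink(V\setminus\{v\})$ and show the obstruction is exactly "Eulerian". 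Connectivity comes from applying $\calP$ componentwise to build a cyclic-forced instance.

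For (d), $\calP\calS$ implies $\overline\calS$, and this requires that the bad singleton sink configuration never occurs with $\calP$. I will show this happens exactly when all degrees are even, which makes $Source=Sink$ so that $\calS$ and $\overline\calS$ coincide, apart from the small exceptions $\TwoCirc$ and $\ThreeCircIndependant$. Part (e) is the contrapositive residue of (c) and (d). The main obstacle is the careful enumeration of the tiny exceptional graphs and making sure the witness $T$ in each case meets $\calP$ exactly. I would handle this by tabulating all graphs on at most three vertices explicitly.
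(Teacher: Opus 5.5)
Your framework ($G\in\cC_{\cN}$ iff $G\in\cC_{\calP\calS\overline\calS}$ and every $T$ satisfying $\cN$ satisfies all three conditions) is sound. The reversal symmetry is a neat shortcut the paper does not use. However, the proposal has a genuine gap: nothing in it proves the connectivity asserted in (e). What (c), (d) and your framework give for $G\in\cC_{\calP\calS\overline\calS}\setminus\cC_{\calP\calS}$ is only that $G$ has an odd-degree vertex and $|V|+|E|\equiv_2 0$; "connected" does not follow, so (e) is not a mere "contrapositive residue".

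Your componentwise-parity trick shows that a disconnected graph is not in $\cC_{\calP}$. That is fine for (c), and simpler than the paper's appeal to Lemma~\ref{lem:connexe} there. For (e), though, you need disconnected graphs to lie outside $\cC_{\calP\calS\overline\calS}$. That requires a witness $T$ satisfying $\calP$, $\calS$ and $\overline\calS$ globally while some component is infeasible, and this is not automatic:
\begin{itemize}
\item $\ThreeCirc$ is disconnected and does belong to $\cC_{\calP\calS\overline\calS}\setminus\cC_{\calP\calS}$, which is why (e) needs an exception.
\item For two disjoint edges, the choices $T=\emptyset$ and $T=V$ break parity in both components, but they violate $\overline\calS$ and $\calS$ globally, respectively.
\end{itemize}
The paper supplies this step as Lemma~\ref{lem:connexe}: for $|V|\ge 4$, a disconnected graph is not in $\cC_{\calP\calS\overline\calS}$. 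Its witnesses leave an entire component inside $T$, so that component has no potential source. Examples are $T=V\setminus\{x\}$ when $|V|+|E|$ is odd, and $T=V\setminus\{u,x\}$ with $u$ of odd degree and $x$ a neighbour when $|V|+|E|$ is even. When $ux$ is itself a $K_2$-component, it uses $T=V\setminus\{y,z\}$, and this is exactly where $|V|\ge 4$ enters. If "Eulerian" in (d) includes connectivity (as the separate listing of $\TwoCirc$ and $\ThreeCircIndependant$ suggests), then (d) also needs the all-even case of that lemma.

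Several local claims also need repair.
\begin{itemize}
\item \textbf{Reversal map.} It is $T'=T\triangle O(G)$, with $O(G)$ the set of odd-degree vertices and no complementation. Then $|T'|\equiv_2|T|$ because $|O(G)|$ is even, whereas complementing would destroy $\calP$ when $|V|$ is odd.
\item \textbf{Part (a).} $T=V$ violates $\calS$, so it cannot witness that edgeless graphs on $\ge 3$ vertices lie outside $\cC_{\calS}$. Use $T=\{v\}$ instead: it satisfies $\calS$ and $\overline\calS$ (the source and sink sets coincide and have size $\ge 2$) and violates $\calP$. This also covers these graphs in (b), which your "$|E|\ge 1$" clause omits. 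Likewise, $\TwoCirc\in\cC_{\calS}$ not because every $T$ with a source works ($T=\{a\}$ for $V=\{a,b\}$ does not), but because the singleton clause of $\calS$ excludes that $T$.
\item \textbf{Part (b).} Toggling an even-degree vertex is impossible when all degrees are odd, e.g.\ $K_4$. The paper's uniform witness is a $T$ with $|T|\in\{1,2\}$ chosen to break $\calP$, containing an odd-degree vertex if one exists.
\item \textbf{Part (d).} The claim "$\calP\calS\Rightarrow\overline\calS$ exactly when all degrees are even" is false as stated. The implication also holds for every graph with an odd-degree vertex and $|V|+|E|$ odd (e.g.\ paths), since then $\calP$ alone forces $\calS$ and $\overline\calS$. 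For non-all-even graphs with $|V|+|E|$ even, the obstruction is $Sink(T)=\emptyset$ at $T=V\setminus O(G)$, not a singleton source/sink. Only after the odd case is absorbed into $\cC_{\calP}$ does the all-even characterization of $\cC_{\calP\calS}\setminus\cC_{\calP}$ come out right.
\end{itemize}
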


\noindent With this new formalism, Problem \ref{prob:AOP_allT} is essentially asking if $G$ belongs to $\cC_{\calP \calS}$.
 To put it in perspective, let us recall a result from Kiraly and Kisfaludi-Bak:
\begin{theorem}[Theorem 1.8 in \cite{kiralyDualCriticalGraphsNotes}]\label{thm:cp_carac}~\newline
	Let $G = (V,E)$ be a graph.\newline
	$G$ admits an acyclic $T$-odd orientation for all $T \subseteq V$ satisfying $|E(G)| + |T| \equiv_2 0$ if and only if $G$ admits an acyclic $T'$-odd orientation with $T' = V \setminus \{v\}$ for some vertex $v \in V$.
\end{theorem}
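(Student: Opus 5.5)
The plan is to prove the two directions separately, with the forward direction being a short parity argument and the backward direction carrying all the work. Throughout, I will use the standard dictionary between acyclic orientations and linear orders: an acyclic orientation of $G$ is the same as an ordering $v_1,\dots,v_n$ of $V$ with every edge oriented from the earlier to the later endpoint. Under this dictionary, the in-degree of $v_i$ is the number $b(v_i)$ of neighbours of $v_i$ that precede it. So an ordering realises exactly the set $T=\{x : b(x) \text{ odd}\}$, and I call such a $T$ \emph{realisable}.

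\textbf{Forward direction.} Assume every $T$ with $|E|+|T|\equiv_2 0$ is realisable. If $|E|+|V|\equiv_2 0$, then $T=V$ satisfies the parity condition. But in any $V$-odd orientation every in-degree is odd, so there is no source, which contradicts acyclicity (for $V\neq\emptyset$). Hence $|E|+|V|\equiv_2 1$. For any vertex $v$, the set $T'=V\setminus\{v\}$ then satisfies the parity condition and is realisable by hypothesis.

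\textbf{Backward direction.} Fix an ordering realising $T'=V\setminus\{v\}$. Every vertex other than $v$ has at least one earlier neighbour, so $v$ is the unique source and $v=v_1$; in particular $G$ is connected. The order also gives $|E|=\sum_x b(x)\equiv_2 |V|-1$. Now let $T$ satisfy the parity condition. Then $|T\,\triangle\,T'|\equiv_2 |T|+|T'|\equiv_2 |E|+|E| \equiv_2 0$. Since $G$ is connected, every even set is a sum (mod $2$) of pairs $\{a,b\}$ with $ab\in E$: pair up its elements and toggle along paths. It would therefore suffice to show that the realisable family is closed under $T\mapsto T\triangle\{a,b\}$ for every edge $ab$.

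The local tool for this is an adjacent transposition. Swapping two consecutive vertices of the order leaves $T$ unchanged if they are non-adjacent in $G$, and toggles exactly both of them if they are adjacent. More generally, sliding a vertex $x$ past a block of the order toggles each neighbour of $x$ in that block, and toggles $x$ itself according to the parity of the number of those neighbours.

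\textbf{Main obstacle.} Closure under edge toggles is false for arbitrary realisable sets, because one cannot always make $a$ and $b$ consecutive without creating side effects. So the argument must exploit the special starting family $\{V\setminus\{u\}\}$ rather than general closure. I would first try to prove that $V\setminus\{u\}$ is realisable for \emph{every} $u$, by moving the unique source along an edge. For an edge $vw$, I would look for a modification of the order that puts $w$ first and keeps every other vertex odd. Sliding $w$ to the front toggles $w$ and its neighbours preceding it, so this needs a compensating parity repair, probably by induction along a path. With all $V\setminus\{u\}$ available, I would then argue by induction on $|V\setminus T|$ for general $T$: choose a last vertex $w$ whose degree parity is compatible with $T$, delete it, and apply induction to $G-w$ with the adjusted set $T\triangle (N(w)\cap\cdot)$, using a realisation of some $V\setminus\{u\}$ as a witness for $G-w$. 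Making this inductive hypothesis descend to $G-w$ is the crux. If it fails, my fallback is to translate the hypothesis into an ear-decomposition / factor-critical-type structure, in the spirit of the dual-critical graphs of Király and Kisfaludi-Bak, and build the required order ear by ear.
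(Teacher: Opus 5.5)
Your forward direction is correct. The preliminary facts in the backward direction are also correct: $v=v_1$, $G$ is connected, and $|E|\equiv_2|V|-1$. The backward direction itself, however, is never proved. The edge-toggle route is abandoned, realisability of every $V\setminus\{u\}$ is only hoped for, and the descent to $G-w$ is explicitly left open.

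Moreover, your proposed selection rule (any $w$ compatible with $T$ as a last vertex) cannot carry the induction, because the hypothesis need not hold for $G-w$. If $w\notin T$ has even degree, then $|E(G-w)|+|V(G-w)|\equiv_2 |E|+|V|-d(w)-1\equiv_2 0$. By the edge count you made yourself, $G-w$ then has no acyclic $(V(G-w)\setminus\{u\})$-odd orientation for any $u$. An odd-degree $w\in T$ can fail too: the centre of $K_{1,3}$ is such a $w$, and deleting it leaves three isolated vertices.

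The missing idea is to let the witnessing order choose the deleted vertex, and to allow that vertex to be placed first as well as last. Induct on $|V|$; for $|V|=1$, parity forces $T=\emptyset$. For $n\geq 2$, let $v_1,\dots,v_n$ realise $V\setminus\{v\}$. Then $v_n\neq v_1$ and all neighbours of $v_n$ precede it, so $d:=d(v_n)=b(v_n)$ is odd. The truncated order $v_1,\dots,v_{n-1}$ realises $V(G-v_n)\setminus\{v\}$ in $G-v_n$. So, by induction, $G-v_n$ realises every set of size $\equiv_2 |E|-d=|E(G-v_n)|$. Now take $T$ with $|T|\equiv_2|E|$.
\begin{itemize}
\item If $v_n\in T$, realise $T\setminus\{v_n\}$ in $G-v_n$; its size is $\equiv_2 |E|-1\equiv_2|E|-d$. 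Append $v_n$ at the end, where its in-degree is $d$, which is odd.
\item If $v_n\notin T$, realise $T\triangle N(v_n)$ in $G-v_n$; its size is $\equiv_2 |E|+d\equiv_2 |E|-d$. Put $v_n$ in front, where its in-degree is $0$ and it flips exactly its neighbours.
\end{itemize}
Both extensions are acyclic because $v_n$ is a sink, respectively a source.

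In the paper's language, an odd-degree vertex always lies in $Sink(T)$ (if it is in $T$) or in $Source(T)$ (if not). So once the deleted vertex has odd degree, no compatibility with $T$ is needed. This settles the backward direction in a few lines. The edge toggles, moving the source, and the ear-decomposition fallback are all unnecessary; realisability of every $V\setminus\{u\}$ is then just a special case.
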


\noindent Note the distinction between Theorem \ref{thm:cp_carac} and the question in Problem \ref{prob:AOP_allT}: Theorem \ref{thm:cp_carac} gives a characterization of graphs in $\cC_{\calP}$ while Problem \ref{prob:AOP_allT} asks for an analogous characterization restricted to sets $T \subsetneq V$. This difference is of importance, indeed, as we shall see later, there exist graphs that are not in $\cC_{\calP}$ but admit acyclic $T$-odd orientations for every $T \subsetneq V$ satisfying $\calP$. 

\noindent Moreover, the constraint $T \subsetneq V$ guarantees the existence of a source but breaks the symmetry with the existence of sinks. We restore this symmetry with the study of the classes $\cC_{\calP \calS}, \cC_{\calP \overline\calS}$ and $\cC_{\calP \calS \overline\calS}$.\newline


\noindent In a second section, we study classes of instances defined from the Cartesian product of cycles and/or paths.
We define the \textbf{grid}, \textbf{torus}, and \textbf{cylinder} as the Cartesian product of two paths, two cycles, and a cycle and a path, respectively (see Figures \ref{fig:cylinder} and \ref{fig:tore}).
We then give a complete characterization of sets $T$ satisfying all three necessary conditions $\calP \calS \overline\calS$ such that $G$ admits an acyclic $T$-odd orientation, where $G$ is either a grid, a torus, or a cylinder.
In particular, as a second main result, we prove the following two theorems:

\begin{theorem}\label{thm:tore_carac}
Let $G = C_p \square H_q$ be a graph where $C_p$ is a cycle on $p$ vertices and $H_q$ is a path or a cycle on $q$ vertices.
Let $T \subset V(G)$ be a subset that satisfies $\calP \calS \overline\calS$.
\begin{center}
	If $H_q$ is a path or if $p,q \geq 4$, then $G$ has an acyclic $T$-odd orientation.\newline
\end{center}
\end{theorem}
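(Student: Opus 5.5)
We will prove the statement constructively, by induction on the number of "rows" of $G$. Write the vertices of $C_p \square H_q$ as $(i,j)$ with $i \in \mathbb{Z}_p$ and $j \in \{1,\dots,q\}$. Row $R_j$ is the copy of $C_p$ at level $j$. We build the orientation row by row: all edges between $R_j$ and $R_{j+1}$ are oriented from $R_j$ to $R_{j+1}$, except for a small, controlled number of exceptional edges.

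With this convention, acyclicity reduces to two checks:
\begin{itemize}
\item each row $R_j$ must be acyclically oriented, so it must not be a directed cycle;
\item exceptional vertical arcs must not create a directed cycle through two consecutive rows.
\end{itemize}
The parity constraint at $(i,j)$ then fixes the in-degree parity needed from the two horizontal edges at $(i,j)$. That required parity depends on $T$ and on how many vertical edges enter $(i,j)$ from below.

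A cycle $C_p$ with a prescribed set $T_j$ of odd in-degree vertices has a $T_j$-odd orientation iff $|T_j| \equiv_2 p$. There are exactly two such orientations, complementary to each other. At most one of them is a directed cycle, and only when $T_j = R_j$ (every vertex then has in-degree $1$). So the row constraint is a parity condition plus the exclusion of one degenerate case.

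\textbf{Path case, $H_q = P_q$.} Process the rows from $j=1$ to $q-1$. At row $j$, the vertical edges to row $j+1$ are the free variables. Orienting $(i,j)\to(i,j+1)$ or reversing it toggles the parity demand on both endpoints. We use this freedom to fix the parity of $|T_j|$ relative to $p$ in the horizontal subproblem, by reversing a single vertical edge. We also use it to avoid the "all of $R_j$ has in-degree $1$" situation, by reversing a vertical edge at a suitable position.

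A single reversed edge $(i,j+1)\to(i,j)$ creates a cycle only if row $j+1$ has a directed path back to $(i,j+1)$ that reaches row $j$ through another upward edge. Since all other vertical edges go up, any such cycle would have to return downward through some other reversed edge. So keeping at most one reversed edge per pair of consecutive rows makes cycles impossible.

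The last row $R_q$ has no freedom left; its parity is forced by $\calP$. If the forced set is degenerate ($T_q = R_q$ with both row orientations cyclic), we backtrack one level and move the reversed edge between rows $q-1$ and $q$. The conditions $\calS$ and $\overline\calS$ are needed exactly here: the bottom row must contain a source and the top row a sink. When $Source(T)$ or $Sink(T)$ is concentrated in a single vertex, we choose the direction of the sweep, or re-index the rows, so that this vertex lies in $R_1$ or $R_q$, respectively.

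\textbf{Torus case, $p,q\ge 4$.} The vertical direction is itself a cycle, so we cannot orient all edges "upward". We cut the torus between $R_q$ and $R_1$, treat it as a cylinder $C_p \square P_q$, and orient the wrap-around edges $R_q \to R_1$ downward, that is, from $R_q$ into $R_1$. In the ordering by rows, these edges then go "backwards". To keep acyclicity, we make them consistent with a global linear order: we choose the row orientations so that the vertices of $R_q$ incident to wrap edges entering $R_1$ come after their targets in that order.

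The simpler route is to place the unique potential source or sink, when it exists, at a suitable position, and then use the symmetry of the torus to choose a cut avoiding trouble. The assumption $p,q\ge4$ is what leaves room for a local gadget: a $2\times 2$ or $3\times 3$ window whose orientation can be switched to fix the global parity without creating cycles. Small tori fail, which is consistent with the excluded cases.

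\textbf{Main obstacle.} I expect the main obstacle to be this final repair step, where $\calP$, $\calS$ and $\overline\calS$ must be shown to suffice. It requires an exhaustive but finite case analysis of the last row (or the wrap-around rows), according to whether $T_q = R_q$, $T_q=\emptyset$, and where the unique potential source or sink sits. For each case I would first try a local switch of the reversed vertical edge inside a $4$-vertex window.
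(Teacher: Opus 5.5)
Your outline is not yet a proof. Its acyclicity mechanism is wrong, and the part you flag as the main obstacle is the actual content of the theorem.

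Start with acyclicity. You claim that at most one reversed vertical edge per pair of consecutive rows rules out directed cycles. This is false, because a cycle through the reversed arc $(i,j+1)\to(i,j)$ can return through an \emph{ordinary} upward arc. If the orientation of $R_j$ contains $(i,j)\to(i+1,j)$ and that of $R_{j+1}$ contains $(i+1,j+1)\to(i,j+1)$, then $(i,j+1)\to(i,j)\to(i+1,j)\to(i+1,j+1)\to(i,j+1)$ is a directed $4$-cycle. More generally, any row path leaving $(i,j)$, followed by an upward arc and a row path entering $(i,j+1)$, closes a cycle. So each reversal must be tied to the row orientations, e.g.\ its head a sink of $R_j$ or its tail a source of $R_{j+1}$. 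But each row admits only its two mutually reverse $T_j$-odd orientations, and those are themselves determined by where the reversals sit. You never show these requirements are compatible. (Also, when $T_j=R_j$, \emph{both} row orientations are directed cycles, not at most one.)

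The torus step fails outright. Suppose all edges between $R_j$ and $R_{j+1}$ go upward except a few, and all wrap-around edges go $R_q\to R_1$. Then every column containing no exceptional edge is the directed cycle $(i,1)\to(i,2)\to\cdots\to(i,q)\to(i,1)$, whatever the row orientations are. You would need an exceptional edge in each of the $p$ columns. Alternatively you could orient the wrap edges $R_1\to R_q$, but that changes the parity demands on both $R_1$ and $R_q$.

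The sufficiency of $\calP\calS\overline\calS$ is deferred to an unperformed ``finite'' case analysis and an unspecified gadget. The one concrete mechanism you give for $\calS$ and $\overline\calS$ (re-indexing so the lone potential source lies in $R_1$) fails for cylinders. There, only the two boundary rows can play the roles of $R_1$ and $R_q$. Take $p$ odd, $q\geq 4$ even, and $T=V\setminus\{v\}$ with $v$ in an interior row. This $T$ satisfies $\calP\calS\overline\calS$, yet every acyclic $T$-odd orientation has $v$ as its unique source. So $R_1$ contains no source at all, and your sweep would have to create an interior source via reversed arcs, which is exactly where the acyclicity problem above bites.

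The paper avoids global acyclicity bookkeeping altogether. By Theorem~\ref{thm:T-odd decomposition} and Lemma~\ref{lem:decomposition_P}, a $T$-decomposition whose blocks are each solvable automatically yields an acyclic $T$-odd orientation. It therefore suffices to cut the cylinder or torus into paths, cycles, grids (using the full bad-grid characterization), smaller cylinders (by induction) and quasi $2$-cylinders (for the torus). The long case analysis there exists precisely to make each block inherit $\calS\overline\calS$ and avoid bad grid instances. Your plan is missing a tool of this kind, one that controls parity and acyclicity at the same time.
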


\begin{theorem}\label{thm:grid_carac}
Let $G = P_p \square P_q$ be a graph where $P_p$ and $P_q$ are paths with vertex sets $\{v_1, \dots, v_p\}$ and $\{w_1, \dots, w_q\}$, respectively.
Let $T \subset V(G)$ be a subset that satisfies $\calP$.
\begin{center}
	$P_p \square P_q$ has no acyclic $T$-odd orientation if and only if $p,q \equiv_2 0$ and $V \setminus T$ satisfies that:
\end{center}
\begin{itemize}
	\item $(v_{2k}, w_j) \in V \setminus T$ if and only if $(v_{2k+1}, w_j) \in V \setminus T$ for all $k \in \{1, \dots, \frac{p}{2}-1\}$ and $j \in \{1, q\}$.
	\item $(v_i, w_{2k}) \in V \setminus T$ if and only if $(v_i, w_{2k+1}) \in V \setminus T$ for all $i \in \{1, p\}$ and $k \in \{1, \dots, \frac{q}{2}-1\}$ symmetrically.
\end{itemize}
\end{theorem}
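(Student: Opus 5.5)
We prove the two directions separately: a parity obstruction for the ``if'' part, and an explicit inductive construction for the ``only if'' part.

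\emph{Obstruction.} Suppose $p,q$ are even and $V\setminus T$ satisfies the two pairing conditions, and assume an acyclic $T$-odd orientation exists. All interior vertices have degree $4$, so they are potential sinks exactly when they are not in $T$; the same holds for corner vertices, which have degree $2$. Boundary non-corner vertices have degree $3$, so they are potential sources exactly when they are not in $T$. Such a vertex is then a source of in-degree $0$, or has in-degree $2$.

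The idea is to consider the boundary cycle of length $2(p+q)-4$. On each side, the non-corner boundary vertices split into consecutive pairs $\{(v_{2k},w_j),(v_{2k+1},w_j)\}$. Here $p$ even makes the side from $v_2$ to $v_{p-1}$ have an even number $p-2$ of vertices, so this pairing is perfect. Each pair is either fully in $T$ or fully outside $T$.

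I would then compute the in-degree parity summed over suitable vertex sets. Taking $X$ to be the set of vertices ``below'' a cut, we have $\sum_{x\in X} d^-(x) = |E(X)| + |\{\text{cut arcs entering } X\}|$, which fixes the parity of the number of cut edges oriented inward. The aim is to derive that some set of ``first'' vertices of a topological order must fail. Concretely, the minimal vertex in a topological order is a source, hence lies outside $T$. I would look at the first vertex of the topological order in each row and column, and show via the pairing that its partner forces a cycle.

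I expect this obstruction argument to be the main obstacle. Honestly, the cleanest route is probably a counting argument over the grid: orient every edge according to a linear order, and show that the number of vertices of $T$ on the boundary then has a parity incompatible with the pairing condition. A plausible invariant is that in any acyclic orientation of the even-by-even grid, some boundary pair $(v_{2k},w_j),(v_{2k+1},w_j)$ receives in-degrees of different parity. I would try to prove this by induction, peeling off two rows (or two columns) at a time: show that the restriction to a $2\times q$ strip propagates the condition.

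\emph{Construction.} For the converse, induct on $p+q$ and remove a boundary row. Given $T$ satisfying $\calP$ but not the obstruction, choose a row $P_p\times\{w_q\}$ and orient its vertical edges toward or away from it so that the remaining $T'$ on the $(q-1)$-row grid satisfies $\calP$ and avoids the obstruction. This choice is possible since we can flip the direction at one vertex to adjust parity. Then orient the removed row as a path with prescribed in-degree parities; a path is easily oriented acyclically for any parity pattern satisfying its local constraints. The union stays acyclic, because the removed row is made entirely a sink-layer or source-layer relative to the rest, except at one controlled vertex.

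Base cases are $p$ or $q\le 2$, which are handled directly, including the $2\times 2$ cycle $C_4$. When $p$ or $q$ is odd, the obstruction is vacuous and the peeling can always be arranged, using Theorem~\ref{thm:tore_carac}-style arguments for the odd direction.
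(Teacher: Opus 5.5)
\textbf{The obstruction direction aims at a false claim.} Your argument uses only the boundary pairing conditions, and under those hypotheses alone the ``if'' direction fails.

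Take $p,q$ even and orient every edge of $P_p\square P_q$ towards the larger index in each coordinate. This orientation is acyclic. The non-corner boundary vertices get in-degree $1$ on the two sides through $(v_1,w_1)$, and in-degree $2$ on the two sides through $(v_p,w_q)$. So the set $T$ it realizes satisfies $\calP$ (automatically, since $\sum_v d^-(v)=|E|$) and both pairing conditions, yet an acyclic $T$-odd orientation exists. Already $C_4=P_2\square P_2$ with $T=\emptyset$ is a counterexample, since the pairing conditions are vacuous there. The same orientation also refutes your ``plausible invariant'' that some boundary pair always receives in-degrees of different parity.

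What the paper's proof actually establishes (Theorem~\ref{thm:grid:carac}) is the characterization by \emph{bad grid instances}. These additionally require the four corners and every interior vertex to lie in $T$; these conditions are missing from the statement as printed, but the proof depends on them. Its obstruction is a global argument, not a parity count:
\begin{itemize}
\item All vertices outside $T$ are then non-corner boundary vertices grouped in adjacent pairs. One adds an apex $s_{root}$ joined to $V\setminus T$ and drawn in the outer face.
\item An acyclic $T$-odd orientation of $G$ would extend to an acyclic $(V(\cG)\setminus\{s_{root}\})$-odd orientation of $\cG$. By Kir\'aly and Kisfaludi-Bak (Theorem~1.40), the planar dual of $\cG$ would then be factor-critical.
\item But the $f$ faces contain $\lceil f/2\rceil$ pairwise edge-disjoint ones: a checkerboard family of grid squares, plus one triangle $s_{root}ww'$ per pair of vertices outside $T$. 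This is impossible in a factor-critical graph.
\end{itemize}
Nothing in your sketch plays this role, and the interior and corner colours never enter it.

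\textbf{The construction breaks exactly where the difficulty lies.} Take $P_4\square P_4$ with the whole boundary in $T$ and exactly two adjacent interior vertices outside $T$. It satisfies $\calP$, is not bad, and is solvable, e.g.\ via the paper's decomposition $\langle V_{in},V_{out}\rangle$. Yet no boundary row or column $R$ can be peeled as you propose.
\begin{itemize}
\item A pure source layer needs $|T(R)|\equiv_2 p-1$, and a pure sink layer needs $|R\setminus T|\equiv_2 p-1$. Both fail, since $R\subseteq T$ and $p$ is even.
\item With one exceptional vertical edge at $x$, you must exclude cycles without controlling the recursively built orientation of the rest; your induction gives no such control. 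In the sink-layer case this forces $x$ to have in-degree $0$ in $R$, so $x$ is a global source. That is impossible since $x\in T$.
\item In the source-layer case, $x$ must be a global sink. Then the components of $R\setminus\{x\}$ receive no outside arcs, so each is an all-black path needing all in-degrees odd. This already violates parity.
\end{itemize}
So ``flip the direction at one vertex to adjust parity'' is precisely the step that fails, and it fails exactly when all four sides are bad paths.

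The paper handles that situation by peeling the whole boundary cycle (the decompositions $\langle V_{in},V_{out}\rangle$ and $\langle V_{out},V_{in}\rangle$). It then inducts on the inner grid, with a separate case analysis when the inner grid is itself bad. When some side is not a bad path, it uses corner decompositions that cut that side at an odd-order prefix containing an even number of vertices of $T$ (Claim~\ref{cla:grid:badPath}). For an odd dimension, it splits off the block $Y_{\le k}$, where $k$ is the first column with an odd number of vertices outside $T$. Your odd case is only asserted.
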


\begin{figure}[ht]
	\begin{minipage}{6cm}
		\centering
	    \includegraphics[width=\textwidth]{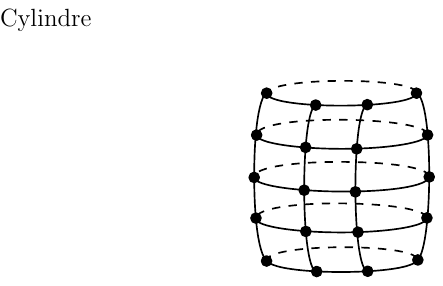}
		\caption{Cylinder graph $C_4 \times P_5$}
		\label{fig:cylinder}
	\end{minipage}
	\hfill
	\begin{minipage}{6cm}
		\centering
		\includegraphics[width=\textwidth]{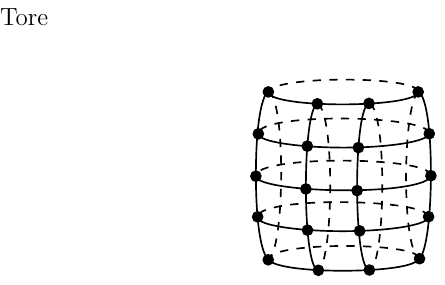}
		\caption{Torus graph $C_4 \times C_5$}
		\label{fig:tore}
	\end{minipage}
\end{figure}

\noindent As a corollary of these characterizations, we show that the inclusions between the classes $\cC_{\calP}$, $\cC_{\calP \calS}$, and $\cC_{\calP \calS \overline\calS}$ are actually strict (see Figure \ref{fig:treilli}):

\begin{corollary}\label{thm:strict_inclusion} Let $p,q,k \geq 1$ be three integers.
    \begin{itemize}
        \item [(c)'] The Cylinders $C_{2p+1}\square P_{2q}$, the Grids $P_{2p+1}\square P_q$, and Trees are in $\cC_{\calP}$.
        \item [(d)'] The Tori $C_{p}\square C_q$ with $p,q \geq 4$ and Cycles are in $\cC_{\calP\calS}\setminus\cC_{\calP}$.
        \item [(e)'] The Cylinders $C_{2p}\square P_q, C_{2p+1}\square P_{2q+1}$ are in $\cC_{\calP \calS \overline\calS}\setminus\cC_{\calP \calS}$.
	\end{itemize}
	Moreover,
	\begin{itemize}
        \item [(c)''] The Cliques $K_{4k+2} \in \{G \text{ connected and non-Eulerian} \mid |V(G)| + |E(G)| \equiv_2 1\} \setminus \cC_{\calP}$.
		\item [(d)''] The Cliques $K_{2k+1} \in \{ \circ \ \circ\} \cup \{G \mid G \text{ is Eulerian}\} \setminus \cC_{\calP \calS}$.
		\item [(e)''] The Cliques $K_{4k} \in \{G \text{ connected and non-Eulerian}  \mid |V(G)| + |E(G)| \equiv_2 0\} \setminus \cC_{\calP \calS \overline\calS}$.
    \end{itemize}
\end{corollary}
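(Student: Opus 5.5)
The plan is to derive each item from Theorem \ref{thm:treilli} together with Theorems \ref{thm:tore_carac}, \ref{thm:grid_carac} and \ref{thm:cp_carac}. For the graph families, this reduces each claim to three checks: membership in $\cC_{\calP\calS\overline\calS}$, the Eulerian property, and the parity of $|V|+|E|$. For the cliques, the basic fact used is that an acyclic orientation of $K_n$ is a transitive tournament. Its in-degrees are therefore exactly $0,1,\dots,n-1$, so in any acyclic orientation exactly $\lfloor n/2\rfloor$ vertices have odd in-degree. Consequently $K_n$ has an acyclic $T$-odd orientation if and only if $|T|=\lfloor n/2\rfloor$.

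\textbf{(c)'.} For grids $P_{2p+1}\square P_q$, Theorem \ref{thm:grid_carac} shows that no $T$ satisfying $\calP$ is unsolvable, because $2p+1$ is odd; hence the grid is in $\cC_\calP$ directly. For cylinders $C_{2p+1}\square P_{2q}$, Theorem \ref{thm:tore_carac} (path case) gives membership in $\cC_{\calP\calS\overline\calS}$. The graph is connected, and it is non-Eulerian because its boundary vertices have degree $3$. Its parity is odd, since $|V|+|E| = 2|V| + (2p+1)(2q-1)$. Theorem \ref{thm:treilli}(c) then applies. For trees I argue directly by induction, peeling a leaf $\ell$ with neighbour $u$:
\begin{itemize}
\item orient $\ell u$ so that $\ell$ receives the required parity;
\item toggle the membership of $u$ in $T$ if the edge enters $u$;
\item recurse on $G-\ell$, where $\calP$ is preserved.
\end{itemize}
The base case is a single vertex, for which $\calP$ forces $T=\emptyset$. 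Acyclicity is automatic in a tree.

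\textbf{(d)'.} Tori with $p,q\ge 4$ are $4$-regular, hence Eulerian, and lie in $\cC_{\calP\calS\overline\calS}$ by Theorem \ref{thm:tore_carac}. Theorem \ref{thm:treilli}(d) then places them in $\cC_{\calP\calS}\setminus\cC_\calP$. Cycles $C_n$ are Eulerian, so it suffices to show $C_n\in\cC_{\calP\calS\overline\calS}$. Since all degrees are even, $Source(T)=Sink(T)=V\setminus T$. Condition $\calP$ gives $|V\setminus T|\equiv_2 0$, and $\calS$ then forces $|V\setminus T|\ge 2$. Make the vertices of $V\setminus T$ alternately sources and sinks around the cycle, and direct each arc of $T$-vertices between consecutive ones as a directed path. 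Each $T$-vertex then has in-degree $1$ and each non-$T$ vertex has in-degree $0$ or $2$. The orientation is acyclic because it contains both a source and a sink.

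\textbf{(e)'.} Both cylinder families lie in $\cC_{\calP\calS\overline\calS}$ by Theorem \ref{thm:tore_carac}. They are connected and non-Eulerian, which needs $q\ge 2$; for $q=1$ the graph is a cycle, so this degenerate case must be excluded or treated separately. Both have $|V|+|E|$ even: for $C_{2p}\square P_q$ both terms are even, and for $C_{2p+1}\square P_{2q+1}$ we have $|E|=|V|+(2p+1)2q$. Hence they are not in $\cC_\calP$ by Theorem \ref{thm:treilli}(c), and not in $\cC_{\calP\calS}\setminus\cC_\calP$ by Theorem \ref{thm:treilli}(d).

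\textbf{Cliques.} By the transitive-tournament fact, it suffices in each case to exhibit a $T$ satisfying the relevant conditions with $|T|\neq\lfloor n/2\rfloor$.
\begin{itemize}
\item \textbf{(c)''.} $K_{4k+2}$ has odd degrees and $|E|=(2k+1)(4k+1)$ odd, so $|V|+|E|$ is odd. Take $T'=V\setminus\{v\}$: it has $4k+1>2k+1$ vertices, so it is unsolvable, and Theorem \ref{thm:cp_carac} shows $K_{4k+2}\notin\cC_\calP$.
\item \textbf{(d)''.} $K_{2k+1}$ is Eulerian. Take $|T|=k-2$: then $\calP$ holds and $|Source(T)|\ge 2$, but $|T|\neq k$, so $K_{2k+1}\notin\cC_{\calP\calS}$.
\item \textbf{(e)''.} $K_{4k}$ has odd degrees, so it is non-Eulerian, and $|E|=2k(4k-1)$ is even, so $|V|+|E|$ is even. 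Take $|T|=2k-2$: then $\calP$ holds, $Source(T)$ has $2k+2$ vertices, and $Sink(T)=T$ has at least $2$ vertices, but $|T|\neq 2k$.
\end{itemize}

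The main obstacle is the small cases, which I expect to fail. $K_3$ is a cycle and hence lies in $\cC_{\calP\calS}$, so (d)'' needs $k\ge 2$. $K_4$ seems to be solvable for every $T$ satisfying $\calP\calS\overline\calS$, so (e)'' needs $k\ge 2$. Likewise $P_1$-cylinders are cycles, so (e)' needs $q\ge 2$. I would state these restrictions explicitly in the proof.
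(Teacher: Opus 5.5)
Your proposal is correct and follows essentially the paper's route. You obtain membership in $\cC_{\calP\calS\overline\calS}$ from the grid, cylinder and torus theorems, then apply items (c) and (d) of Theorem~\ref{thm:treilli} via connectivity, the Eulerian property and the parity of $|V|+|E|$; for cliques you use the fact that $K_n$ is solvable exactly when $|T|=\lfloor n/2\rfloor$. Your transitive-tournament count is a cleaner proof of that fact, which the paper only asserts. The remaining differences are local: direct tree and cycle constructions instead of Lemmas~\ref{lem:tree} and~\ref{lem:cycle}, and $T=V\setminus\{v\}$ in (c$''$) where the paper uses $|T|=p-2$.

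The small-case exclusions you flag are genuine defects of the statement, and the paper's proof overlooks them:
\begin{itemize}
\item $K_3\in\cC_{\calP\calS}$ by Lemma~\ref{lem:smallcases}, yet the paper applies Lemma~\ref{lem:K2p+1} for all $p\geq 1$, although its unsolvability assertion requires $n\geq 5$.
\item $K_4\in\cC_{\calP\calS\overline\calS}$: only $|T|=2$ survives $\calP\calS\overline\calS$, and it is solvable.
\item $C_{2p}\square P_1$ is a cycle, yet the paper claims $C_{2p}\square P_q$ is non-Eulerian for every $q$.
\end{itemize}
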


\begin{figure}[ht]
	\centering
	\begin{tikzpicture}

\coordinate (S) at (-2,5);
\coordinate (SB) at (0,5);
\coordinate (P) at (2,5);

\coordinate (SSB) at (-1,3.5);

\coordinate (PS) at (-1,1.5);
\coordinate (PSB) at (1,1.5);

\coordinate (PSSB) at (0,0);

\draw (S) -- (SSB) node[midway,below,sloped,fill=white,inner sep=1pt] {$\subsetneq$};
\draw (SB) -- (SSB);

\draw (SSB) -- (PS) node[midway,above,sloped] {$\subsetneq$};
\draw (PS) -- (P) node[midway,above,sloped,fill=white,inner sep=1pt] {$\supsetneq$};

\draw (PS) -- (PSSB) node[midway,below,sloped,fill=white,inner sep=1pt] {$\subsetneq$};
\draw (PSB) -- (PSSB);

\node[fill=white,inner sep=2pt] at (S) {$C_{\calS}$};
\node[fill=white,inner sep=2pt] at (SB) {$C_{\bar{\calS}}$};
\node[fill=white,inner sep=2pt] at (P) {$C_{\calP}$};

\node[fill=white,inner sep=2pt] at (SSB) {$C_{\calS \bar{\calS}}$};

\node[fill=white,inner sep=2pt] at (PS) {$C_{\calP \calS}$};
\node[fill=white,inner sep=2pt] at (PSB) {$C_{\calP \bar{\calS}}$};
\node[fill=white,inner sep=2pt] at (PSSB) {$C_{\calP \calS \bar{\calS}}$};

\node at (-1,5) {$=$};
\node at (0,1.5) {$=$};

\end{tikzpicture}
	\caption{Theorem \ref{thm:treilli}. }\label{fig:treilli}
\end{figure}

\noindent The paper is structured in three major sections. Section \ref{sec:classes} is dedicated to the proof of Theorem \ref{thm:treilli} on graph classes. In Section \ref{sec:cartesian_product} we prove Theorem \ref{thm:tore_carac} and \ref{thm:grid_carac} on graphs obtained from the Cartesian product of paths and cycles. Finally, Section \ref{sec:class_inclusion} is for the proof of this last corollary.

\section{Definitions and preliminaries}\label{sec:preliminary}

A \textbf{graph} $G$ is a pair $(V(G),E(G))$ where $V(G)$ is a non-empty set of vertices and $E(G)$ is a set of edges. We only consider simple graphs i.e., loopless and without multi-edges.
For a subset $X \subseteq V(G)$, the \textbf{subgraph} $G[X]$ is the graph whose vertex set is $X$ and whose edge set is $E(G[X]) = \{ (u, v) \mid u \in X, v \in X, (u, v) \in E(G) \}$.
For subsets $X, Y\subseteq V(G)$, the \textbf{cut set} $\delta(X, Y)$ is the set of edges with one endpoint in $X$ and the other in $Y$.
When $Y = V(G) \setminus X$, we simply write $\delta(X)$.
The \textbf{degree} of a vertex $v$, denoted $d(v)$, is the number of edges incident to $v$, i.e. $|\delta(v)|$.\newline

\noindent Let $G$ and $H$ be two graphs.
The \textbf{Cartesian product} of $G$ and $H$, denoted by $G \square H$, is the graph whose vertex set is $V(G \square H) = V(G) \times V(H) = \{ (u, v) \mid u \in V(G), v \in V(H) \}$.
Two vertices $(u, v)$ and $(u', v')$ in $V(G \square H)$ are adjacent if and only if [$u = u'$ and $(v, v') \in E(H)$] or [$v = v'$ and $(u, u') \in E(G)$].\newline

\noindent A \textbf{directed graph} $D$ consists of a non-empty set of vertices $V(D)$ and a set of arcs $A(D)$, where an arc is an ordered pair of vertices.
An arc is denoted $\dir{uv}$ or $\dir{vu}$ according to the ordering.
For subsets $X,Y\subseteq V(D)$, the \textbf{cut set} $\delta_D(X,Y)$ is the set of arcs with one endpoint in $X$ and the other in $Y$.
For a vertex $v \in V(D)$, we define the \textbf{out-degree} $d_D^+(v)$ and the \textbf{in-degree} $d_D^-(v)$ to be the number of arcs directed outward and inward from $v$, respectively.
The degree $d_D(v)$ is $d_D^+(v) + d_D^-(v)$.
The subscript can be omitted when there is no ambiguity on the concerned graph.\newline

\noindent Given a graph $G$ and a set $S\subset E(G)$, an \textbf{orientation} $O$ of $S$ is a set of arcs such that for every edge $uv \in S$, either $\dir{uv} \in O$ or $\dir{vu} \in O$, but not both.
An {\bfseries orientation $O$ of a graph $G$} is an orientation of $E(G)$, and we denote by $G_O = (V(G), O)$ the associated directed graph.
A set of arcs $A$ is said to be \textbf{acyclic} if the subgraph induced by $A$ does not contain any directed cycle.
By extension, a directed graph $D$ is acyclic if its arc set $A(D)$ is acyclic.\newline

\noindent For the figures, given $(G,T)$ an instance of Problem \ref{prob:AOP}, we refer to vertices in $T$ as \textbf{black} vertices and vertices in $V(G) \setminus T$ as \textbf{white} vertices.
The \textbf{gray} vertices symbolize vertices that can be either black or white.
The \textbf{blue sets} are sets of vertices of the same color; hence, a blue set of gray vertices is a set whose vertices are either all white or all black.\newline

\section{Classes of necessary conditions: proof of Theorem \ref{thm:treilli}}\label{sec:classes}

\noindent First, we prove the following useful lemma.	Let $\cG_{\leq 3}$ be the set of graphs with at most $3$ vertices.

\begin{lemma}\label{lem:smallcases}

	The following holds:
	\begin{align*}
		\cC_{\calS} \cap \cG_{\leq 3} = \cC_{\overline\calS} \cap \cG_{\leq 3} &= \{\OneCirc, \TwoCirc\}\\
        \cC_{\calS\overline\calS} \cap \cG_{\leq 3} &= \{\OneCirc, \TwoCirc, \TwoCircLinked\}\\
        \cC_{\calP} \cap \cG_{\leq 3} &= \{\OneCirc, \TwoCircLinked, \ThreeCircPath\}\\
        \cC_{\calP\calS} \cap \cG_{\leq 3} = \cC_{\calP \overline\calS} \cap \cG_{\leq 3} &= \{\OneCirc, \TwoCirc, \TwoCircLinked, \ThreeCircPath, \ThreeCircTriangle, \ThreeCircIndependant\}\\
		\cC_{\calP \calS \overline\calS} \cap \cG_{\leq 3} &= \{\OneCirc, \TwoCirc, \TwoCircLinked, \ThreeCircPath, \ThreeCircTriangle, \ThreeCircIndependant, \ThreeCirc\}
	\end{align*}
\end{lemma}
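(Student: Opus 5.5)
This is a finite verification, so the plan is to enumerate. There are exactly eight simple graphs with at most 3 vertices up to isomorphism: $\OneCirc$, $\TwoCirc$, $\TwoCircLinked$, $\ThreeCircIndependant$, $\ThreeCirc$ (one edge), $\ThreeCircPath$, $\ThreeCircTriangle$. That is seven; the eighth pair of the same size is already covered, so this is the complete list. For each graph $G$ and each of the at most $2^3=8$ subsets $T$, I compute three things:
\begin{enumerate}
\item whether $T$ satisfies $\calP$, $\calS$, $\overline\calS$, using $Source(T)=V\setminus T$ and $Sink(T)$ as defined in the introduction;
\item whether an acyclic $T$-odd orientation exists;
\item for each condition set $\cN$, whether every $T$ satisfying $\cN$ is solvable. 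This is the definition of $G\in\cC_\cN$.
\end{enumerate}
To decide solvability for step 2, I use a structural observation. On at most $3$ vertices, a directed cycle can only be the triangle, so every orientation is acyclic except the two cyclic orientations of $\ThreeCircTriangle$. For forests, then, solvability coincides with the Chevalier et al.\ criterion $\calP$, which holds for each connected component. For the triangle, the two cyclic orientations give $T=V$, and the six transitive orientations realise the in-degree sequences $(0,1,2)$. Hence the solvable sets are exactly the $T$ with $|T|=1$, and $T=V$ is not solvable.

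\textbf{Positive memberships.} For these I show that every $T$ satisfying the relevant conditions is solvable.
\begin{itemize}
\item \emph{Edgeless graphs.} Here $T$-odd forces $T=\emptyset$. For $\TwoCirc$, each of $\calS$ and $\overline\calS$ excludes every $T\neq\emptyset$. For $\ThreeCircIndependant$, the set $T=\{v\}$ satisfies $\calS$, since $Source(T)$ has two elements; it also satisfies $\overline\calS$, since the only sink is $V\setminus T$. Only $\calP$ excludes it. This explains why $\ThreeCircIndependant\notin\cC_\calS$ but $\ThreeCircIndependant\in\cC_{\calP\calS}$.
\item \emph{$\TwoCircLinked$.} The parity condition $\calP$ gives $|T|=1$, which is solvable. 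Under $\calS\overline\calS$ alone, $T=\emptyset$ has $Source=V$ and $Sink=\emptyset$, so it fails $\overline\calS$. The set $T=V$ fails $\calS$. Hence $\TwoCircLinked\in\cC_{\calS\overline\calS}$.
\item \emph{The path and the triangle.} For $\ThreeCircPath$ and $\ThreeCircTriangle$ under $\calP\calS$, the triangle requires $|T|$ odd with $T\neq V$. That leaves $|T|=1$, which is fine.
\item \emph{$\ThreeCirc$ (one edge).} Under $\calP$, $|T|$ is odd. The only problematic set is $T$ consisting of the isolated vertex $w$, or $T=V$. With the isolated vertex in $T$, every case is unsolvable. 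Under $\calP\calS\overline\calS$, I must check that $\calS$ and $\overline\calS$ exclude these: $T=V$ has empty $Source$, and $T=\{w\}$ must be ruled out by the condition involving $|Sink|=1$ or the counting of $Source(T)$.
\end{itemize}

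\textbf{Negative memberships.} For these I exhibit a witness $T$ satisfying the conditions but unsolvable. Examples:
\begin{itemize}
\item $T=\{v\}$ in $\TwoCirc$ under $\calP$ (with $|E|=0$, $|T|$ odd, so in fact $\calP$ fails). I will use a nonempty even $T$ instead, giving $\TwoCirc\notin\cC_\calP$.
\item $T=V$ in the triangle shows $\ThreeCircTriangle\notin\cC_\calP$.
\item An isolated vertex placed in $T$ breaks every graph with an isolated vertex, with respect to $\calS$.
\end{itemize}

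\textbf{Main obstacle.} I expect the main obstacle to be getting the $\ThreeCirc$ row exactly right: whether it lies in $\cC_{\calP\calS\overline\calS}$ but not in $\cC_{\calP\calS}$ hinges on the precise sink computation for the isolated vertex, which has degree $0$. So I would tabulate that row most carefully, checking explicitly that the unique bad set, the isolated vertex alone in $T$ together with a suitable edge configuration, violates $\overline\calS$ but not $\calS$.
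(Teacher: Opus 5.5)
Your plan matches the paper's proof: list the seven graphs on at most three vertices (the list is complete; there are seven, not eight), determine the solvable sets, then prove memberships and give witnesses. Your solvability facts are also correct: a forest is solvable iff each component satisfies $\calP$, and the triangle iff $|T|=1$. The gap is that you do not actually carry out the verification where you yourself locate the difficulty, and the mechanism you suggest there is wrong.

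In $\ThreeCirc$, write $ab$ for the edge and $w$ for the isolated vertex. The unsolvable sets satisfying $\calP$ are $\{w\}$ and $V$, so there is no ``unique bad set''. For $T=\{w\}$:
\begin{itemize}
\item $Source(T)=\{a,b\}$, so $\calS$ holds;
\item $Sink(T)=\emptyset$, because $w\in T$ has even degree $0$ while $a,b\notin T$ have odd degree $1$.
\end{itemize}
It is this emptiness that makes $\overline\calS$ fail, not the clause on $|Sink(T)|=1$ and not a count of $Source(T)$. For $T=V$ one has $Source(T)=\emptyset$ but $Sink(T)=\{a,b\}$, so $T$ satisfies $\calP$ and $\overline\calS$ and fails only $\calS$. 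These two computations give at once:
\begin{itemize}
\item $\ThreeCirc\in\cC_{\calP\calS\overline\calS}$;
\item $\ThreeCirc\notin\cC_{\calP\calS}$, with witness $\{w\}$;
\item $\ThreeCirc\notin\cC_{\calP\overline\calS}$, with witness $V$.
\end{itemize}
The last non-membership appears nowhere in your proposal, so as written the equality $\cC_{\calP\calS}\cap\cG_{\leq 3}=\cC_{\calP\overline\calS}\cap\cG_{\leq 3}$ is not established.

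Most of the other non-memberships also lack witnesses. Your one general witness is false: putting an isolated vertex into $T$ does not give a set satisfying $\calS$ in $\OneCirc$ (empty source set) or in $\TwoCirc$ ($Source(T)=Sink(T)$ is a singleton while $|V|=2$). That is exactly why these graphs lie in $\cC_\calS$. Since $\cN'\subseteq\cN$ implies $\cC_{\cN'}\subseteq\cC_\cN$, it suffices, for each graph, to refute the inclusion-maximal classes it is excluded from:
\begin{itemize}
\item For $\TwoCircLinked$: $T=\emptyset$ satisfies $\calS$, and $T=V$ satisfies $\overline\calS$ (here $Sink(V)=V$); both are unsolvable.
\item For $\ThreeCircPath$: a single leaf $T=\{\ell\}$ has $Source(T)$ equal to the two other vertices and $Sink(T)$ equal to $\ell$ and the middle vertex. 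So $\calS$ and $\overline\calS$ hold while $\calP$ fails.
\item For $\ThreeCircTriangle$: $T=\emptyset$ has $Source(T)=Sink(T)=V$ and fails $\calP$.
\item For $\ThreeCircIndependant$: any $T$ with $|T|=2$ satisfies $\calP$ and is unsolvable, so the graph is not in $\cC_\calP$.
\item For $\ThreeCirc$: $T=\{a,b\}$ has $Source(T)=\{w\}\neq Sink(T)=\{a,b,w\}$ and fails $\calP$.
\end{itemize}

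On the positive side you still need the following.
\begin{itemize}
\item $\OneCirc$: its only nonempty $T$, namely $V$, violates all three conditions, including $\overline\calS$ since $Sink(V)=\emptyset$.
\item The $\cC_{\calP\overline\calS}$ memberships of $\ThreeCircTriangle$ and $\ThreeCircIndependant$: they hold because all degrees are even, so $Source(T)=Sink(T)$ and $\overline\calS$ coincides with $\calS$.
\item For $\ThreeCircIndependant$, note explicitly that the sets with $|T|=2$ fail $\calS$, since their source set is a singleton equal to the sink set.
\end{itemize}
With these filled in, your enumeration becomes the paper's proof.
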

\begin{proof}
	\noindent (\OneCirc) The one-vertex graph admits an acyclic $T$-odd orientation if and only if $T = \emptyset$; this is already imposed if either $\calP$, $\calS$, or $\overline\calS$ is satisfied.
	Thus, it belongs to $\cC_{\calP}$, $\cC_{\calS}$, $\cC_{\overline\calS}$ and by inclusion, to $\cC_{\calS  \overline\calS}$, $\cC_{\calP \calS}$, $\cC_{\calP \overline\calS}$, and $\cC_{\calP \calS \overline\calS}$.\newline

	\noindent (\TwoCircLinked) This graph admits an acyclic $T$-odd orientation if and only if $|T| = 1$.
	Hence, it does not belong to $\cC_{\calS}$ or $\cC_{\overline\calS}$ since a set $T'$ with $|T'| = 0$ would satisfy $\calS$ but not $\overline\calS$, and $|T'|=2$ would satisfy $\overline\calS$ but not $\calS$.
	But it indeed belongs to $\cC_{\calS  \overline\calS}$ as any set $T'$ satisfying both $\calS$ and $\overline\calS$ has to be of size $1$.
	It also belongs to $\cC_{\calP}$ as any set $T'$ satisfying $\calP$ verifies $|T'| \equiv_2 |E| \equiv_2 1$.
	Hence, by inclusion, it also belongs to $\cC_{\calP \calS}$, $\cC_{\calP \overline\calS}$, and $\cC_{\calP \calS \overline\calS}$.\newline

	\noindent (\TwoCirc) This graph admits an acyclic $T$-odd orientation if and only if $|T| = 0$.
	Hence, it does not belong to $\cC_{\calP}$ since a set $T'$ with $|T'| = 2$ would satisfy $\calP$ but not $\calS$.
	But it indeed belongs to $\cC_{\calS}$ and $\cC_{\overline\calS}$ as $T=\emptyset$ satisfies both conditions, but the set $T'$ where $|T'|\geq 1$ would not satisfy $\calS$ nor $\overline\calS$ as $|V \setminus T' \cup \{v \in T', d(v) \equiv_2 1\}| > 1$ would not be satisfied.
	By inclusion, it also belongs to $\cC_{\calS  \overline\calS}$, $\cC_{\calP \calS}$, $\cC_{\calP \overline\calS}$, and $\cC_{\calP \calS \overline\calS}$.\newline

	\noindent (\ThreeCircPath) This graph admits an acyclic $T$-odd orientation if and only if $|T|$ is even.
	This graph does not belong to $\cC_{\calS}$ nor $\cC_{\overline\calS}$ since a set $T'$ containing only a vertex of degree $1$ would satisfy both $\calS$ and $\overline\calS$ but not $\calP$.
	However, it does belong to $\cC_{\calP}$ as any set $T'$ satisfying $\calP$ precisely satisfies $|T'| \equiv_2 |E| \equiv_2 0$.
	By inclusion, it also belongs to $\cC_{\calP \calS}$, $\cC_{\calP \overline\calS}$, and $\cC_{\calP \calS \overline\calS}$.\newline

	\noindent (\ThreeCircTriangle) This graph admits an acyclic $T$-odd orientation if and only if $|T| = 1$.
	This graph does not belong to $\cC_{\calS}$ nor $\cC_{\overline\calS}$ since a set $T'$ with $|T'| = 0$ would satisfy both $\calS$ and $\overline\calS$.
	It also does not belong to $\cC_{\calP}$ since a set $T'$ with $|T'| = 3$ would satisfy $\calP$ but not $\calS$ nor $\overline \calS$.
	However, it does belong to $\cC_{\calP \calS}$ and $\cC_{\calP \overline\calS}$ because if a set $T'$ satisfies $\calP$, it verifies $|T'| \equiv_2 1$, and if it satisfies either $\calS$ or $\overline\calS$, it verifies that $|V \setminus T'| > 1$.
	Thus, it imposes that $|T'| = 1$.
	By inclusion, it also belongs to $\cC_{\calP \calS \overline\calS}$.\newline

	\noindent (\ThreeCircIndependant) This graph admits an acyclic $T$-odd orientation if and only if $|T| = 0$.
	This graph does not belong to $\cC_{\calS \overline\calS}$ since the set $T'$ with $|T'|=1$, would satisfy both $\calS$ and $\overline\calS$.
	It also does not belong to $\cC_{\calP}$ since a set $T'$ with $|T'| = 2$ would satisfy $\calP$.
	However, it does belong to $\cC_{\calP \calS}$ and $\cC_{\calP \overline\calS}$ because if a set $T'$ satisfies $\calP$, it verifies $|T'| \equiv_2 0$, and if it satisfies either $\calS$ or $\overline\calS$, it verifies that $|V \setminus T'| > 1$.
	Thus, it imposes that $|T'| = 0$.
	By inclusion, it also belongs to $\cC_{\calP \calS \overline\calS}$.\newline

	\noindent (\ThreeCirc) This graph does not belong to $\cC_{\calS \overline\calS}$ since any set $T'$ with $|T'|=2$ would satisfy both $\calS$ and $\overline\calS$ but not $\calP$.
	It also does not belong to $\cC_{\calP \overline\calS}$ since a set $T'$ with $|T'| = 3$ would satisfy $\calP$ and $\overline\calS$ but not satisfy $\calS$.
	It does not belong to $\cC_{\calP \calS}$ since the set $T'$ composed only of the isolated vertex would satisfy both $\calP$ and $\calS$ but not $\overline\calS$.
	However, it does belong to $\cC_{\calP \calS \overline\calS}$ as the only possible sets $T'$ satisfying $\calP \calS \overline\calS$ are the ones containing only a single vertex of degree one.
	Such sets always admit an acyclic $T'$-odd orientation.
\end{proof}

\begin{lemma}\label{lem:connexe}
	Let $G\in  \cC_{\calP\calS\overline\calS}$ with $|V(G)|\geq 4$.
	Then $G$ is connected.
\end{lemma}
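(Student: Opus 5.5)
The plan is to argue by contraposition. Suppose $G$ is disconnected with $|V(G)|\geq 4$. I will exhibit a set $T$ that satisfies $\calP$, $\calS$ and $\overline\calS$ but admits no $T$-odd orientation at all.

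The obstruction is parity applied component by component. Split $V(G)=V_1\cup V_2$, where $V_1$ is the vertex set of one connected component and $V_2\neq\emptyset$ is the rest, and write $G_i=G[V_i]$; there are no edges between $V_1$ and $V_2$. In any orientation, the in-degrees inside $G_i$ sum to $|E(G_i)|$. So a $T$-odd orientation forces $|E(G_i)|+|T\cap V_i|\equiv_2 0$ for $i=1,2$, by the same counting that gives $\calP$. I therefore choose $T$ with $|T\cap V_i|\equiv_2 |E(G_i)|+1$ for both $i$. The two defects cancel, so $\calP$ holds globally, yet no $T$-odd orientation exists, acyclic or not. Only the parities of $|T\cap V_i|$ are fixed, which leaves a lot of freedom for securing $\calS$ and $\overline\calS$.

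Next I choose $T$ so that $\calS$ and $\overline\calS$ hold. The first attempt is the sparse choice: $T\cap V_i$ is empty or a single vertex, according to the required parity. Then $|Source(T)|=|V\setminus T|\geq |V|-2\geq 2$, so $\calS$ holds. It also suffices that $Sink(T)\neq\emptyset$: if $|Sink(T)|=1$, then $Sink(T)\neq Source(T)$ because $|Source(T)|\geq 2$, and $|V|\geq 4$ excludes the one-vertex exception. So the sparse choice works unless $Sink(T)=\emptyset$. That happens exactly when every white vertex has odd degree and every black vertex has even degree. In that situation I use the remaining freedom, in this order:
\begin{itemize}
\item If some part $V_i$ contains at least three vertices, I blacken two white vertices $u,w$ of $V_i$; this keeps both parities. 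Since $u,w$ have odd degree, they become potential sinks. At least $|V|-4$ whites remain, and I will re-check $\calS$ directly, also using that the black vertex, if it was chosen, can be moved.
\item Otherwise I switch one part to the dense choice $T\cap V_i=V_i$ or $V_i\setminus\{x\}$ while keeping the other part sparse. The dense part's black odd-degree vertices are then sinks, and the sparse part still supplies white sources.
\end{itemize}

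The main obstacle is the degenerate bookkeeping when $|V|$ is small (four or five vertices), where both $\calS$ and $\overline\calS$ have to be checked by hand. A bad case occurs only when all white vertices have odd degree and all black vertices have even degree, which is a strong degree restriction. So I would finish with a short case analysis on the component structure. Components that are isolated vertices have even degree, so making them white gives sinks. Components of size at least two contain a vertex of odd degree or are Eulerian, and in either case a vertex of suitable degree parity can be recoloured while preserving $|T\cap V_i| \bmod 2$. Lemma~\ref{lem:smallcases}, together with the degree-parity facts noted in the introduction (odd-degree black vertices are sinks, odd-degree white vertices are sources only, even-degree white vertices are both), guides which vertex to swap in each case.
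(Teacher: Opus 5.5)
Your route differs from the paper's. The paper blackens an entire component so that it contains no potential source, which rules out acyclic orientations. It then needs a case split on whether all degrees are even, on the parity of $|V(G)|+|E(G)|$, and a special subcase where the two odd-degree vertices form a $K_2$ component. You instead make $\calP$ fail both on one component $V_1$ and on $V_2=V\setminus V_1$. The two defects cancel globally, but the handshake count on $G[V_1]$ rules out every $T$-odd orientation, acyclic or not. That obstruction is correct. So is your reduction: the sparse choice ($|T\cap V_i|\le 1$) fails only when $Sink(T)=\emptyset$, i.e.\ when $T$ is exactly the set of even-degree vertices.

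The gap is in the repair, and your first bullet can destroy $\calS$. Take $G=P_3\cup K_1$ with path $a\,b\,c$ and isolated vertex $d$. The parities force $d\in T$ and $|T\cap\{a,b,c\}|$ odd. The sparse choice $T=\{b,d\}$ is the bad case, and blackening the two white vertices $a,c$ of the $3$-vertex part gives $T=V$, so $Source(T)=\emptyset$. For $|V|\ge 5$ the bullet does work, since a white vertex survives and $G$ has an odd-degree vertex, so $Source(T)\ne Sink(T)$. But ``re-check $\calS$ directly'', ``the black vertex can be moved'' and the promised ``short case analysis'' are not arguments. As written, the proof is incomplete exactly in the small cases you flagged.

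A counting argument closes this and makes the case analysis unnecessary. The two parity constraints are met by exactly $2^{|V_1|-1}\cdot 2^{|V_2|-1}=2^{|V|-2}\ge 4$ sets $T$, all satisfying $\calP$.

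If $G$ has a vertex of odd degree, then $Source(T)\ne Sink(T)$ for every $T$. So $\calS$ fails only for $T=V$, and $\overline\calS$ fails only for $T=W$, the set of even-degree vertices. At least two of your candidates survive.

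If all degrees are even, then $Source(T)=Sink(T)=V\setminus T$, and the sparse choice gives $|V\setminus T|\ge |V|-2\ge 2$.

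With this, your proof is complete and shorter than the paper's. It is also slightly stronger: the witness $T$ admits no $T$-odd orientation at all. So disconnected graphs with at least four vertices leave $\cC_{\calP\calS\overline\calS}$ for a reason unrelated to acyclicity.
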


\begin{proof}
	\noindent By way of contradiction, suppose that $G$ is not connected.

	\noindent \textbf{a.~} Assume first that every vertex in $G$ has even degree.
	Let $H$ be the smallest connected component, either it is of size one, or it has as least $3$ vertices since the graph is simple and all degrees are even. By minimality of $H$ and because $|V(G)|\geq 4$, we have $\vert V\setminus H\vert \geq 3$. 
	Select any $A\subset V\setminus H$ with $\vert A\vert =2$ if $\vert V(G)\vert +\vert E(G)\vert \equiv_2 0$ else $\vert A\vert =3$. This choice of $A$ implies that $|A| \equiv_2 |V(G)| + |E(G)|$ and in particular that the set $T=V(G)\setminus A$ satisfies $\calP$.
	Moreover $T$ satisfies $\calS$ and $\overline\calS$ since $A \subseteq Source(T) \cap Sink(T)$ and $|A|\geq 2$.
	Nevertheless, $G$ has no acyclic $T$-odd orientation since all vertices of $H$ are in $T$, hence $T(H)$ violates $\calS$ in $G[H]$.\newline

	\noindent \textbf{b.~} Assume now that there exist two vertices $u,v\in V(G)$ with odd degree.\newline
	\noindent\textbf{-~} If $|V(G)|+|E(G)|$ is odd, then choose $T = V(G)\setminus \{x\}$ for any vertex $x \neq u,v$.
	Now, $T$ satisfies $\calP$.
	It also satisfies $\calS$ and $\overline\calS$ since $\{x\} = Source(T)$, $u,v \in Sink(T)$ and $Source(T)\neq Sink(T)$.
	However, there is at least one other connected component than the one containing $x$, and all its vertices are in $T$, hence it fails to satisfy $\calS$, and $G$ has no acyclic $T$-odd orientation.\newline

	\noindent\textbf{-~} If $|V(G)|+|E(G)|$ is even, choose $T = V(G)\setminus \{u, x\}$ with $x$ in the same connected component as $u$ ($x$ is guaranteed to exist since $u$ has odd degree).
	Note that $T$ satisfies $\calP$.
	If $x$ is different from $v$, then $T$ also satisfies $\calS$ and $\overline\calS$ since $\{u, x\} = Source(T)$, $\{v\} \in Sink(T)$ and $Source(T) \neq Sink(T)$.
	But any other connected component different from the one containing $u$ has all its vertices in $T$ and contains no potential source, so $G$ has no acyclic $T$-odd orientation.\newline
	Otherwise, if $x=v$, then we may finally assume that $u$ and $v$ form by themselves a connected component of two vertices.
	We now take $T = V(G)\setminus \{y, z\}$ with both $y,z$ different from $u,v$ (they are guaranteed to exist since $|V(G)| \geq 4$).
	This choice of $T$ ensures that it satisfies $\calP \calS \overline\calS$ since $\{y,z\} \in Source(T)$, $\{u,v\} \in Sink(T)$.
	Nevertheless, $G$ has no acyclic $T$-odd orientation since $u,v \in T$ and the connected component composed of $u,v$ fails to satisfy $\calS$.
\end{proof}

\noindent The following subsections are dedicated to prove the different items of Theorem \ref{thm:treilli}.

\subsection{Proof of (a) and (b)}

From Lemma \ref{lem:smallcases}, we have $\{\OneCirc, \TwoCirc\} = \cC_{\calS} \cap \cG_{\leq 3} = \cC_{\overline\calS} \cap \cG_{\leq 3}$ and $\{\OneCirc, \TwoCirc, \TwoCircLinked\} = \cC_{\calS\overline\calS} \cap \cG_{\leq 3}$. In fact, we show those are the only graphs in $\cC_{\calS}, \cC_{\overline\calS}, \cC_{\calS\overline\calS}$.



\noindent Let $G$ be any graph with $|V(G)| \geq 4$. We find $T$ that satisfies $\calS$ and $\overline \calS$ but not $\calP$. Consider $T \subsetneq V(G)$ such that $|T| = 2$ if $|E(G)|\equiv_2 1$, and $|T| = 1$ otherwise. By construction, $T$ does not satisfy $\calP$. Since $T \neq V(G)$, and because $|V(G)| \geq 4$, we have $|Source(T)| = |V(G)\setminus T| \geq 2$, therefore $T$ satisfies $\calS$. If all vertices are of even degree, then $Source(T) = Sink(T)$ and $\overline\calS$ is also satisfied. Otherwise, $T$ can be chosen to contain a vertex of odd degree, so that $\overline\calS$ is satisfied. Hence:
$$\cC_{\calS} = \cC_{\overline\calS} = \{\OneCirc, \TwoCirc\} \subsetneq \cC_{\calS\overline\calS} = \{\OneCirc, \TwoCirc, \TwoCircLinked\}.\hbox{\qed}$$

\subsection{Proof of (c)}
This part of Theorem \ref{thm:treilli} could be proved using Theorem \ref{thm:cp_carac}, instead, we give an independent proof.\newline

\noindent We first prove:
\begin{equation*}
    \cC_{\calP} \subseteq \{\OneCirc\}\cup \{G\in \cC_{\calP \calS \overline\calS} \ | \ G \text{ is connected, non-Eulerian and } |V(G)| + |E(G)| \equiv_2 1\}
\end{equation*}
\noindent Let $G\in\cC_{\calP}$. By Lemmas \ref{lem:smallcases} and \ref{lem:connexe}, $G$ is connected.\newline

\noindent Assume that $|V(G)| + |E(G)| \equiv_2 0$.
Let $T = V(G)$.
Since $|E(G)| + |T| \equiv_2 0$, then $T$ satisfies $\calP$.
Moreover, $T$ does not satisfy $\calS$.
Hence $G$ satisfy $|V(G)| + |E(G)| \equiv_2 1$.\newline

\noindent By contradiction, suppose that $G$ is an Eulerian graph with $|V(G)|>1$.
Set $T = V(G)\setminus \{v\}$ for some vertex $v\in V(G)$.
This choice ensures that $T$ satisfies $\calP$.

\noindent But notice that $Source(T) = Sink(T) = \{v\}$ in this case, but since $|V(G)|>1$ neither $\calS$ nor $\overline \calS$ are satisfied.
Thus, graphs $G\in \cC_\calP$ cannot be Eulerian.

\noindent We now show the other inclusion.  $$\cC_{\calP} \supseteq \{\OneCirc\}\cup \{G\in \cC_{\calP \calS \overline\calS} \ | \ G \text{ is connected, non-Eulerian and } |V(G)| + |E(G)| \equiv_2 1\}.$$

\noindent Clearly, $\OneCirc\in \cC_{\calP}$. Let $G \in \cC_{\calP\calS\overline\calS}$ be a connected non-Eulerian graph with $|V(G)| + |E(G)| \equiv_2 1$ and $|V(G)|\geq 1$. We will show that $G \in \cC_{\calP}$.
\noindent Consider any set $T \subseteq V(G)$ satisfying $\calP$, i.e. $|E(G)| + |T| \equiv_2 0$. This implies $|Source(T)| = |V(G)\setminus T| \equiv_2  1$. Note that, $G$ being non-Eulerian, $G$ has a non-null even number of vertices of odd degree, therefore, $Source(T) \neq Sink(T)$. If $T$ contains at least one vertex of odd degree, then $Sink(T)\neq \emptyset$. Otherwise, $Source(T)$ contains all vertices of odd degree (non empty set), and at least one vertex of even degree, which belongs to $Sink(T)$. In both cases, $Source(T)\neq \emptyset$, $Sink(T)\neq \emptyset$ and $Source(T) \neq Sink(T)$ proving that $\calS$ and $\overline \calS$ are satisfied.
 
\noindent Since $G\in \cC_{\calP\calS\overline\calS}$, then $G$ admits an acyclic $T$-odd orientation.
Hence $G\in \cC_{\calP}$.\qed

\subsection {Proof of (d)}
\noindent We show:

\begin{equation*}
 \cC_{\calP\calS}\setminus \cC_\calP = \{\TwoCirc, \ThreeCircIndependant\}\cup \{G\in  \cC_{\calP\calS\overline\calS}\setminus \cC_\calP\ | \ G \hbox{ is Eulerian}\} =  \cC_{\calP\overline\calS}\setminus \cC_\calP.
\end{equation*}

\begin{proof}
	From Lemma \ref{lem:smallcases}, we have $\cC_{\calP\calS} \cap \cG_{\leq 3} = \cC_{\calP \overline\calS} \cap \cG_{\leq 3} = \{\OneCirc, \TwoCirc, \TwoCircLinked, \ThreeCircPath, \ThreeCircTriangle, \ThreeCircIndependant\}$. Furthermore, by $(c)$, $\{\OneCirc, \TwoCircLinked, \ThreeCircPath \} \subseteq \cC_{\calP}$. Hence $\cG_{\leq 3} \cap  \cC_{\calP\calS}\setminus \cC_\calP = \cG_{\leq 3} \cap \cC_{\calP\overline\calS}\setminus \cC_\calP =  \{\TwoCirc, \ThreeCircIndependant\}$ \newline

	\noindent  Let $G\in  \cC_{\calP\calS}\setminus \cC_\calP$ with $|V(G)|\geq 4$. Clearly, $G\in  \cC_{\calP\calS\overline{\calS}}\setminus \cC_\calP$. \newline
	\noindent By contradiction, assume that $G$ is non-Eulerian. By Lemma \ref{lem:connexe}, $G$ is connected. So, $G$ has a non-null even number of vertices of odd degree. Let us denote by $O(G)$ the set of vertices of odd degree. We show that $G$ does not belong neither to $\cC_{\calP\calS}$ nor to $\cC_{\calP \overline\calS}$.
	Set $T_1=V(G)\setminus O(G)$ and $T_2=V(G)$.\newline 
	The set $T_1$ satisfies $\calS$ but not $\overline \calS$ as $Source(T_1) = O(G)$ but $Sink(T_1) = \emptyset$, while $T_2$ satisfies $\overline \calS$ but not $\calS$ as $Source(T_2) = \emptyset$ but $Sink(T_2) = O(G)$ . Since $|O(G)|\equiv_2 0$, $|T_1| \equiv_2 |T_2| \equiv_2 |V(G)|$. Moreover, since $G \notin \cC_\calP$, and keeping in mind that $G$ is supposed to be connected and non-Eulerian, by item (c) of Theorem \ref{thm:treilli}, we have $|V(G)| +|E(G)| \equiv_2 0$. So $|T_1| \equiv_2 |T_2| \equiv_2 |E(G)|$ and $\calP$ is satisfied. Hence $T_1$ satisfies $\calP$ and $\calS$ but not $\overline \calS$ and $T_2$ satisfies $\calP$ and $\overline \calS$ but not $\calS$, reaching a contradiction. \newline
	
	\noindent For the other inclusion, we show that any graph $G\in \cC_{\calP\calS\overline\calS}$ that has only vertices of even degree (not necessarily connected) belongs to $\cC_{\calP \calS}\setminus \cC_{\calP}$ and $\cC_{\calP \overline{\calS}}\setminus \cC_{\calP}$ . Let $T\subseteq V(G)$.
	Since all vertices are of even degree, we have that $Source(T) = Sink(T)$ and thus:
	\begin{equation*}
		T \hbox{ satisfies } \calS \iff T \hbox{ satisfies } \overline\calS.
	\end{equation*}

	\noindent Since Eulerian graphs have only vertices of even degree, we conclude that $\{\TwoCirc, \ThreeCircIndependant\} \cup \{G\in  \cC_{\calP\calS\overline\calS}\setminus \cC_\calP\ | \ G \hbox{ is Eulerian}\} \subseteq \cC_{\calP\calS}\setminus\cC_\calP\cap   \cC_{\calP\overline\calS}\setminus \cC_\calP$.
\end{proof}

\subsection {Proof of (e)}

\noindent All graphs with at most three vertices belong to $\cC_{\calP\calS \overline\calS}$ by Lemma \ref{lem:smallcases}.
Let $G\in \cC_{\calP\calS\overline\calS}\setminus \cC_{\calP\calS}$ with $|V(G)|\geq 4$. By Lemma \ref{lem:connexe}, $G$ is connected.
Moreover, by point \textbf{(d)} of Theorem \ref{thm:treilli}, $G$ is non-Eulerian and by point \textbf{(c)} of Theorem \ref{thm:treilli}, we have $|V(G)| + |E(G)| \equiv_2 0$.
Hence
$$\cC_{\calP\calS\overline\calS}\setminus \cC_{\calP\calS}\subseteq \{\ThreeCirc\}\cup\{G \hbox{ connected and non-Eulerian}  \ | \ |V(G)| + |E(G)| \equiv_2 0\}.$$ \qed

\section{Cartesian product of path and/or cycle}\label{sec:cartesian_product}

This section is dedicated to the proofs of Theorems \ref{thm:tore_carac} and \ref{thm:grid_carac}. In the first subsection, we present a general method, called $T$-decomposition, used in all the proofs. Then, we study basic cases (namely trees and cycles), prior to addressing grids, cylinders, and finally torii.

\subsection{Proofs methods and overview}
We present here the main tools used in the proofs of the next section subsections.
Let us first introduce an equivalent formulation of Problem \ref{prob:AOP}.

\begin{problem}[Game]\label{prob:Jeu_AOP}~\newline
	\textbf{Instance:} A pair $(G,T)$ where $G = (V,E)$ is a graph and $T \subset V$ is a subset of vertices colored black, the other vertices are colored white.\newline
	\textbf{Rules:} This is a single-player game.
At each turn, the player must removes a white vertex; upon doing so, the player flips the color of the vertex' neighbors: a black neighbor becomes white, and a white neighbor becomes black.\newline
	\textbf{Question:} Is it possible to remove all vertices of $G$ while respecting the rules of the game?\newline
\end{problem}

\begin{lemma}\label{lemme:equivalence_Jeu_AOP}
	Problem \ref{prob:Jeu_AOP} is equivalent to Problem \ref{prob:AOP}.
\end{lemma}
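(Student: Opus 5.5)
The plan is to set up a bijection between acyclic $T$-odd orientations and valid removal sequences in the game. The mechanism is the standard one: an acyclic orientation is determined by, and determines, a topological ordering up to the choice of linear extension. Concretely, we will show that $(G,T)$ admits an acyclic $T$-odd orientation if and only if there is an ordering $v_1,\dots,v_n$ of $V$ such that each $v_i$ is white at the moment it is removed in the game. The key invariant is the following. At any stage, with removed set $R$, a remaining vertex $v$ is white if and only if $|\{v\}\cap T| + |N(v)\cap R| \equiv_2 0$. This holds because each removal of a neighbour flips $v$ once, and $v$ starts white if and only if $v\notin T$.

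For the direction from game to orientation, take a successful removal order $v_1,\dots,v_n$. Orient every edge $v_iv_j$ with $i<j$ as $\dir{v_iv_j}$. This orientation is acyclic, since it is compatible with a linear order. The in-degree of $v_j$ is $|N(v_j)\cap\{v_1,\dots,v_{j-1}\}|$. Because $v_j$ is white when removed, the invariant gives $|\{v_j\}\cap T| + d^-(v_j) \equiv_2 0$. Hence $d^-(v_j)$ is odd exactly when $v_j\in T$, so the orientation is $T$-odd.

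For the direction from orientation to game, take an acyclic $T$-odd orientation and a topological order $v_1,\dots,v_n$ of it, so that every arc goes from a smaller to a larger index. Remove the vertices in this order. When $v_j$ is reached, the removed neighbours of $v_j$ are exactly its in-neighbours. Their number is $d^-(v_j)$, which has the same parity as $|\{v_j\}\cap T|$, so by the invariant $v_j$ is white and its removal is legal. Thus all vertices are removed.

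The only point needing care is the invariant itself, including the observation that removed vertices never influence one another afterwards. Since it follows directly from the rules of the game, I expect no real obstacle here. The argument is constructive in both directions: it converts an orientation into a removal order and back in linear time.
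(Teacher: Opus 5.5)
Your proposal is correct and matches the paper's proof: for one direction you orient each edge from the earlier-removed endpoint to the later-removed one, and for the other you remove vertices along a topological order, using that a vertex's colour flips once per previously removed neighbour. Your explicit parity invariant only makes the paper's informal flip count precise. One small wording point: the correspondence is an equivalence of existence, not a literal bijection, since several removal orders can give the same orientation.
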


\begin{proof}
	Let $G = (V,E)$ be a graph and $T \subset V$ a set of vertices.
\newline
	\noindent $\Rightarrow$~) Let $v_1 > \dots > v_n$ be an elimination order of the vertices that solves Problem \ref{prob:Jeu_AOP}.
Define an acyclic orientation $o$ of the edges of $G$ as follows: for every $v_i > v_j$, orient the edge $v_iv_j$ from $v_i$ to $v_j$.
This orientation is clearly acyclic, and moreover, it is $T$-odd.
Indeed, each vertex is white when it is removed; thus, a vertex initially black (resp. white) has changed color an odd (resp. even) number of times, meaning that an odd (resp. even) number of its neighbors were removed before it.
That is, it has an odd (resp. even) in-degree in $o$.
Therefore, from an elimination order of the vertices, one can construct an acyclic $T$-odd orientation of $G$ in linear time.
\newline

\noindent $\Leftarrow$~) Conversely, let $o$ be an acyclic $T$-odd orientation of $G$.
Since $o$ is acyclic, there exists a total order $v_1 > \dots > v_n$ on the vertices such that for every $v_i > v_j$, the edge $v_iv_j$ is oriented from $v_i$ to $v_j$.
We consider this elimination order in Problem \ref{prob:Jeu_AOP} and notice that a vertex color is flipped as many times as its in-degree in $o$.
Hence, each vertex is white when it is removed and it is a valid elimination order for Problem \ref{prob:Jeu_AOP}.
Thus, from an acyclic $T$-odd orientation of $G$, one can construct an elimination order of the vertices of $G$ in linear time via a topological sort of the orientation.
\end{proof}

\noindent This game understanding of Problem \ref{prob:AOP} is very useful in the proofs, mainly because it allows focusing on the structure of the graph, and on the set $T$ without considering the orientation.
It also provides a constructive way to build an acyclic $T$-odd orientation by giving an elimination order of the vertices.
We often recognize some subgraphs that can be removed from $G$ in the sense of Problem \ref{prob:Jeu_AOP}, usually thanks to an induction argument, and this leaves a smaller graph that can be handled more easily.\newline

\noindent Now, more formally, a \textbf{$T$-decomposition} of a graph $G$ with $T\subseteq V(G)$ is an ordered decomposition $\langle V_0, \dots, V_k \rangle$ of $V(G)$. To each $V_i$, we associate a set $T_i= Z_i \triangle T(V_i)$ where the set $Z_i$ is defined as the set of vertices of $V_i$ that have an odd number of neighbors in $V_0 \cup \dots \cup V_{i-1}$, and $\triangle$ denotes the \textbf{symmetric difference} between two sets (i.e. $X \triangle Y = (X \setminus Y) \cup (Y \setminus X)$ for sets $X$ and $Y$). \newline

\noindent We say that a $T$-decomposition \textbf{satisfies $\calP$} if $T_i$ satisfies $\calP$ in $G[V_i]$ for all $i \in \{ 0, \dots, k\}$, meaning that $|T_i| \equiv_2 |T(V_i)| - |Z_i| \equiv_2 |E(G[V_i])|$.
We say that the decomposition is \textbf{good} if for each $V_i$ the induced subgraph $G[V_i]$ admits an acyclic $T_i$-odd orientation.
In particular, every good $T$-decomposition satisfies $\calP$.\newline

\noindent Good $T$-decompositions provide a structured way to construct acyclic $T$-odd orientations of $G$ by orienting each $G[V_i]$ sequentially.
In the sense of Problem \ref{prob:Jeu_AOP}, this decomposition corresponds to an elimination order where all vertices in $V_1$ are removed first, then all vertices in $V_2$, and so on.
More generally, we claim that:

\begin{theorem}\label{thm:T-odd decomposition}
        $G$ has an acyclic $T$-odd orientation if and only if $G$ admits a good $T$-decomposition.
\end{theorem}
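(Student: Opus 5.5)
The plan is to prove both directions through the game formulation of Lemma \ref{lemme:equivalence_Jeu_AOP}, using acyclic orientations and elimination orders interchangeably. The key observation is a color-tracking identity. Suppose the vertices of $V_0 \cup \dots \cup V_{i-1}$ have all been removed, in any order. Then the current color of a vertex $v \in V_i$ is black if and only if $v \in T$ XOR $v$ has an odd number of neighbours in $V_0\cup\dots\cup V_{i-1}$. This is because each removed neighbour flips $v$ exactly once, regardless of the order of removal. So the set of black vertices of $V_i$ at that moment is exactly $T(V_i) \triangle Z_i = T_i$. I would state this as a preliminary claim and prove it by induction on the number of removed vertices.

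For the \emph{if} direction, let $\langle V_0,\dots,V_k\rangle$ be a good $T$-decomposition. For each $i$, $G[V_i]$ has an acyclic $T_i$-odd orientation, so by Lemma \ref{lemme:equivalence_Jeu_AOP} it has an elimination order $\sigma_i$ that wins the game on $(G[V_i],T_i)$. I concatenate $\sigma_0,\sigma_1,\dots,\sigma_k$ and check that this is a valid elimination order in $G$, by induction on $i$.

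When the player starts on $V_i$, the claim says the black vertices of $V_i$ are exactly $T_i$. While playing $\sigma_i$, only removals inside $V_i$ affect colors within $V_i$: earlier blocks are already gone, and removals in $V_i$ flip only their neighbours. Vertices in later blocks also get flipped, but this is already accounted for in the $Z_j$ of those blocks. Hence the color evolution on $V_i$ coincides with the game on $(G[V_i],T_i)$, every removed vertex is white at its turn, and the whole of $G$ is removed. Equivalently, orienting the edges between blocks from lower to higher index and inside each block by its given orientation gives an acyclic orientation. In-degree parities then match $T$, since $d^-(v) = |N(v)\cap (V_0\cup\dots\cup V_{i-1})| + d^-_{G[V_i]}(v)$, whose parity is $[v\in Z_i] + [v\in T_i] \equiv [v\in T]$.

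For the \emph{only if} direction, the trivial decomposition $\langle V(G)\rangle$ with $k=0$ suffices. Here $Z_0=\emptyset$, so $T_0=T$, and goodness is exactly the hypothesis. If one prefers a statement that holds for every ordered partition compatible with an acyclic $T$-odd orientation $o$, one can instead restrict $o$ to each block: the restriction is acyclic, and by the same parity identity it is $T_i$-odd.

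The main obstacle is to write the color-tracking claim carefully, in particular the independence from removal order and the fact that edges to earlier blocks flip colors exactly $|Z|$-parity many times. Once this is settled, everything else is bookkeeping.
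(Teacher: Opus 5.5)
Your proof is correct. For the \emph{if} direction you follow the paper. The paper glues acyclic $T_i$-odd orientations of the blocks and orients every inter-block edge from the lower to the higher index. It then checks parities through exactly your identity $[v\in Z_i]+[v\in T_i]\equiv[v\in T]$ modulo $2$. Your concatenation of elimination orders is the same argument read through Lemma~\ref{lemme:equivalence_Jeu_AOP}.

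The \emph{only if} direction is where you differ. The paper takes a topological order $v_1,\dots,v_n$ of the given orientation and uses the singleton decomposition $V_i=\{v_i\}$. It observes that $v_i\in Z_i$ iff $d^-(v_i)$ is odd iff $v_i\in T$, so every $T_i$ is empty. You take the one-block decomposition $\langle V(G)\rangle$ instead. This is legitimate: the definition puts no lower bound on $k$, and the paper even allows empty blocks. Then $Z_0=\emptyset$, $T_0=T$, and goodness is the hypothesis itself.

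Your route shows that this direction is tautological, so the real content of the theorem is sufficiency. The paper's singleton version says more: every acyclic $T$-odd orientation is certified by a decomposition of the finest granularity with all $T_i$ empty, which matches the elimination-order viewpoint. Your closing remark recovers and generalises the paper's construction, provided ``compatible'' means that all arcs between blocks point from lower to higher index. That remark restricts $o$ to the blocks of any ordered partition compatible with $o$, and the singleton partition along a topological order is one such partition.
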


\begin{proof}
    First, suppose that $G$ admits a good $T$-decomposition $\langle V_0, \dots, V_k \rangle$.
For all $0 \leq i \leq k$, let $O_i$ be an acyclic $T_i$-odd orientation of $G[V_i]$ where $T_i = Z_i \triangle T(V_i)$.
Consider the orientation $O$ of $G$ defined as follows. For any edge $uv \in E(G)$ with $u\in V_i, v\in V_j$ and $j\geq i$: 
    \begin{itemize}
        \item {if $j=i$ then $\dir{uv}\in O$ if and only if $\dir{uv}\in O_i$,}
        \item {else $\dir{uv}\in O$.}
    \end{itemize}

\noindent We claim that $O$ is an acyclic $T$-odd orientation of $G$.

\noindent Combining the facts that, in the orientation $O$, all arcs are directed from $V_i$ to $V_j$ for all $i,j \in \{0, \dots, k\}$, and each $O_i$ is acyclic, we obtain that $O$ is acyclic.
Now, let $v\in V_i$, for some $i\in \{0, ..., k\}$.
By definition of $O$, we have:

\begin{itemize}
        \item {if $v\in Z_i$ then $d_{G_O}^-(v)\equiv d_{G[V_i]_{O_i}}^-(v)+1\bmod{2}$,}
        \item {else $d^-_{G_O}(v)\equiv d_{G[V_i]_{O_i}}^-(v)\bmod{2}$.}
    \end{itemize}

\noindent Moreover, by definition of $T_i$, we have

\begin{itemize}
        \item {if $v\in Z_i$, then $v\in T_i$ if and only if $v\not \in T$,}
        \item {else $v\in T_i$ if and only if $v \in T$.}
    \end{itemize}

    \noindent From these last remarks and since each $O_i$ is a $T_i$-odd orientation of $G[V_i]$, $O$ is a $T$-odd orientation of $G$.\newline

\noindent Second, let us show that any acyclic $T$-odd orientation induces a good $T$-decomposition.
Let $O$ be an acyclic $T$-odd orientation of a graph $G$.
Since $O$ is acyclic, there exists a total ordering of the vertices $v_1, \dots, v_n$ such that for any edge $v_i v_j \in E(G)$, $O$ contains the arc $\dir{v_i v_j}$ if $i < j$ and $\dir{v_j v_i}$ otherwise.
Moreover, since $O$ is a $T$-odd orientation, $v_i \in T$ if and only if $d^-_{G_O}(v_i)$ is odd.

\noindent We define the trivial decomposition where $V_i = \{v_i\}$ for all $i \in \{0, \dots, |V(G)|-1\}$.
For each $i$, note that $d^-_{G_O}(v_i) = |\{v_1, \dots, v_{i-1}\} \cap N(v_i)|$.
Thus, $v_i \in T$ if and only if $v_i \in Z_i$.
This implies $T_i = \emptyset$ for all $i$, as $T_i = (Z_i \triangle T) \cap V_i$.
Since each $V_i$ is a single vertex, it trivially admits an $\emptyset$-odd orientation.
Therefore, the decomposition $\{V_0, \dots, V_{n-1}\}$ is a good $T$-decomposition.
\end{proof}

\noindent In the proofs of the following sections, we will often want to prove that some chosen $T$-decomposition $\langle V_0, \dots, V_k \rangle$ is good.
To make things easier, we will try to ensure as often as possible that, for each $V_i$ in the decomposition, $G[V_i]$ belongs to $\cC_\calP$, $\cC_{\calP \calS}$ or $\cC_{\calP \calS \overline\calS}$.
Hence, to show that a $T$-decomposition is good we always prove first that it satisfies $\calP$ and then prove that $T_i$ satisfies $\calP \calS \overline\calS$ in $G[V_i]$ for every $i \in \{ 0, ..., k\}$.
The following lemma will be helpful in this task:

\begin{lemma}\label{lem:decomposition_P}
	Let $G$ be a graph and let $T \subseteq V(G)$ be a subset of vertices satisfying $\calP$.
Let $\cA = \langle V_0, \dots, V_k \rangle$ be a $T$-decomposition of $G$ and let $i$ be any integer in $\{0, \dots, k\}$:
	\begin{center}
		If $T_j$ satisfies $\calP$ in $G[V_j]$ for every $j \in \{0, \dots, k\}\setminus \{i\}$ then $\cA$ satisfies $\calP$.
	\end{center}
\end{lemma}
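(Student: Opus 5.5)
The plan is to compare two ways of computing $\sum_j |T_j|$ modulo $2$ and to show that the parity condition for the single index $i$ is forced by the others. Summing the definitions over all parts, we want
\[
\sum_{j=0}^{k} |T_j| \equiv_2 \sum_{j=0}^{k} |E(G[V_j])|,
\]
given that $T$ satisfies $\calP$ in $G$. If this holds, then subtracting the congruences $|T_j|\equiv_2 |E(G[V_j])|$ for all $j\neq i$ (the hypothesis) leaves exactly $|T_i|\equiv_2 |E(G[V_i])|$. That is, $T_i$ satisfies $\calP$ in $G[V_i]$, so $\cA$ satisfies $\calP$.

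To prove the displayed congruence, first use $T_j = Z_j \triangle T(V_j)$, which gives $|T_j| \equiv_2 |Z_j| + |T(V_j)|$. Since the $V_j$ partition $V(G)$, summing over $j$ gives $\sum_j |T(V_j)| = |T|$. The remaining task is to evaluate $\sum_j |Z_j|$ modulo $2$.

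For each $v\in V_j$, let $n(v)$ be the number of neighbours of $v$ in $V_0\cup\dots\cup V_{j-1}$. By the definition of $Z_j$,
\[
|Z_j| \equiv_2 \sum_{v\in V_j} n(v) = |\delta(V_j, V_0\cup\dots\cup V_{j-1})|.
\]
Every edge of $G$ whose endpoints lie in different parts is counted exactly once in $\sum_j |\delta(V_j, V_0\cup\dots\cup V_{j-1})|$, namely at the part of larger index. Hence
\[
\sum_j |Z_j| \equiv_2 |E(G)| - \sum_j |E(G[V_j])|.
\]
Putting the pieces together,
\[
\sum_j |T_j| \equiv_2 |T| + |E(G)| - \sum_j |E(G[V_j])| \equiv_2 \sum_j |E(G[V_j])|,
\]
where the last step uses $\calP$ for $T$, namely $|T|+|E(G)|\equiv_2 0$.

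There is no real obstacle here. The only points needing care are that $|X\triangle Y|\equiv_2 |X|+|Y|$, and that each crossing edge is counted exactly once in the sum over $j$.
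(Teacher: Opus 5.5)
Your proof is correct and is essentially the paper's argument. Both use $|T_j|\equiv_2 |Z_j|+|T(V_j)|$, the congruence $|Z_j|\equiv_2 |\delta(V_0\cup\dots\cup V_{j-1},V_j)|$, the fact that each edge between distinct parts is counted exactly once in $\sum_j|\delta(V_0\cup\dots\cup V_{j-1},V_j)|$, and $\calP$ for $T$; the only difference is that the paper substitutes the hypotheses for $j\neq i$ to isolate $|T_i|$ directly, whereas you first prove $\sum_j|T_j|\equiv_2\sum_j|E(G[V_j])|$ and then subtract.
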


\begin{proof}
	Let $G$ be a graph and let $T \subseteq V(G)$ be a subset of vertices satisfying $\calP$.
Let $\cA = \langle V_0, \dots, V_k \rangle$ be a $T$-decomposition of $G$ and let $i$ be any integer $\{0, \dots, k\}$.
Assume $T_j$ satisfies $\calP$ in $G[V_j]$ for every $j \in \{0, \dots, k\}\setminus \{i\}$, then:
	\begin{align*}
		|T_i| &\equiv_2 |T(V_i)| - |Z_i| & \text{(by definition)}\\
		&\equiv_2 \left(|T| - \sum_{j \in S_i} |T(V_j)|\right) - |Z_i|\\
		&\equiv_2 \left(|E(G)| - \sum_{j \in S_i} (|E(G[V_j])|+|Z_j|)\right) - |Z_i|  & \text{(by $\calP$)}\\
		&\equiv_2 |E(G)| - \sum_{j \in S_i} |E(G[V_j])| - \sum_{j=0}^{k} |Z_j|\\
		&\equiv_2 |E(G[V_i])| + \sum_{j=0}^{k} |\delta(V_0\cup \dots \cup V_{j-1}, V_j)| - \sum_{j=0}^{k} |Z_j|\\
		&\equiv_2 |E(G[V_i])| + \sum_{j=0}^{k} (|\delta(V_0\cup \dots \cup V_{j-1}, V_j)| - |Z_j|) & \text{(by definition of $Z_j$)}\\
		&\equiv_2 |E(G[V_i])|.
	\end{align*}
	\noindent Therefore, $T_i$ satisfies $\calP$ in $G[V_i]$, thus $T_j$ satisfies $\calP$ in $G[V_j]$ for every $j \in \{0, \dots, k\}$ and $\cA$ satisfies $\calP$.
\end{proof}

\subsection{Basic cases}

\begin{lemma}\label{lem:tree}
	Trees belong to $\cC_{\calP}$.
\end{lemma}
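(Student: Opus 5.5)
We must show that for every tree $G$ and every $T \subseteq V(G)$ satisfying $\calP$, that is $|T| \equiv_2 |E(G)| = |V(G)|-1$, the tree $G$ admits an acyclic $T$-odd orientation. Any orientation of a tree is acyclic, so only the parity constraints matter. We could simply invoke the result of Chevalier et al.\ that $\calP$ alone guarantees a $T$-odd orientation. Instead we give a self-contained inductive argument in the spirit of the $T$-decompositions of Theorem~\ref{thm:T-odd decomposition}, which also yields a linear-time construction.

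We proceed by induction on $|V(G)|$. If $|V(G)|=1$, then $\calP$ forces $T=\emptyset$, and the empty orientation works. For $|V(G)|\geq 2$, pick a leaf $\ell$ with unique neighbour $u$, and orient the edge $\ell u$ according to the colour of $\ell$:
\begin{itemize}
\item if $\ell\in T$, orient it $\dir{u\ell}$, giving $\ell$ in-degree $1$;
\item otherwise orient it $\dir{\ell u}$, giving $\ell$ in-degree $0$.
\end{itemize}
Either way the constraint at $\ell$ holds. Set $G'=G-\ell$, which is again a tree. Define $T' = T\setminus\{\ell\}$ in the second case, and $T'=(T\setminus\{\ell\})\triangle\{u\}$ in the first case. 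In the first case $u$ already receives no in-arc from $\ell$, and we must instead check the orientation $\dir{\ell u}$ effect in the second case: the arc into $u$ flips $u$'s required residual parity. So the rule is: when $\ell\notin T$ we use $T'=(T\setminus\{\ell\})\triangle\{u\}$, and when $\ell\in T$ we use $T'=T\setminus\{\ell\}$.

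Next we check that $T'$ satisfies $\calP$ in $G'$. When $\ell\in T$ we have $|T'|=|T|-1$. When $\ell\notin T$ we have $|T'| = |T|\pm 1$. In both cases $|T'|\equiv_2 |T|+1 \equiv_2 |E(G)|+1 \equiv_2 |E(G')|$. By induction $G'$ has a $T'$-odd orientation. Adding the chosen orientation of $\ell u$ gives a $T$-odd orientation of $G$, since the in-degree parity at $u$ is corrected exactly by the definition of $T'$, and all other vertices are unaffected.

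Equivalently, in the language of $T$-decompositions, the decomposition $\langle V(G'),\{\ell\}\rangle$ or $\langle \{\ell\}, V(G')\rangle$ is good, with Lemma~\ref{lem:decomposition_P} handling the parity bookkeeping. The only delicate step is getting the parity bookkeeping at $u$ right; there is no real obstacle, because acyclicity is automatic for trees.
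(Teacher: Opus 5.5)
Your proof is correct, but it takes a different route from the paper. The paper's proof is a one-line citation of Chevalier, Jaeger, Payan and Xuong: a tree (more generally, a connected graph) has a $T$-odd orientation if and only if $|E|+|T|\equiv_2 0$. It leaves implicit the fact that every orientation of a tree is acyclic.

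You instead give a self-contained leaf-stripping induction:
\begin{enumerate}
\item orient the pendant edge at $\ell$ according to whether $\ell\in T$;
\item pass to $G-\ell$, toggling the target at $u$ exactly when the arc $\dir{\ell u}$ enters $u$;
\item observe that $\calP$ is preserved, because $|T'|$ and $|E(G')|$ both change parity.
\end{enumerate}
In the paper's $T$-decomposition language this is the decomposition $\langle \{\ell\}, V(G')\rangle$ when $\ell\notin T$ and $\langle V(G'), \{\ell\}\rangle$ when $\ell\in T$.

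The citation is shorter and presents the lemma as a special case of a general theorem. Your route states the acyclicity point explicitly and gives an explicit linear-time construction (an elimination order), which fits the paper's claim that all its proofs are constructive. It is essentially the standard spanning-tree argument behind the cited result, with no non-tree edges to handle.

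One presentational fix: your first definition of $T'$ is backwards, since it attaches the correction at $u$ to the case $\ell\in T$. You correct this in the next sentence, but delete the wrong version so the case analysis reads cleanly.
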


\begin{proof}
\noindent From \cite{chevalierOddRootedOrientations1983}, we know that a tree admits a $T$-odd orientation for a subset $T \subseteq V(G)$ if and only if $|E(G)| + |T| \equiv_2 0$.
\end{proof}

\begin{remark}
	Because trees (and in particular paths) verify $|V(G)| + |E(G)|\equiv_2 1$, the condition $|E(G)| + |T| \equiv_2 0$  is equivalent to $|V(G)\setminus T| \equiv_2 1$, which is consequently also necessary and sufficient for the existence of an acyclic $T$-odd orientation.
\end{remark}

\begin{lemma}\label{lem:cycle}
	Cycles belong to $\cC_{\calP \calS}\setminus \cC_\calP$.
\end{lemma}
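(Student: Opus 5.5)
The plan is to prove two things separately: that cycles are not in $\cC_\calP$, and that they are in $\cC_{\calP\calS}$.

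\emph{Cycles are not in $\cC_\calP$.} A cycle $C_n$ ($n\geq 3$) is Eulerian and has $|V|=|E|=n\geq 3$. By item (c) of Theorem \ref{thm:treilli}, every graph of $\cC_\calP$ with more than one vertex is non-Eulerian, so $C_n\notin\cC_\calP$. Concretely, $T=V\setminus\{v\}$ satisfies $\calP$ but has $Source(T)=Sink(T)=\{v\}$, which violates $\calS$.

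\emph{Cycles are in $\cC_{\calP\calS}$.} Take $T\subseteq V(C_n)$ satisfying $\calP$ and $\calS$. Every degree equals $2$, so $Sink(T)=Source(T)=V\setminus T$. Condition $\calP$ gives $|T|\equiv_2 n$, so $|V\setminus T|$ is even. Condition $\calS$ makes $V\setminus T$ nonempty, and since a single-vertex source set would equal the sink set, it forces $|V\setminus T|\geq 2$. So $T$ contains at least two white vertices.

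We then build the orientation with a $T$-decomposition and Theorem \ref{thm:T-odd decomposition}. In the game of Problem \ref{prob:Jeu_AOP}, we remove one white vertex $w$ first, taking $V_0=\{w\}$ with $T_0=\emptyset$. This is fine because $w\notin T$ and no vertex precedes it. The remaining set $V_1=V\setminus\{w\}$ induces a path $P_{n-1}$, which is a tree. By Lemma \ref{lem:tree}, trees lie in $\cC_\calP$, so it suffices that $T_1$ satisfies $\calP$ in $G[V_1]$. Since $T_0$ trivially satisfies $\calP$ in $G[V_0]$ (no edges, empty set), Lemma \ref{lem:decomposition_P} gives this directly. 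The decomposition $\langle V_0,V_1\rangle$ is therefore good, and an acyclic $T$-odd orientation exists.

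As a sanity check on parity: the two neighbours of $w$ flip colour, so $|T_1|\equiv_2 |T|+2\equiv_2 |T|$, and the path $P_{n-1}$ has $n-2\equiv_2 n\equiv_2 |T|$ edges, which is consistent. The only delicate point is that the number of white vertices was already shown to be even and at least $2$, so a starting white vertex exists. The argument is constructive and linear-time, via the topological order produced in Lemma \ref{lemme:equivalence_Jeu_AOP}.
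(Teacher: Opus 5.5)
Your proof that cycles lie in $\cC_{\calP\calS}$ is correct and is essentially the paper's argument. The paper deletes an edge $uv$ at a white vertex $u$ and orients the spanning path $C-uv$ with respect to $T\,\Delta\,\{v\}$ by Lemma \ref{lem:tree}. It then observes that the degree-one white vertex $u$ is forced to be a source, and adds $\dir{uv}$. You instead force the white vertex $w$ to be a source by placing it first in the decomposition $\langle\{w\},V\setminus\{w\}\rangle$, and apply Lemma \ref{lem:tree} to the path $C-w$. This is the same idea in $T$-decomposition form. Only one white vertex is actually needed, so your bound $|V\setminus T|\geq 2$ plays no role.

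One claim in your first part is false, although the proof survives because your citation of item (c) already suffices. The set $T=V\setminus\{v\}$ does \emph{not} satisfy $\calP$ in a cycle: $|E|+|T|=2n-1$ is odd. Your own second paragraph shows this, since there $\calP$ forces $|V\setminus T|$ to be even. In the proof of (c), that witness is used only after $|V|+|E|\equiv_2 1$ has been established, and this parity fails for cycles. The correct witness is $T=V$. Here $|E|+|V|=2n$ is even, so $\calP$ holds, while $Source(V)=\emptyset$, so no acyclic $V$-odd orientation exists. This is exactly the paper's reasoning (``since $|V(C)|=|E(C)|$''), which uses the parity clause $|V|+|E|\equiv_2 1$ of (c) rather than the non-Eulerian clause. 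In fact $T=V$ is the only set satisfying $\calP$ that fails $\calS$ in a cycle, because parity rules out $|V\setminus T|=1$.
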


\begin{proof}
\noindent Let $C$ be a cycle.
Since $|V(C)|=|E(C)|$, then, by point \textbf{(c)} of Theorem \ref{thm:treilli}, $C\not\in \cC_\calP$.

\noindent Let $T$ be a subset of $V(C)$ satisfying $\calP$ and $\calS$.
By $\calS$, there exists a vertex $u\in V(C)\setminus T$.
Let $v$ be a neighbor of $u$ in $C$.
The graph $P=C-uv$ is a path with $d(u)=1$.
Let $T'=\{v\}\Delta T$, and note that $T'$ satisfies $\calP$ in $P$.
Moreover $P$ is a tree, then, by Lemma \ref{lem:tree}, $P$ admits an acyclic $T'$-odd orientation $O$.
Observe that  $d_{P_O}^-(u)=0$, since $u\in V(P)-T'$ and $d_P(u)=1$.
One can get an acyclic $T$-odd orientation of $C$ by completing $O$ with the arc $\dir{uv}$.
\end{proof}

\begin{remark}
	In fact, this proves that for any vertex $v \not \in T$, there exists an acyclic $T$-odd orientation of $C$ where $v$ is a source.
\end{remark}

\FloatBarrier
\subsection{Grids}

\noindent In this section, we provide a characterization of the instances $I = (G,T)$ of Problem \ref{prob:AOP}, where $G$ is a grid, for which $G$ admits an acyclic $T$-odd orientation.\newline

\noindent We prove Theorem \ref{thm:grid_carac} recalled here:

\noindent \textbf{Theorem \ref{thm:grid_carac}}:\newline
\textit{
Let $G = P_p \square P_q$ be a graph where $P_p$ and $P_q$ are paths with vertex set $\{u_0, \dots, u_{p-1}\}$ and $\{v_0, \dots, v_{q-1}\}$ respectively.
Let be a subset $T \subset V(P_p \square P_q)$ that satisfies $\calP \calS \overline\calS$.
\begin{center}
	$P_p \square P_q$ has no acyclic $T$-odd orientation if and only if $p,q \equiv_2 0$ and $V \setminus T$ satisfies that:
\end{center}
\begin{itemize}
	\item $(u_{2k-1}, v_j) \in V \setminus T$ if and only if $(u_{2k}, v_j) \in V \setminus T$ for all $k \in \{1, \dots, \frac{p-2}{2}\}$ and $j \in \{0, q-1\}$.
	\item $(u_i, v_{2k-1}) \in V \setminus T$ if and only if $(u_i, v_{2k}) \in V \setminus T$ for all $i \in \{0, p-1\}$ and $k \in \{1, \dots, \frac{q-2}{2}\}$ symmetrically.
\end{itemize}
}

\noindent \textbf{Preliminaries}~\newline

\noindent A \textbf{path} $H$ of length $h$ is a graph with vertex set $V(H) = \{w_0, \dots, w_{h-1}\}$ and edge set $E(H) = \{(w_i, w_{i+1}) \mid 0 \leq i < h-1\}$.
We define for any path $H$, the set $Pairs(H) = \{[w_{2k-1}, w_{2k}] \subset V(H) \mid k = 1, \dots, \frac{p-2}{2}\}$.
Let $H^{[\geq k]}$ and $H^{[\leq k]}$ be the graph induced by $\{w_k, \dots, w_{h-1}\}$ and $\{w_0, \dots, w_{k}\}$ respectively.
\newline

\noindent A \textbf{bad path instance} of Problem \ref{prob:AOP} is an instance $(H,T)$ where $H$ is a path of even length $h$, and $T \subseteq V(H)$ satisfies the following conditions: first, both endpoints of the path, $w_0$ and $w_{h-1}$, belong to $T$; and second, for any pair of vertices $[w,w'] \in Pairs(H)$, $w \in T$ if and only if $w' \in T$.
An example of this construction is illustrated in Figure \ref{fig:grid:bad_path_instance}.\newline

\begin{figure}[ht]
	\begin{center}
	\includegraphics[width=0.6\textwidth]{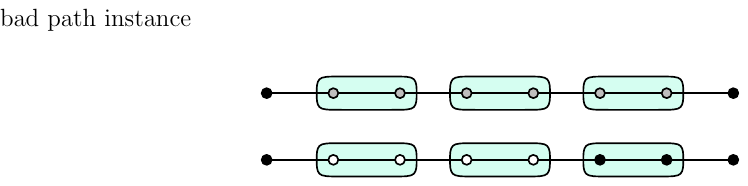}
	\caption{(Top) Generic bad path instance, along with an example (bottom)}
	\label{fig:grid:bad_path_instance}
	\end{center}
\end{figure}

\noindent A notable property of path instances is that, as long as it is not a bad path instance, it is always possible to find a subpath of odd length including one of the endpoints such that it has an even number of $T$'s vertices.

In particular we claim:
\begin{claim}\label{cla:grid:badPath}
	Let $(H,T)$ be an instance of Problem \ref{prob:AOP} such that $H$ is a path of even length $h$.\newline

	\noindent If $(H,T)$ is not a bad path instance of Problem \ref{prob:AOP}, then there exists $k \in \{0,\dots, \frac{h-2}{2}\}$ and \\ $L \in \{H^{[\leq 2k]}, H^{[\geq 2k+1]}\}$ such that $|T(L)|$ is even.
\end{claim}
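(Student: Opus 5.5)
We prove the contrapositive: if for every $k \in \{0,\dots,\frac{h-2}{2}\}$ both $|T(H^{[\leq 2k]})|$ and $|T(H^{[\geq 2k+1]})|$ are odd, then $(H,T)$ is a bad path instance. The idea is to read the parities of the odd-length prefixes and suffixes one step at a time. This recovers, first, the membership of the endpoints in $T$, and second, the equal-membership condition on each pair of $Pairs(H)$. Here "length $h$" means $h$ vertices $w_0,\dots,w_{h-1}$ with $h$ even, so that $H^{[\leq 2k]}$ has $2k+1$ vertices, which is odd.

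\textbf{Endpoints.} Taking $k=0$ in the prefix condition, $H^{[\leq 0]}=\{w_0\}$, so $|T(\{w_0\})|$ odd means $w_0\in T$. Taking $k=\frac{h-2}{2}$ in the suffix condition, $H^{[\geq h-1]}=\{w_{h-1}\}$, so $w_{h-1}\in T$. Both endpoints therefore belong to $T$, which is the first requirement of a bad path instance.

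\textbf{Pairs.} For $1\leq k\leq \frac{h-2}{2}$, the prefixes $H^{[\leq 2k]}$ and $H^{[\leq 2k-2]}$ differ exactly by the two vertices $w_{2k-1}$ and $w_{2k}$. Both prefixes have an odd number of vertices of $T$, so
$$|T(\{w_{2k-1},w_{2k}\})| = |T(H^{[\leq 2k]})| - |T(H^{[\leq 2k-2]})| \equiv_2 0.$$
Hence $w_{2k-1}\in T$ if and only if $w_{2k}\in T$. This holds for exactly the pairs $[w_{2k-1},w_{2k}]$ with $k=1,\dots,\frac{h-2}{2}$, which is the set $Pairs(H)$. (The paper writes $\frac{p-2}{2}$ there, but it should be $\frac{h-2}{2}$.) So $(H,T)$ is a bad path instance, a contradiction. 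Notice that the prefix conditions alone already force badness; the suffix hypotheses are needed only to get $w_{h-1}\in T$.

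\textbf{Main obstacle.} The only delicate point is the indexing. One has to check that the range of $k$ in the claim matches the definition of $Pairs(H)$ exactly, and that the case $h=2$ is covered: there $Pairs(H)$ is empty, only $k=0$ occurs, and the argument reduces to the two endpoint conditions. Since the argument finds an explicit $k$ and $L$ whenever the instance is not bad, the proof also gives a linear-time way to locate such a subpath, which fits the constructive aim of the paper.
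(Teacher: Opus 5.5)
Your proof is correct and is essentially the paper's argument in contrapositive form. The paper argues directly: it takes $L=\{w_0\}$ or $L=\{w_{h-1}\}$ when an endpoint is outside $T$, and otherwise takes the prefix $H^{[\leq 2k]}$ ending at the first pair $[w_{2k-1},w_{2k}]$ with unequal membership in $T$, which is exactly the pair your consecutive-prefix differences detect (your indexing remark about $\frac{h-2}{2}$ versus $\frac{p-2}{2}$ is also right).
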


\begin{proof}
	Let $(H,T)$ be an instance of Problem \ref{prob:AOP} where $H$ is a path.\newline
	Suppose $(H,T)$ is not a bad path instance.
If $w_0 \notin T$ or $w_{h-1} \notin T$, choose $L = H[w_0]$ or $L = H[w_{p-1}]$ respectively.
In either case, $|T(L)|$ is even.

	\noindent Otherwise, if $w_0, w_{p-1} \in T$, then since $(H,T)$ is not a bad path instance, there exists a pair of vertices $(w_{2k-1}, w_{2k}) \in Pairs(V(H))$ of smallest index $k \in \{1, \dots, \frac{h-2}{2}\}$ such that $w_{2k-1} \in T$ if and only if $w_{2k} \notin T$.
Then, $|T(H^{[\leq 2k]})|$ is even.
\end{proof}

\noindent Note in fact that the equivalence holds.
Indeed, for any bad path instance $(H,T)$, we have that for all $k \in \{0, \dots, \frac{h-2}{2}\}$, both $|T(H^{[\leq 2k]})|$ and $|T(H^{[\geq 2k+1]})|$ are odd.\newline

\noindent We now focus on \textbf{grids} $G = P_p \square P_q$, where $P_p$ and $P_q$ are paths of length $p$ and $q$, respectively, with vertex sets $V(P_p) = \{u_0, \dots, u_{p-1}\}$ and $V(P_q) = \{v_0, \dots, v_{q-1}\}$.
Define also $X_j = \{(u_i,v_j) \mid i = 1, \dots, p-1\}$ for all $j \in \{ 0, \dots, q-1\}$, and $Y_i = \{(u_i,v_j) \mid j = 1, \dots, q-1\}$ for all $i \in \{ 0, \dots, p-1\}$.
Additionally, we note $X_{\leq k} = \bigcup_{j=0}^k X_j$ and $X_{\geq k} = \bigcup_{j=k}^{q-1} X_j$ and similarly for $Y$.
A \textbf{grid instance} is an instance $(G, T)$ of Problem \ref{prob:AOP} where $G$ is a grid.
It is worth noticing that for any grid:
\begin{equation*}
	|V(G)| \equiv_2 p \times q, \quad |E(G)| \equiv_2 p + q \tag{G0}\label{decomposition:G0}\\
\end{equation*}

\noindent For any grid $G$, we note $Pairs(G) = Pairs(G[X_0]) \cup Pairs(G[X_{q-1}]) \cup Pairs(G[Y_0]) \cup Pairs(G[Y_{p-1}])$.
A \textbf{bad grid instance} of Problem \ref{prob:AOP} is an instance $(G,T)$ where $G = P_p \square P_q$ is a grid with $p \equiv_2 q \equiv_2 0$, and $T$ satisfies the following conditions:
\begin{enumerate}
	\item $(G[U], T(U))$ is a bad path instance of Problem \ref{prob:AOP} for all $U \in \{X_0 , X_{q-1}, Y_0, Y_{p-1}\}$ .
	\item For all $i \in \{1, \dots, p-2\}$ and $j \in \{1, \dots, q-2\}$ we have that $(u_i,v_j) \in T$.
\end{enumerate}
Note that condition \textbf{1.} is equivalent to say that for any pair $[(u,v),(u',v')] \in Pairs(G)$, we have that $(u,v) \in T$ if and only if $(u',v') \in T$.
So in essence, Theorem \ref{thm:grid_carac} states that for any grid instance $(G,T)$ with $T$ satisfying the parity condition $\calP$, $G$ admits an acyclic $T$-odd orientation if and only if $(G,T)$ is not a bad grid instance.
\newline

\begin{figure}[ht]
	\begin{center}
	\includegraphics[width=0.8\textwidth]{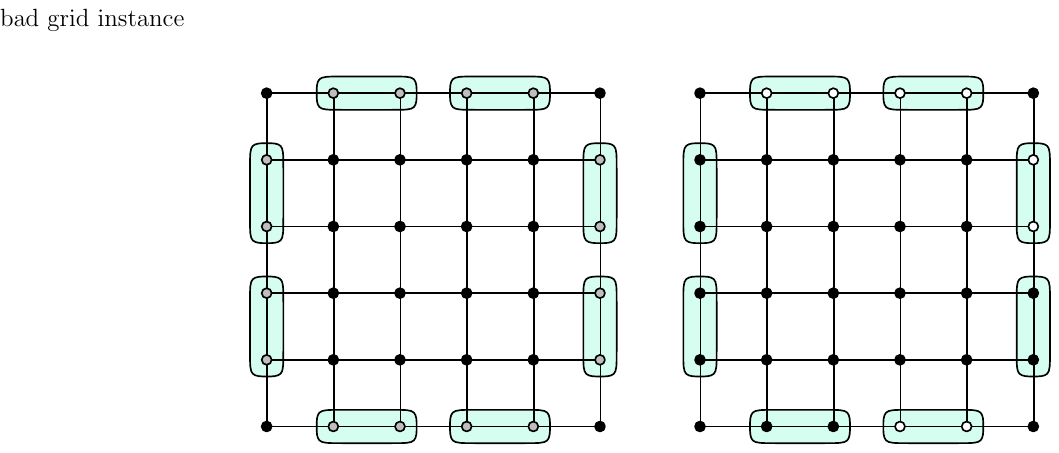}
	\caption{Generic bad grid instances along with an example}
	\label{fig:grid:bad_grid_instance}
	\end{center}
\end{figure}

\noindent It is also worth noticing the following:
\begin{claim}
	Let $(G,T)$ be a grid instance of Problem \ref{prob:AOP} such that $T$ satisfies the parity condition $\calP$.\newline
	\noindent If $(G,T)$ is not a bad grid instance of Problem \ref{prob:AOP}, then $T$ also satisfies $\calS$ and $\overline\calS$.
\end{claim}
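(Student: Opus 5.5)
The plan is to argue by contraposition: assume $T$ satisfies $\calP$ and that $\calS$ or $\overline\calS$ fails, and show that this forces $p \equiv_2 q \equiv_2 0$ and $(G,T)$ to be a bad grid instance. The parity identity (\ref{decomposition:G0}) will drive everything.

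\textbf{Reducing $\calS$ and $\overline\calS$ to non-emptiness.} First I dispose of the side condition ``$|Source(T)|=1 \Rightarrow Source(T)\neq Sink(T)$'' and its mirror. As noted in the introduction, $Source(T)=Sink(T)$ happens exactly when $G$ has no odd-degree vertex.
\begin{itemize}
\item If $p=1$ or $q=1$ and $|V|\geq 2$, the grid is a path with two endpoints of degree $1$.
\item If $p,q\geq 2$ and one of them is at least $3$, the non-corner boundary vertices have degree $3$.
\end{itemize}
So the only exceptions are the singleton, which is trivial, and $P_2\square P_2=C_4$. For $C_4$, a singleton source or sink means $|T|=3$. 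Then $|T|+|E|=3+4$ is odd, which contradicts $\calP$. Hence it remains to show that $Source(T)\neq\emptyset$ and $Sink(T)\neq\emptyset$.

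\textbf{Source nonempty.} If $Source(T)=\emptyset$, then $T=V$. By (\ref{decomposition:G0}), $\calP$ gives $pq+p+q\equiv_2 0$, that is, $(p+1)(q+1)$ is odd, so $p$ and $q$ are both even. With $T=V$, every boundary path $G[X_0],G[X_{q-1}],G[Y_0],G[Y_{p-1}]$ has black endpoints and all its pairs black. So each is a bad path instance, and all interior vertices are in $T$. Thus $(G,T)$ is a bad grid instance, a contradiction.

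\textbf{Sink nonempty.} If $Sink(T)=\emptyset$, then every odd-degree vertex is white and every even-degree vertex is black. In the path case $p=1$ and $q\geq 2$:
\begin{itemize}
\item the endpoints are white and the $q-2$ interior vertices are black, so $|T|=q-2$;
\item $|E|=q-1$, so $|T|+|E|$ is odd, contradicting $\calP$;
\item for $p=q=1$, $T=\emptyset$ gives $Sink(T)=V\neq\emptyset$, so there is nothing to prove.
\end{itemize}
For $p,q\geq 2$:
\begin{itemize}
\item the even-degree vertices are the four corners (degree $2$) and the $(p-2)(q-2)$ interior vertices (degree $4$), so $|T|=(p-2)(q-2)+4\equiv_2 pq$;
\item $\calP$ then again gives $pq+p+q\equiv_2 0$, so $p$ and $q$ are even;
\item each boundary path has black endpoints (the corners) and all intermediate vertices white, so every pair in $Pairs(G)$ is monochromatic and each boundary path is a bad path instance;
\item all interior vertices are in $T$.
\end{itemize}
So $(G,T)$ is again a bad grid instance, a contradiction.

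\textbf{Expected difficulty.} The main point needing care is the degenerate grids ($p$ or $q$ in $\{1,2\}$) and matching the indexing of $Pairs$ to the definition of a bad path instance. The parity computations themselves are routine.
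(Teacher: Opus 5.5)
Your proof is correct and follows essentially the paper's route. In both, $\calS$ or $\overline\calS$ can only fail when $T=V(G)$ or when $V(G)\setminus T$ is exactly the set of odd-degree vertices, $\calP$ rules both out unless $p\equiv_2 q\equiv_2 0$, and in that case both configurations are bad grid instances.

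The difference is organizational. The paper first splits on the parity of the dimensions, using $|V(G)\setminus T|\equiv_2 1$ in the odd case. You instead derive that $p,q$ are even by counting $|T|$ in each failing configuration. You also treat explicitly the $|Source(T)|=1$ side condition and the degenerate cases ($C_4$, paths), which the paper passes over.
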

\begin{proof}
	Let $(G,T)$ be a grid instance of Problem \ref{prob:AOP} such that $T$ satisfies the parity condition $\calP$.
	
	\noindent If $G$ has an odd dimension, then, by \textbf{(a)}, $|V(G)\setminus T| \equiv_2 |V(G)\setminus E(G)| \equiv_2 1$.
Furthermore, any grid with an odd dimension admits at least two vertices of odd degree.
Hence, $T$ does satisfy the conditions $\calS$ and $\overline\calS$.

	\noindent Suppose now that both $p$ and $q$ are even.
If $T$ does not satisfy condition $\calS$, then $T = V(G)$.
And if it does not satisfy condition $\overline\calS$, then $V(G) \setminus T = \{v \in V(G) \mid d(v) \equiv_2 1\}$.
In both cases, $(G,T)$ is a bad grid instance of Problem \ref{prob:AOP}.
\end{proof}

\noindent Thanks to this claim, when considering grid instances, we can focus on the parity condition $\calP$ and on checking if the instance is a bad grid instance or not.
Hence, we drop further reference to the conditions $\calS$ and $\overline\calS$ in this section.
\newline

\noindent In the next proofs, we use multiple $T$-decompositions of a grid instance $(G,T)$ that we shall describe carefully here.
We always assume that $T$ satisfies condition $\calP$.\newline

\noindent \textbf{(i,j)-corner T-decomposition} $\langle V_0,V_1,V_2 \rangle $ is defined as follows (see figure \ref{fig:grid:ij_corner_decomposition}):
\begin{align*}
	& V_0 = \{(u_a, v_b) \mid 0 \leq a < i, 0 \leq b < j\} = \bigcup_{k=0}^{i-1}Y_k^{[<j]} \neq \emptyset\\
	& V_1 = \{(u_a, v_b) \mid i \leq a < p, 0 \leq b < q\} = Y_{\geq i} \\
	& V_2 = \{(u_a, v_b) \mid 0 \leq a < i, j \leq b < q\} = \bigcup_{k=0}^{i-1}Y_k^{[\geq j]}
\end{align*}

\noindent  By Lemma \ref{lem:decomposition_P}, this $T$-decomposition satisfies $\calP$ if and only if: \hfill{(G1)}\label{decomposition:G1}
\begin{align*} 
	&|T_0| \equiv_2 |E(G[V_0])| \text{ i.e. } |T(V_0)| \equiv_2 |E(G[V_0])| \equiv_2 i + j\\
	&|T_1| \equiv_2 |E(G[V_1])|  \text{ i.e. }  |T(V_1)| \equiv_2 |E(G[V_1])| - |Z_1| \equiv_2 (p - i + q) - j
\end{align*}

\noindent \textbf{Inner-grid T-decompositions}: We define here two different $T$-decompositions $\langle V_{out},V_{in} \rangle$ and $\langle V_{in},V_{out} \rangle$ of a grid instance $(G,T)$ (see figure \ref{fig:grid:inner_grid_decomposition}), where $V_{out},V_{in}$ are as follows:
\begin{align*}
	& V_{out} = X_0 \cup Y_{0} \cup X_{q-1} \cup Y_{p-1}\\
    & V_{in} = V \setminus V_{out}
\end{align*}

\noindent By Lemma \ref{lem:decomposition_P}, $\langle V_{out},V_{in} \rangle$ satisfies $\calP$ if and only if	$|T_0| \equiv_2 |E(G[V_{out}])| \text{ i.e. } |T(V_{out})| \equiv_2 0$.\newline

\noindent And by Lemma \ref{lem:decomposition_P}, $\langle V_{in},V_{out} \rangle$ satisfies $\calP$ if and only if $|T_1| \equiv_2 |E(G[V_{out}])| \text{ i.e. } |T(V_{out}) \equiv_2 |E(G[V_{out}])| - |Z_1| \equiv_2 0$.\newline

\noindent Hence, for both inner-grid $T$-decompositions we only have to check whether $|T(V_{out})|\equiv_2 0$ to check if the decomposition satisfies $\calP$.
\hfill{(G2)}\label{decomposition:G2}

\begin{figure}[ht]
\centering
	\begin{subfigure}{0.49\textwidth}
		\centering
		\includegraphics[width=1\textwidth]{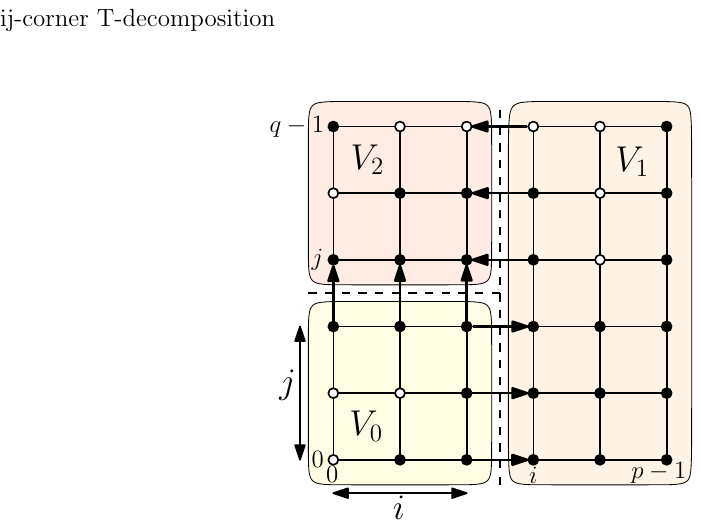}
		\caption{$(i,j)$-corner $T$-decomposition satisfying $\calP$}
		\label{fig:grid:ij_corner_decomposition}
	\end{subfigure}
	\begin{subfigure}{0.49\textwidth}
		\centering
		\includegraphics[width=1\textwidth]{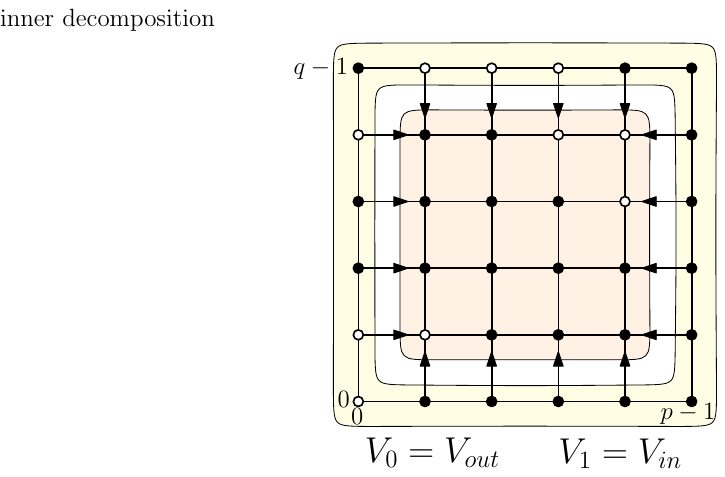}
		\caption{An inner-grid $T$-decomposition satisfying $\calP$}
		\label{fig:grid:inner_grid_decomposition}
	\end{subfigure}
	\caption{}
\end{figure}

\FloatBarrier

\noindent \textbf{Proof of Theorem \ref{thm:grid_carac}}~\newline

\noindent The proof is split in three lemmas, the first two (Lemma \ref{lem:grid:notBadPath} and Lemma \ref{lem:grid:notBadGrid}) proving the sufficient condition, and the last one (Lemma \ref{lem:grid:badGrid}) the necessary condition.

\begin{lemma}\label{lem:grid:notBadPath}
	Let $(G,T)$ be a grid instance of Problem \ref{prob:AOP} such that $T$ satisfies the parity condition $\calP$.\newline
	If there exists $U \in \{X_0 , X_{q-1}, Y_0, Y_{p-1}\}$ such that $(G[U], T(U))$ is not a bad path instance of Problem \ref{prob:AOP} then $G$ admits an acyclic $T$-odd orientation.
\end{lemma}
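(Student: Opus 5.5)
The plan is induction on $p+q$, using $T$-decompositions (Theorem \ref{thm:T-odd decomposition}) that peel a side path off the grid. The inductive statement is the full sufficiency direction of Theorem \ref{thm:grid_carac}: a grid instance satisfying $\calP$ that is not a bad grid instance admits an acyclic $T$-odd orientation. Smaller grids obtained along the way may then be handled by the induction hypothesis. Paths are handled by Lemma \ref{lem:tree} (trees are in $\cC_\calP$), so for path pieces only the parity $\calP$ has to be checked.

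\textbf{Setup.} By the symmetries of the grid, assume $U=Y_0$, the column of $q$ vertices $(u_0,v_0),\dots,(u_0,v_{q-1})$. First treat the case $q$ even. Since $(G[Y_0],T(Y_0))$ is not a bad path instance, Claim \ref{cla:grid:badPath} gives a subpath $L$ of $Y_0$ with an odd number $2k+1$ of vertices, containing an endpoint, with $|T(L)|$ even. After reflecting vertically, take $L=Y_0^{[\leq 2k]}$. Consider the $(1,2k+1)$-corner decomposition: $V_0=L$, $V_1=Y_{\geq 1}$ (a $(p-1)\times q$ grid) and $V_2=Y_0^{[\geq 2k+1]}$ (a path). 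By (G1), $V_0$ satisfies $\calP$, since $|T(V_0)|$ is even and $i+j=2k+2$ is even.

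\textbf{Parity of the remaining pieces.} I also use the alternative ordering $\langle V_0,V_2,V_1\rangle$.
\begin{itemize}
\item In $\langle V_0,V_1,V_2\rangle$, every vertex of $V_2$ has a neighbour in column $1$, and its first vertex is also adjacent to $V_0$. So $|Z_2|\equiv |V_2|+1$.
\item In $\langle V_0,V_2,V_1\rangle$, only the first vertex of $V_2$ is adjacent to earlier pieces, so $|Z_2|=1$.
\end{itemize}
These differ by $|V_2|=q-2k-1$, which is odd. Hence $V_2$ satisfies $\calP$ in exactly one of the two orderings. In that ordering, $V_0$ and $V_2$ both satisfy $\calP$, and Lemma \ref{lem:decomposition_P} gives $\calP$ for the grid piece $V_1$ as well. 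The two path pieces are then good by Lemma \ref{lem:tree}. The remaining task is to make the grid piece $V_1$ good via the induction hypothesis.

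\textbf{Main obstacle: the grid piece $V_1$.} If $p$ is even, $V_1$ has the odd dimension $p-1$. A grid with an odd dimension is never a bad grid instance, so $V_1$ is handled by induction, given the base cases $p-1=1$ (a path) and the obvious small grids. The hard cases are these.
\begin{itemize}
\item \emph{$p$ odd.} Then $V_1$ has two even dimensions and could be a bad grid instance. Here I would choose which column, or which end of $L$, to peel so that the boundary pairs of $V_1$ are broken. If both choices fail, I would peel a two-column strip instead, i.e. the $(2,j)$-corner variant, so that the grid piece keeps the odd dimension $p-2$.
\item \emph{Odd side length.} If $q$ is odd, Claim \ref{cla:grid:badPath} does not apply. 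In this case I would first prove directly, by the same peeling with $L=\{(u_0,v_0)\}$ or $L$ equal to the whole side, that every grid with an odd dimension is solvable under $\calP$.
\end{itemize}
I expect this odd-dimension base statement, and the bookkeeping ensuring that the peeled grid is never a bad grid instance, to be the delicate part. The parity argument in the second step is routine.
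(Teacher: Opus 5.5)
Your treatment of the case where $p$ and $q$ are both even is sound and close to the paper's Case b. The $(1,2k+1)$-corner peel leaves the $(p-1)\times q$ grid $Y_{\geq 1}$, which has an odd dimension. Your choice between $\langle V_0,V_1,V_2\rangle$ and $\langle V_0,V_2,V_1\rangle$, whose $|Z_2|$ differ by $|V_2|=q-2k-1$ (odd), neatly merges the paper's sub-cases b.1 and b.2.

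However, everything rests on the claim that \emph{every grid with an odd dimension admits an acyclic $T$-odd orientation under $\calP$ alone}. You leave this unproved, and it is the real content of the lemma (the paper's Case a). The method you sketch for it also fails. With $q$ odd, suppose the pieces of $Y_0$ always precede $Y_{\geq 1}$. Then $L=Y_0$ gives $V_2=\emptyset$. For $L=\{(u_0,v_0)\}$, or any odd prefix, $|V_2|$ is even, so both orderings force $|T(Y_0)|\equiv_2 0$. A concrete failure: $p=4$, $q=3$, $T=Y_0\cup Y_3\cup\{(u_1,v_1)\}$, with $|T|=7\equiv_2|E|=17$. No nonempty subpath $L$ of $Y_0$ or $Y_3$ satisfies $\calP$ in $G[L]$, because $T(L)=L$ and $|L|\not\equiv_2 |L|-1$. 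Peeling a row instead leaves a $4\times 2$ grid, which is even$\times$even and outside the odd-dimension statement.

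The missing idea is a cut that keeps the odd dimension and fixes the parity. For $q$ odd, $|T|\equiv_2 p+1$, so some column $Y_k$ has $|T(Y_k)|$ even, equivalently $|Y_k\setminus T|$ odd. The paper takes the smallest such $k$, reflecting if needed so that $k<p-1$. It then splits $G$ as $\langle Y_{\leq k},Y_{>k}\rangle$; both pieces still have $q$ rows, and $|T(Y_{\leq k})|\equiv_2 k+1+q$ gives $\calP$. Induction on $p$ then closes the argument. Equivalently, you could use $\langle Y_0,Y_{\geq 1}\rangle$ when $|T(Y_0)|$ is even and $\langle Y_{\geq 1},Y_0\rangle$ when it is odd. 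In the latter ordering, $Z=Y_0$ flips the whole column, which is exactly the flexibility your ``$L$ first'' scheme lacks.

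Once this base statement is proved, your ``$p$ odd'' bullet is moot, since $G$ itself then has an odd dimension. As written, that bullet is also circular. Invoking the induction hypothesis on an even$\times$even piece requires the full sufficiency direction of Theorem~\ref{thm:grid_carac}. That includes grids all of whose sides are bad paths (Lemma~\ref{lem:grid:notBadGrid}), and your inductive step never handles them. The induction should be restricted to grids with an odd dimension.
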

\begin{proof}
	Let $(G,T)$ be a grid instance of Problem \ref{prob:AOP} such that $T$ satisfies the parity condition $\calP$.

	\phantomsection
	\label{proof:notBadPath:case:a}
	\noindent \textbf{a.~} Suppose $G$ has an odd dimension, say $q \equiv_2 1$.
In this case, both $(G[Y_0],T(Y_0))$ and $(G[Y_{p-1}],T(Y_{p-1}))$ are not bad path instances.
We proceed by induction on $p$.
For $p = 1$, $G$ is a path and the result follows from Lemma \ref{lem:tree}.
For $p > 1$, since $T$ satisfies $\calP$ and $|E(G)| \equiv_2 p + 1$, we have $|T| \equiv_2 p + 1$.
Thus, $|V(G) \setminus T| = pq - (p+1)$ is odd, so there exists $k \in \{0, \dots, p-1\}$ such that $|Y_k\setminus T(Y_k)|$ is odd.
Let $k$ be the smallest such index and up to symmetry, we can suppose $k< p-1$.
We now consider the $(k+1, q)$-corner $T$-decomposition $\langle V_0, V_1, \emptyset \rangle = \langle Y_{< k+1}, Y_{\geq k+1}, \emptyset \rangle $ (see Figure \ref{fig:grid:not_bad_path_case1}).
We prove that this decomposition satisfies $\calP$.
Thanks to Lemma \ref{lem:decomposition_P} we only get to check wether $|T_0| = |T(V_0)| \equiv_2 |E(G[V_0])|$.
Recalling that $q \equiv_2 1$, we have:
	$$|T(V_0)| = \sum_{0\leq i < k+1}|T(Y_i)| \equiv_2 qk + q+1 \equiv_2 k+1 + q$$

	\noindent Moreover, each $V_i$ induces a grid with an odd dimension, so by induction, each $G[V_i]$ admits an acyclic $T_i$-odd orientation.
Therefore,  $\langle V_0, V_1, \emptyset \rangle $ is a good $T$-decomposition and $G$ admits an acyclic $T$-odd orientation by Theorem \ref{thm:T-odd decomposition}.\newline

	\noindent\textbf{b.~} Suppose both $p$ and $q$ are even, and there exists $U \in \{X_0 , X_{q-1}, Y_0, Y_{p-1}\}$ such that $(G[U], T(U))$ is not a bad path instance.
Without loss of generality, let $U = Y_0$.
Since $T$ satisfies $\calP$, $|T|$ is even.

	\phantomsection
	\label{proof:notBadPath:case:b1}
	\noindent\hspace*{2ex}\textbf{b.1.~} If $|T(Y_0)|$ is odd, consider the $(1, q)$-corner $T$-decomposition $\langle V_0, V_1, \emptyset \rangle = \langle Y_{0}, Y_{\geq 1}, \emptyset \rangle $ (see Figure \ref{fig:grid:not_bad_path_case2a}).\newline We prove that this decomposition satisfies $\calP$ and thanks to Lemma \ref{lem:decomposition_P} we only get to check wether $|T_0| = |T(V_0)| \equiv_2 |E(G[V_0])|$.
But by hypothesis:
	$$|T(V_0)| = |T(Y_0)| \equiv_2 1 + q$$

	\begin{adjustwidth}{2ex}{0pt}
	\phantomsection
	\label{proof:notBadPath:case:b2}
	\noindent\textbf{b.2.~} If $|T(Y_0)|$ is even, because $(Y_0, T(Y_0))$ is not a bad path instance of problem \ref{prob:AOP}, by Claim \ref{cla:grid:badPath}, there exists $k \in \{0, \dots, \frac{q-2}{2}\}$ such that $|T(Y_0^{[\leq 2k]})|$ is even.
In this case, consider the $(1, 2k+1)$-corner $T$-decomposition $\langle V_0, V_1, V_2 \rangle = \langle Y_0^{[< 2k+1]}, Y_{\geq 1}, Y_0^{[\geq 2k+1]} \rangle $ (see Figure \ref{fig:grid:not_bad_path_case2b}).
To prove that this decomposition satisfies $\calP$ we refer to \hyperref[decomposition:G1]{(G1)}:
	\begin{align*}
		|T(V_0)| &= |T(Y_0^{[< 2k+1]})| \equiv_2 1 + 2k+1\\
		|T(V_1)| &= |T(Y_{\geq 1})| \equiv_2 |T| - |T(Y_0)| \equiv_2 p+q - 0 \equiv_2 p - 1 + q - (2k + 1) &\text{(by \hyperref[decomposition:G0]{(G0)})}
	\end{align*}
	In both cases, each $V_i$ induces a grid with an odd dimension, so by Case 1, each $G[V_i]$ admits an acyclic $T_i$-odd orientation.
Thus, $\langle V_0, V_1, V_2 \rangle $ is a good $T$-decomposition and $G$ admits an acyclic $T$-odd orientation by Theorem \ref{thm:T-odd decomposition}.
	\end{adjustwidth}
\end{proof}

\begin{figure}[ht]
\centering
	\begin{subfigure}{0.3\textwidth}
		\includegraphics[width=1\textwidth]{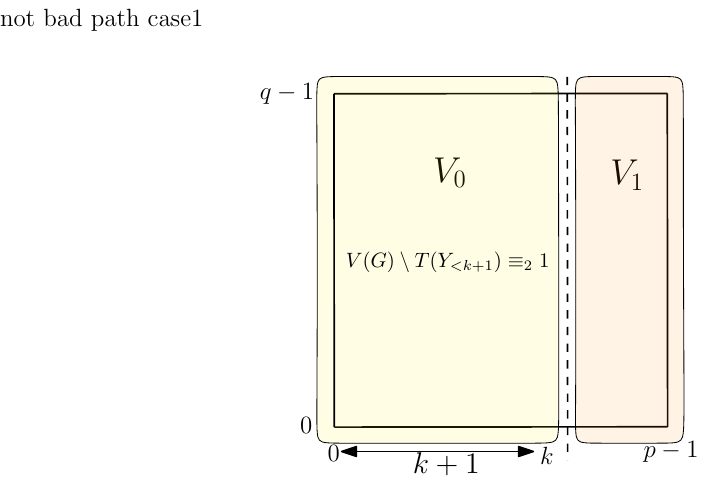}
		\caption{\hyperref[proof:notBadPath:case:a]{Case a}: $q$ odd}
		\label{fig:grid:not_bad_path_case1}
	\end{subfigure}
	\hspace*{0.3cm}
	\begin{subfigure}{0.3\textwidth}
		\includegraphics[width=1\textwidth]{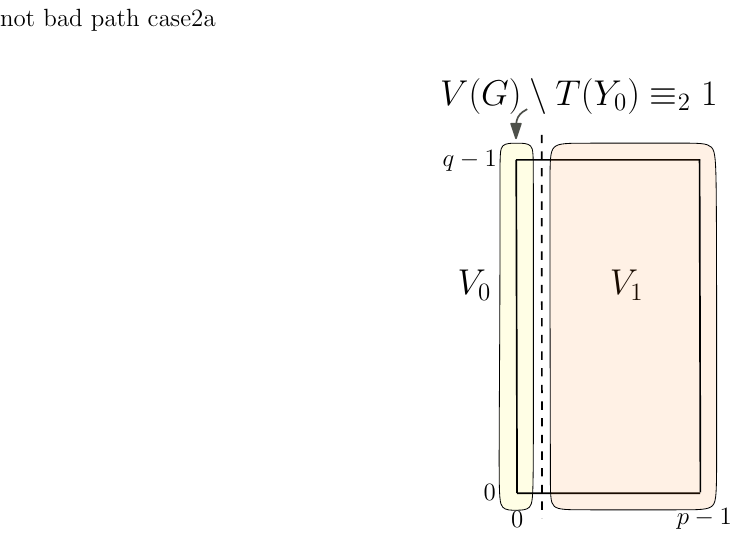}
		\caption{\hyperref[proof:notBadPath:case:b1]{Case b.1}: $p,q$ even, $(G[Y_0], T(Y_0))$ is not a bad path instance and $|T(Y_0)|$ is odd}
		\label{fig:grid:not_bad_path_case2a}
	\end{subfigure}
	\hspace*{0.3cm}
	\begin{subfigure}{0.3\textwidth}
			\includegraphics[width=1\textwidth]{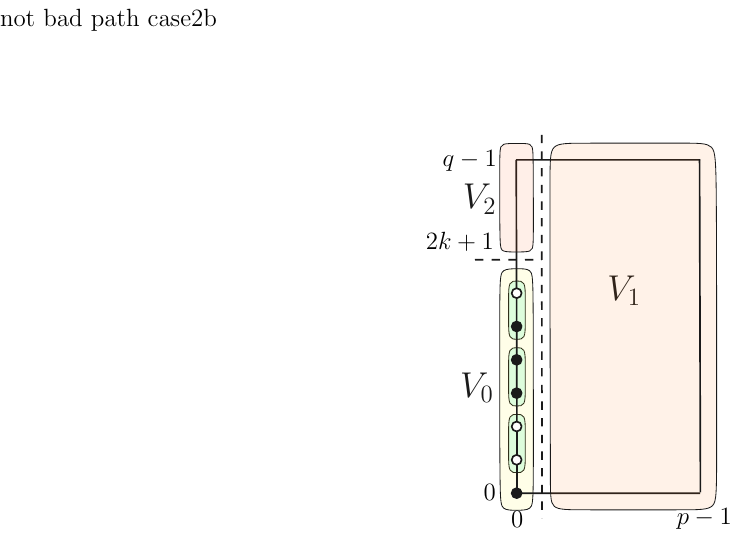}
			\caption{\hyperref[proof:notBadPath:case:b2]{Case b.2}: $p,q$ even, $(G[Y_0], T(Y_0))$ is not a bad path instance and $|T(Y_0)|$ is even}
			\label{fig:grid:not_bad_path_case2b}
	\end{subfigure}
	\caption{Illustration of the three decompositions used in the proof of Lemma \ref{lem:grid:notBadPath}.}
\end{figure}

\begin{lemma}\label{lem:grid:notBadGrid}
	Let $(G,T)$ be a grid instance of Problem \ref{prob:AOP} such that $T$ satisfies the parity condition $\calP$.\newline
	If $(G,T)$ is not a bad grid instance of Problem \ref{prob:AOP}, then $G$ admits an acyclic $T$-odd orientation.
\end{lemma}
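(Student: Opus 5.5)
By Lemma \ref{lem:grid:notBadPath}, we may assume that all four boundary paths $(G[U],T(U))$ for $U\in\{X_0,X_{q-1},Y_0,Y_{p-1}\}$ are bad path instances. If $p$ or $q$ is odd, part a of the proof of Lemma \ref{lem:grid:notBadPath} already gives the orientation, since a path of odd length is never a bad path instance. So we may also assume $p\equiv_2 q\equiv_2 0$. Because $(G,T)$ is not a bad grid instance, condition 2 must then fail: some inner vertex $(u_i,v_j)$ with $1\le i\le p-2$ and $1\le j\le q-2$ is white. The degenerate cases $p=2$ or $q=2$ have no inner vertex, so they are bad grid instances and are excluded. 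Hence $p,q\ge 4$.

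The plan is to use the inner-grid $T$-decomposition $\langle V_{in},V_{out}\rangle$. The inner part $G[V_{in}]$ is a $(p-2)\times(q-2)$ grid, with both dimensions even. The outer part $G[V_{out}]$ is a cycle of length $2(p+q)-4$.

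First, parity. The boundary paths are bad path instances, so their four corners lie in $T$ and each of their inner pairs is monochromatic. Counting along the cycle, $|T(V_{out})|$ is then $4$ plus an even number, hence even. By \hyperref[decomposition:G2]{(G2)}, the decomposition satisfies $\calP$.

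Second, the outer cycle. It belongs to $\cC_{\calP\calS}$ by Lemma \ref{lem:cycle}, so it suffices that $T_{out}$ contain no white vertex, i.e.\ that $\calS$ holds. In $\langle V_{in},V_{out}\rangle$, the set $Z_{out}$ consists of the non-corner boundary vertices, each of which has exactly one inner neighbour. So $T_{out}=T(V_{out})\triangle Z_{out}$ flips every non-corner boundary vertex. Since the corners are black, $T_{out}$ is not all of $V_{out}$, hence $\calS$ holds, unless every non-corner boundary vertex is white. In that exceptional case, I would instead use the order $\langle V_{out},V_{in}\rangle$. The outer cycle then keeps $T(V_{out})$, which contains white vertices, so Lemma \ref{lem:cycle} applies to it directly.

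Third, the inner grid. In the order $\langle V_{in},V_{out}\rangle$ we have $T_{in}=T(V_{in})$. This set contains the white vertex $(u_i,v_j)$, so we may hope to conclude by induction on $p+q$ via Lemma \ref{lem:grid:notBadGrid} applied to the smaller grid. The difficulty is that the inner grid may still be a bad grid instance: it has a white vertex, but that vertex could lie on the inner boundary while all of the inner grid's own inner vertices are black. In the order $\langle V_{out},V_{in}\rangle$, the inner boundary instead gets flipped by $Z_{in}$, so one of the two orders should make the inner grid either non-bad or not need it at all.

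The main obstacle is coordinating these choices. I would try the following. Pick the white inner vertex $w$ and compare the two inner-grid orders. If neither works, fall back on an $(i,j)$-corner $T$-decomposition that isolates $w$ in a strip with an odd dimension, which is handled by case a of Lemma \ref{lem:grid:notBadPath}. The counterpart of that strip is a grid whose boundary path is a non-bad path instance, because $w$ lies on it, and Lemma \ref{lem:grid:notBadPath} applies to it. Concretely, if $w=(u_i,v_j)$, I would cut along column $i$ or $i+1$, choosing whichever leaves an odd-sized left part. That places $w$ on the boundary of a subgrid, and a parity check via \hyperref[decomposition:G1]{(G1)} then decides between the two cut choices.
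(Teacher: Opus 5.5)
\textbf{Gap: the case where the inner grid is itself a bad grid instance is not handled.} Your treatment of the case where $(G[V_{in}],T(V_{in}))$ is \emph{not} a bad grid instance is sound and close to the paper's Part a. You still need to justify why the inner grid stays non-bad in the order $\langle V_{out},V_{in}\rangle$: a white inner corner or deep vertex is untouched by $Z_{in}$, and a mismatched boundary pair has both members flipped.

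The complementary case, where $(G[V_{in}],T(V_{in}))$ is a bad grid instance, is the heart of the paper's proof (its Part b). There the only white vertices of the inner grid form monochromatic pairs on its boundary. Your hope that one of the two inner-grid orders rescues this case is false. $Z_{in}$ is exactly the set of non-corner vertices of the inner boundary, so $T(V_{in})\triangle Z_{in}$ flips both members of every pair and leaves the inner corners and deeper vertices unchanged. Hence it is bad if and only if $T(V_{in})$ is.

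Your fallback does not close the case either. Since $p$ and $q$ are even, a straight cut leaving an odd-sized first part produces two pieces that both have an odd dimension. Both are then covered by case a of Lemma \ref{lem:grid:notBadPath} wherever $w$ sits, so the only real requirement is the parity condition (G1), e.g.\ $|T(Y_{<k})|\equiv_2 1$, and nothing forces it. Take $p=4$, $q=6$ and $T=V\setminus\{(u_1,v_2),(u_1,v_3),(u_2,v_2),(u_2,v_3)\}$. Then $T$ satisfies $\calP$, the four outer paths are bad path instances, and the instance is not bad. Yet $T(V_{out})=V_{out}$, and the inner $2\times 4$ grid is bad in both orders. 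Moreover $|T(Y_{<1})|=6$, $|T(Y_{<3})|=14$, $|T(X_{<1})|=4$, $|T(X_{<3})|=10$ and $|T(X_{<5})|=16$ are all even. So every decomposition you propose fails on this instance.

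\textbf{Missing idea.} You need to cut through the white pair itself and accept even-by-even pieces, which are handled by the induction hypothesis rather than by Lemma \ref{lem:grid:notBadPath}. Up to symmetry, let the white pair be $[x,x']=[(u_1,v_{2k}),(u_1,v_{2k+1})]$ on the inner boundary. The paper compares it with the mirrored pair $[y,y']$ on column $u_{p-2}$ and with the outer endpoints $(u_1,v_0)$, $(u_1,v_{q-1})$, giving three cases.
\begin{enumerate}
\item If $y,y'\in T$, your odd horizontal cut $\langle X_{<2k+1},X_{\geq 2k+1}\rangle$ does work; this is case b.1.
\item If the two outer endpoints differ, say $(u_1,v_0)\in T$, a $(2,2k+1)$-corner decomposition works. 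The middle piece $Y_{\geq 2}$ is non-bad because the black $(u_2,v_0)$ is flipped into a white corner.
\item Otherwise the split $\langle Y_{<2},Y_{\geq 2}\rangle$ works. The black inner corner $(u_1,v_1)$ and the white $x$ force a mismatched pair on $Y_1$, so the left piece is non-bad. The right piece is non-bad either because $y$ is a white interior vertex (when $p\geq 6$), or because the flip $Z_1=Y_2$ creates a mismatch on $Y_2$ (when $p=4$).
\end{enumerate}
In the example above, it is this last decomposition that succeeds.
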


\begin{proof}
	Let $(G,T)$ be a grid instance of Problem \ref{prob:AOP} such that $T$ satisfies the parity condition $\calP$.
	Lemma \ref{lem:grid:notBadPath} covers the cases where $G$ has an odd dimension, thus we assume $p,q \equiv_2 0$ and since $T$ satisfies $\calP$, we have that $|T| \equiv_2 0$.\newline

	\noindent We proceed by induction on the size of the grid.\newline
	\textbf{Init case:} $p=2$ and $q\geq 2$.
	If $(G,T)$ is not a bad grid instance, then there exists $i \in \{0,1\}$ such that $(G[Y_i], T(Y_i))$ is not a bad path instance.
	Thus, Lemma \ref{lem:grid:notBadPath} applies and $G$ admits an acyclic $T$-odd orientation.
	We take care of the case $p\geq 2$ and $q = 2$ by symmetry.\newline

	\noindent \textbf{Induction:} We suppose $p \geq 4$ and $q \geq 4$ and every non-bad grid instance $(G' = P_{p'} \square P_{q'}, T')$ of Problem \ref{prob:AOP} such that $p' < p$ or $q' < q$ and $T'$ satisfies $\calP$ in $G'$ admits an acyclic $T'$-odd orientation.

	\noindent From Lemma \ref{lem:grid:notBadPath}, if there exists $U \in \{X_0 , X_{q-1}, Y_0, Y_{p-1}\}$ such that $(G[U], T(U))$ is not a bad path instance of Problem \ref{prob:AOP}, then $G$ admits an acyclic $T$-odd orientation.
	So, in what follows, we can assume such a $U$ does not exist.
	Note that this implies $|T(U)| \equiv_2 0$ for all $U \in \{X_0 , X_{q-1}, Y_0, Y_{p-1}\}$.\newline

	\noindent Define $V_{out} = X_0 \cup Y_0 \cup X_{q-1} \cup Y_{p-1}$ and $V_{in} = V(G) \setminus V_{out}$.
	The rest of the proof is split into two parts in which we prove that $G$ admits an acyclic $T$-odd orientation.
	In the first part, we suppose $(G[V_{in}],T(V_{in}))$ is not a bad grid instance of Problem \ref{prob:AOP}.
	In the second part, we suppose $(G[V_{in}],T(V_{in}))$ is a bad grid instance of Problem \ref{prob:AOP} such that $T(V_{in}) \neq V_{in}$.
	The only remaining case not considered here corresponds to $T(V_{in}) = V_{in}$, which is precisely the case where $(G,T)$ is a bad grid instance.
	\newline

	\noindent \textbf{Part a.~} We suppose here that $(G[V_{in}],T(V_{in}))$ is not a bad grid instance of Problem \ref{prob:AOP}.
	Recall that, in a $T$-decomposition, $T_i = Z_i \triangle T(V_i)$ where $Z_i$ is the set of vertices of $V_i$ that have an odd number of neighbors in $V_0 \cup \dots \cup V_{i-1}$.\newline

	\noindent We distinguish two cases illustrated in Figures \ref{fig:grid:bad_grid_Part1_case1}, \ref{fig:grid:bad_grid_Part1_case2a} and \ref{fig:grid:bad_grid_Part1_case2b}. \newline

    \begin{figure}[ht]
	\centering
		\begin{subfigure}{0.3\textwidth}
			\includegraphics[width=1\textwidth]{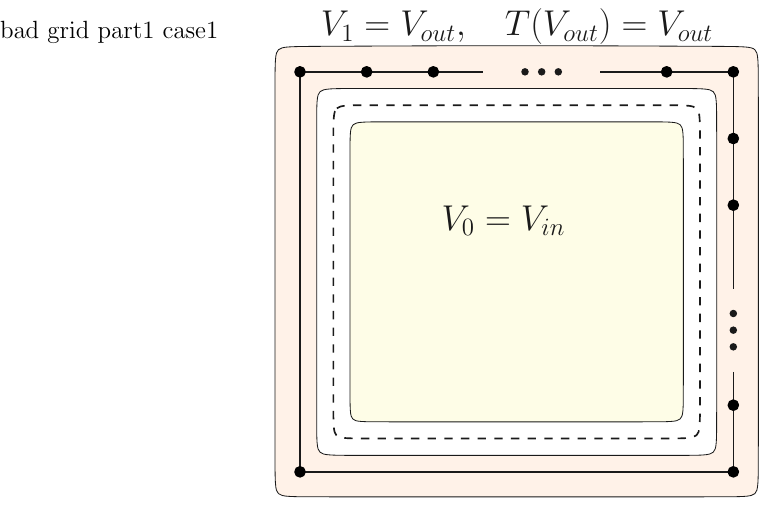}
			\caption{\hyperref[proof:notBadGrid:case:a1]{Case a.1}: $T(V_{out}) = V_{out}$}
			\label{fig:grid:bad_grid_Part1_case1}
		\end{subfigure}
		\begin{subfigure}{0.3\textwidth}
			\includegraphics[width=1\textwidth]{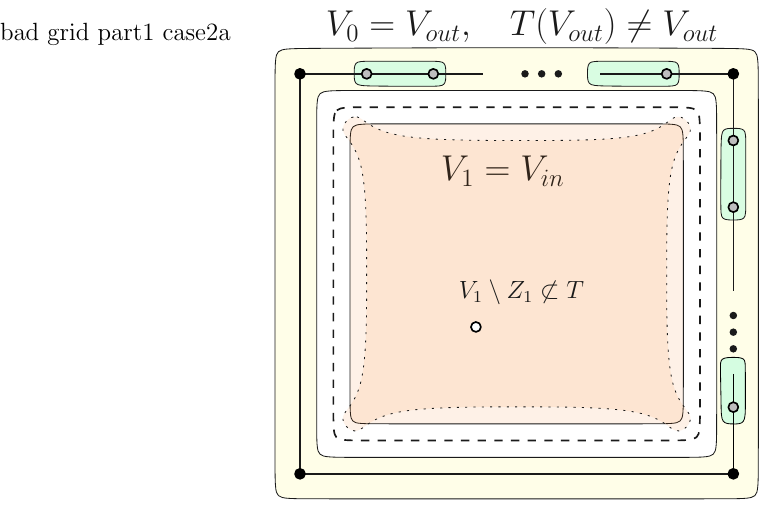}
			\caption{\hyperref[proof:notBadGrid:case:a2i]{Case a.2.i}: $T(V_{out}) \neq V_{out}$ and $V_1 \setminus Z_1 \not\subset T$}
			\label{fig:grid:bad_grid_Part1_case2a}
		\end{subfigure}
		\begin{subfigure}{0.3\textwidth}
			\includegraphics[width=1\textwidth]{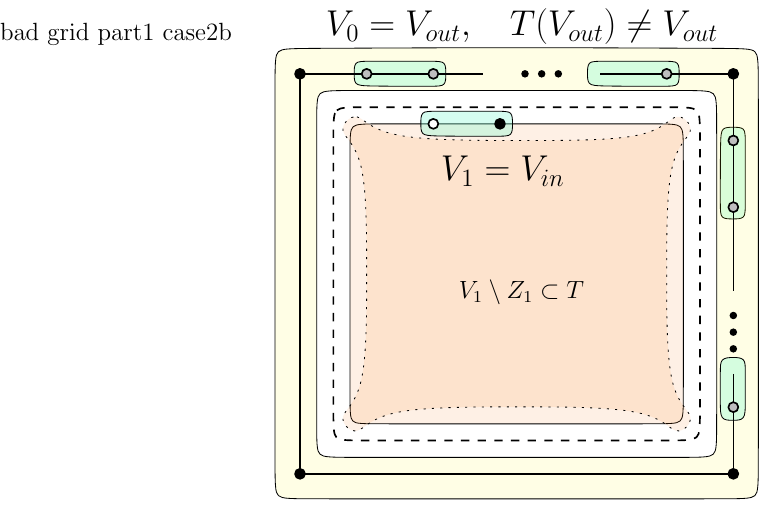}
			\caption{\hyperref[proof:notBadGrid:case:a2ii]{Case a.2.ii}: $T(V_{out}) \neq V_{out}$ and $V_1 \setminus Z_1 \subset T$}
			\label{fig:grid:bad_grid_Part1_case2b}
		\end{subfigure}
		\caption{Illustration of the three decompositions used in the proof of Lemma \ref{lem:grid:notBadGrid} Part a.}
    \end{figure}

	\phantomsection
	\label{proof:notBadGrid:case:a1}
	\noindent \textbf{a.1.~}If $T(V_{out}) = V_{out}$, we prove that the inner-grid $T$-decomposition $\langle V_0, V_1 \rangle  = \langle V_{in},V_{out} \rangle $ is good (see figure \ref{fig:grid:bad_grid_Part1_case1}) and apply Theorem \ref{thm:T-odd decomposition}. Since $|T(V_{out})| = |V_{out}| \equiv_2 0$, the decomposition satisfies $\calP$ by \hyperref[decomposition:G2]{(G2)}.
	By hypothesis $(G[V_0],T_0)$ is not a bad grid instance of Problem \ref{prob:AOP}, so $G[V_0]$ admits an acyclic $T_0$-odd orientation by induction.
	And $G[V_1]$ is a cycle with $T_1 = V_{out} \triangle Z_1 \neq  V_{out}$ since $Z_1 \neq \emptyset$.
	So $T_1$ also satisfies $\calS$ in $G[V_1]$ and $G[V_1$ admits an acyclic $T_1$-odd orientation by Lemma \ref{lem:cycle}.
	\newline

	\noindent \textbf{a.2.~}Otherwise, we prove that the inner-grid $T$-decomposition $\langle V_0, V_1 \rangle  = \langle V_{out},V_{in} \rangle $ is good (see figure \ref{fig:grid:bad_grid_Part1_case2a} and \ref{fig:grid:bad_grid_Part1_case2b}) and then apply Theorem \ref{thm:T-odd decomposition}.
	First, the decomposition satisfies $\calP$: indeed, since $(G[U], T(U))$ is a bad path instance of Problem \ref{prob:AOP} for all $U \in \{X_0 , X_{q-1}, Y_0, Y_{p-1}\}$, we have $|T(V_{out})| \equiv_2 0$ and \hyperref[decomposition:G2]{(G2)} ensures that $\calP$ is thus satisfied. 
	\noindent Moreover, $G[V_0]$ is a cycle with $T_0 = T(V_{out}) \neq V_{out}$, so $G[V_0]$ admits an acyclic $T_0$-odd orientation by Lemma \ref{lem:cycle}.
	It remains to check that $(G[V_1],T_1)$ is not a bad grid instance of Problem \ref{prob:AOP}, so that it admits an acyclic $T_1$-odd orientation by induction.
	Recall $V_1 = V_{in}$ and $T_1 = T(V_{in}) \triangle Z_1$.\newline
	\begin{adjustwidth}{2ex}{0pt}
		\phantomsection
	\label{proof:notBadGrid:case:a2i}
	\noindent \textbf{a.2.i}~If there exists a vertex $(u, v) \in V_1\setminus T$ that is not in $Z_1$, then $(u,v)$ is not in $T_1$, which ensures that $(G[V_1], T_1)$ is not a bad grid instance (see Figure \ref{fig:grid:bad_grid_Part1_case2a}).
	\end{adjustwidth}

	\begin{adjustwidth}{2ex}{0pt}
	\phantomsection
	\label{proof:notBadGrid:case:a2ii}
	\noindent \textbf{a.2.ii}~Else, since $(G[V_{in}], T(V_{in}))$ is not a bad grid instance, there must exist $[(u,v), (u',v')] \in Pairs(G[V_{in}])$ such that $(u,v) \in T(V_{in})$ if and only if $(u',v') \notin T(V_{in})$ (see Figure \ref{fig:grid:bad_grid_Part1_case2b}).
	But since both $(u,v)$ and $(u',v')$ belong to $Z_1$, they also satisfy $(u,v) \in T_1$ if and only if $(u',v') \notin T_1$.
	 Thus, $(G[V_1], T_1)$ is not a bad grid instance.\newline
	\end{adjustwidth}

	\noindent \textbf{Part b.~} Suppose now that $(G[V_{in}],T(V_{in}))$ is a bad grid instance of Problem \ref{prob:AOP} such that $T(V_{in}) \neq V_{in}$.
	Note that it implies that $p\geq 6$ or $q\geq 6$ and that it must exist $[x, x'] \in Pairs(G[V_{in}])$ such that $x, x' \not\in T(V_{in})$.
	Up to symmetry, we can assume that $q\geq 6$ and $[x, x'] = [(u_1,v_{2k}), (u_1,v_{2k+1})]$ for some $k \in \{1, \ldots, \frac{q-4}{2}\}$.\newline

	\noindent We distinguish three cases illustrated in Figures \ref{fig:grid:bad_grid_Part2_case1}, \ref{fig:grid:bad_grid_Part2_case2} and \ref{fig:grid:bad_grid_Part2_case3} (gray vertices in the same green set are either all black or all white). \newline

	\begin{figure}[ht]
		\centering
  		\begin{subfigure}[b]{0.3\textwidth}
				\includegraphics[width=1\textwidth]{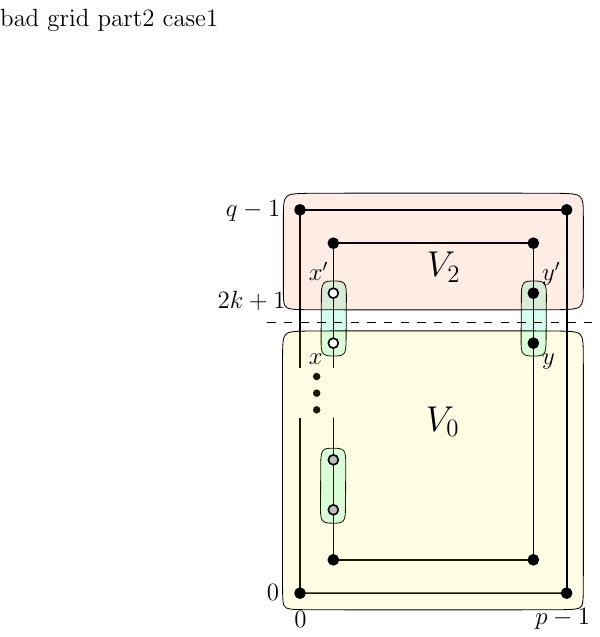}
				\caption{\hyperref[proof:notBadGrid:case:b1]{Case b.1}: $y,y' \in T$}
				\label{fig:grid:bad_grid_Part2_case1}
		\end{subfigure}
		\hfill
		\begin{subfigure}{0.3\textwidth}
			\includegraphics[width=1\textwidth]{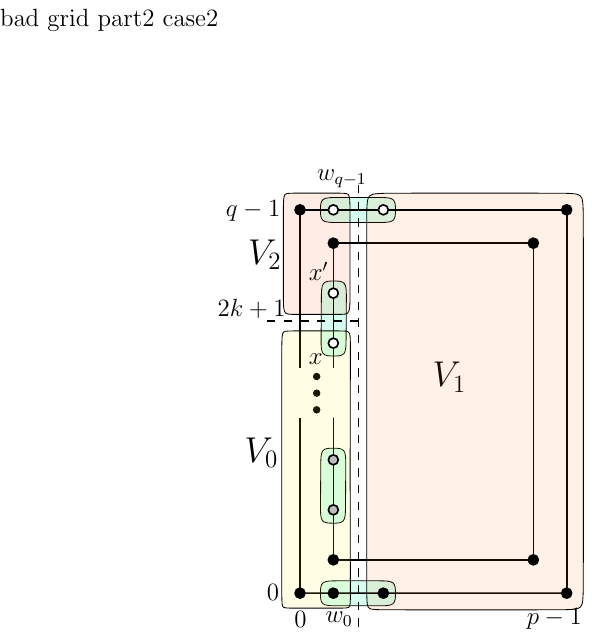}
			\caption{\hyperref[proof:notBadGrid:case:b2]{Case b.2}: $w_0 \in T$, $w_{q-1} \not \in T$}
			\label{fig:grid:bad_grid_Part2_case2}
		\end{subfigure}
		\hfill
		\begin{subfigure}{0.3\textwidth}
			\includegraphics[width=1\textwidth]{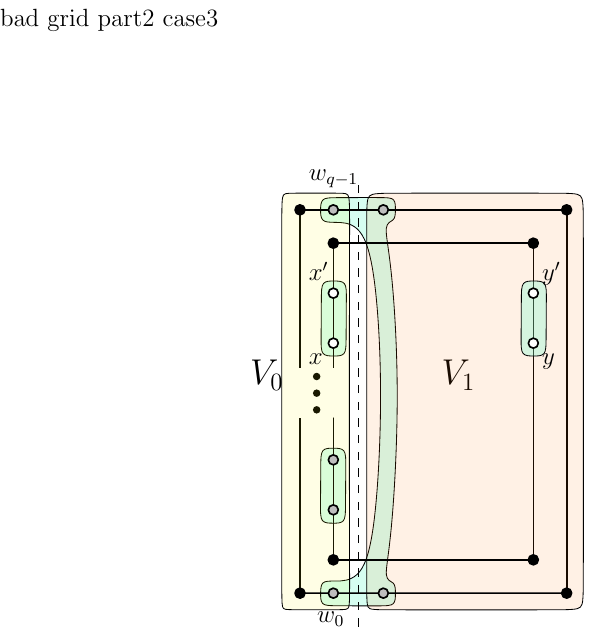}
			\caption{\hyperref[proof:notBadGrid:case:b3]{Case b.3}: $y,y' \not \in T$ and either both $w_0, w_{q-1} \in T$ or both $w_0, w_{q-1} \not \in T$}
			\label{fig:grid:bad_grid_Part2_case3}
		\end{subfigure}
		\caption{Illustration of the three decompositions used in the proof of Lemma \ref{lem:grid:notBadGrid} Part b.}
	\end{figure}

	\phantomsection
	\label{proof:notBadGrid:case:b1}
	\noindent \textbf{b.1.~}Assume that the pair $[y, y'] = [(u_{p-2},v_{2k}), (u_{p-2},v_{2k+1})]$ is such that $y,y' \in T$.\newline
		We prove that the $(p, 2k+1)$-corner $T$-decomposition $\langle V_0, \emptyset, V_2 \rangle = \langle X_{< 2k+1}, \emptyset, X_{\geq 2k+1} \rangle$ is good (see figure \ref{fig:grid:bad_grid_Part2_case1}) to apply Theorem \ref{thm:T-odd decomposition}.
		From the current hypothesis and \hyperref[decomposition:G1]{(G1)}, the decomposition satisfies $\calP$: indeed, $T(V_0) \equiv_2 1 \equiv_2 p + 2k+1$.

		\noindent Moreover, $V_0$ and $V_2$ induce grids with an odd dimension, so Lemma \ref{lem:grid:notBadPath} ensures that $G[V_0]$ and $G[V_2]$ admit an acyclic $T_0$-odd and $T_2$-odd orientation respectively.\newline

	\phantomsection
	\label{proof:notBadGrid:case:b2}
	\noindent \textbf{b.2.~} Assume that $w_0 = (u_1,v_0) \in T$ and $w_{q-1} = (u_1,v_{q-1}) \not \in T$.
\newline
		We now prove that the $(2, 2k+1)$-corner $T$-decomposition $\langle V_0, V_1, V_2 \rangle = \langle Y_0^{[< 2k+1]} \cup Y_1^{[< 2k+1]}, Y_{\geq 2}, Y_0^{[\geq 2k+1]} \cup Y_1^{[\geq 2k+1]} \rangle$ of $G$ is good (see Figure \ref{fig:grid:bad_grid_Part2_case2}).
		Similarly, our current hypothesis and \hyperref[decomposition:G1]{(G1)} certify that the decomposition satisfies $\calP$:
		\begin{align*}
			T(V_0) &\equiv_2 1 \equiv_2 2 + 2k+1 \\
			T(V_1) &\equiv_2 |T(Y_{\geq 2})| \equiv_2 1 \equiv_2 p - 2 + q - (2k+1)
		\end{align*}

		\noindent Notice also that $G[V_0]$ and $G[V_2]$ are grids with an odd dimension.
		Hence, by Lemma \ref{lem:grid:notBadPath} they admit an acyclic $T_0$-odd and $T_2$-odd orientation respectively.
		To show that the $T$-decomposition is good, it is now enough to check that $G[V_1]$ is not a bad grid instance of Problem \ref{prob:AOP} and apply induction.
		Since $(u_1,v_0) \in T$ and $[(u_1,v_0), (u_2,v_0)] \in Pairs(G[V_1]]$, we deduce that $(u_2,v_0) \in T$.
		Notice also that $(u_2,v_0) \in Z_2$.
		Hence, $(u_2,v_0) \not \in T_2$, which implies that $(G[V_1], T_1)$ is not a bad grid instance of Problem \ref{prob:AOP}.
		So by Theorem \ref{thm:T-odd decomposition}, $G$ admits an acyclic $T$-odd orientation.\newline

		\noindent The case where $(u_1,v_{q-1}) \in T$ and $(u_1,v_0)\not \in T$ is symmetrical.\newline

	\phantomsection
	\label{proof:notBadGrid:case:b3}
	\noindent \textbf{b.3.~} Finally, assume that $y,y'$ are not in $T$ and either both $w_0, w_{q-1} \in T$ or both $w_0, w_{q-1} \not \in T$. We prove that the $(2, q)$-corner $T$-decomposition $\langle V_0, V_1, \emptyset \rangle = \langle Y_{<2}, Y_{\geq 2}, \emptyset\rangle$ of $G$ is good (see figure \ref{fig:grid:bad_grid_Part2_case3}) and conclude by applying Theorem \ref{thm:T-odd decomposition}.
	To show that the decomposition satisfies $\calP$, we only need to check that $|T(V_0)| \equiv_2 |E(G[V_0])|$ thanks to Lemma \ref{lem:decomposition_P}:
	\begin{equation*}
	    T(V_0) \equiv_2 0 \equiv_2 2 + q\equiv_2 |E(G[V_0])| \quad \text{by \hyperref[decomposition:G0]{(G0)}}
	\end{equation*}

	\noindent Now, to show that the $T$-decomposition is good, it is enough to check that $(G[V_0], T_0)$ and $(G[V_1], T_1)$ are not bad grid instances of Problem \ref{prob:AOP} and apply induction:\newline
		\noindent \textbf{- }Let us show that  $(G[V_0], T_0)$ (with $V_0 = Y_{<2}$) is not a bad grid instance.\newline
		Since $(G[V_{in}],T(V_{in}))$ is a bad grid instance of Problem \ref{prob:AOP}, note that $(u_1, v_1) \in T$ and $[(u_1, v_{2j})$, $(u_2, v_{2j+1})] \in Pairs(G[V_{in}])$ for all $j \in \{1, \dots, \frac{q-4}{2}\}$.
		However, since $(u_1, v_{2k}) \not \in T$, there must exist $1 \leq k' \leq k$ such that $(u_1, v_{2k'-1}) \in T$ if and only if $(u_1, v_{2k'}) \not \in T$.
		Thus, $(G[Y_1], T(Y_1))$ is not a bad path instance and by Lemma \ref{lem:grid:notBadPath}, $G[V_0]$ admits an acyclic $T_0$-odd orientation.\newline

		\noindent \textbf{- }Let us show that  $(G[V_1], T_1)$ (with $V_1 = Y_{\geq 2}$) is not a bad grid instance.
		We distinguish two cases: \newline
		If $p = 4$, since $Z_1 = Y_2$, we deduce that $(u_2, v_1) \not \in T_1$ and $(u_{p-2},v_{2k-1}), (u_{p-2},v_{2k}) = (u_2, v_{2k}), (u_2, v_{2k+1}) \in T_1$.
		Moreover, since $(G[V_{in}],T(V_{in}))$ is a bad grid instance of Problem \ref{prob:AOP}, we have that $[(u_2, v_{2j})$, $(u_2, v_{2j+1})] \in Pairs(G[Y_{\geq 2}])$ for all $1 \leq j \leq \frac{q-4}{2}$.
		Hence, there must exist $1 \leq k' \leq k$ such that $(u_2, v_{2k'-1}) \in T_1$ if and only if $(u_2, v_{2k'}) \not \in T_1$.
		Thus, $(G[Y_2], T_1(Y_2))$ is not a bad path instance of Problem \ref{prob:AOP} and by Lemma \ref{lem:grid:notBadPath}, $G[V_1]$ admits an acyclic $T_1$-odd orientation.\newline

		\noindent If $p \geq 6$, note that $(u_{p-2},v_{2k-1}), (u_{p-2},v_{2k}) \not \in T_1$, so $(G[V_1], T_1)$ is not a bad grid instance of Problem \ref{prob:AOP} and by induction, $G[V_1]$ admits an acyclic $T_1$-odd orientation.

	\noindent Combining parts $1$ and $2$ concludes the induction.
	Thus, if $(G,T)$ is not a bad grid instance of Problem \ref{prob:AOP} such that $T$ satisfies the parity condition $\calP$, then it admits an acyclic $T$-odd orientation.\newline
\end{proof}

\begin{lemma}\label{lem:grid:badGrid}
	If $(G,T)$ is a bad grid instance of Problem \ref{prob:AOP}, then $G$ does not admit an acyclic $T$-odd orientation.
\end{lemma}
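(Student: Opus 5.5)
The plan is to argue by contradiction through the game formulation (Lemma \ref{lemme:equivalence_Jeu_AOP}). Suppose $(G,T)$ is a bad grid instance with an elimination order. The first removed vertex is white, so it is a non-corner boundary vertex lying in some pair $[x,x'] \in Pairs(G)$ whose two vertices are both white. Rather than track the whole game, I would look for an invariant that every legal move preserves, and that fails in any position where all remaining vertices can still be removed. The natural candidate is a parity (linear algebra over $\mathbb{F}_2$) invariant. Associate to the current position the vector of colours $c \in \mathbb{F}_2^{V}$, with $1$ for white. Removing a white vertex $v$ deletes $v$ and adds the indicator of $N(v)$ to $c$. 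I want a family of test sets $S$, depending on the current remaining vertex set, such that $|S \cap \text{White}|$ has a parity that is preserved and is initially forbidden.

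For the first step I would identify the obstruction on a small case, $P_2 \square P_2 = C_4$ with $T = V$, and then on $P_4 \square P_4$. This should show which $\mathbb{F}_2$-combination of vertices witnesses impossibility. By Theorem \ref{thm:cp_carac} and the $T$-decomposition framework, a global certificate should take the following form. Let $A$ be the adjacency matrix plus a diagonal. Then the $T$-odd acyclic orientations exist iff some elimination order makes a certain triangular system consistent. The pairing structure along the boundary suggests the certificate is a set $S$ containing, on each boundary side, the vertices $w_0, w_{2k-1}+w_{2k}$-style combinations. Concretely, I expect it to consist of the boundary vertices with even index pattern, so that $|S \cap N(v)|$ is even for every $v$ except in a controlled way.

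For the second step I would set up an induction that avoids a global invariant. I induct on $p+q$ with a strengthened hypothesis covering a slightly larger family: grids in which a boundary "strip" has been partially eaten. Take the first removed vertex, say $x = (u_0, v_{2k-1})$, paired with $x' = (u_0, v_{2k})$. I would show that an optimal continuation can be assumed to remove $x'$ next (it becomes black, so not immediately). Alternatively I would show that, after removing $x$, the vertices of the column segments adjacent to $x$ force a cascade: $(u_1, v_{2k-1})$ turns white, and so on. I would then show that the sub-position decomposes, via a $T$-decomposition (Theorem \ref{thm:T-odd decomposition}), into a bad path instance plus a smaller bad grid instance. Neither of these admits an orientation: the path part fails by the remark after Claim \ref{cla:grid:badPath}, since all relevant prefix parities are odd, and the grid part fails by induction.

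I expect this second step to be the main obstacle. An elimination order need not respect any row or column decomposition, so reducing to smaller grids requires an exchange argument showing that moves can be reordered to be local. For that reason I would first try hard to find the explicit $\mathbb{F}_2$ certificate of step one. The target is a set $S$ such that for every subset $R$ of remaining vertices and every $v \in R$, $|N(v)\cap S\cap R|$ is even unless $v \in S$. Then I would verify that each move preserves $|S \cap \text{White}| + |S \cap R| \bmod 2$, that this quantity is odd initially, and that it would have to be $0$ once $R=\emptyset$. If such an $S$ exists, the whole lemma reduces to that check.
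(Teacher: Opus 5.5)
Your proposal stops where the obstruction has to be produced, and the certificate you aim for in the first step cannot exist in the form you describe. Removing a white vertex $v$ from the remaining set $R$ changes $Q=|S\cap W|+|S\cap R|$ by $|N(v)\cap S\cap R|$ modulo $2$, because the two contributions of $v$ itself cancel. Preservation therefore needs this quantity to be even for \emph{every} removable $v$. Your exception ``unless $v\in S$'' breaks the invariant precisely at those vertices. Without the exception, any $R=\{v,s\}$ with $vs\in E(G)$ and $s\in S$ already violates the condition, so on a connected grid only $S=\emptyset$ survives the requirement ``for every $R$''. Restricting to reachable states would need exactly the reachability analysis you are trying to avoid. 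More fundamentally, no parity with $S$ chosen independently of $T$ can be the obstruction. Initially $Q\equiv|S\cap T|$, so such an invariant would declare unsolvable every $T$ satisfying $\calP$ with $|S\cap T|$ odd. For $p,q$ even and $S\notin\{\emptyset,V\}$ that is $2^{|V|-2}$ sets; for $S\in\{\emptyset,V\}$ it says nothing beyond $\calP$. By Lemma \ref{lem:grid:notBadGrid}, however, only the $2^{p+q-4}$ bad grid instances are unsolvable.

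The second step has the same gap in another guise. Exhibiting one $T$-decomposition into a bad path and a smaller bad grid proves nothing. Theorem \ref{thm:T-odd decomposition} only says that a solvable instance has \emph{some} good decomposition, for instance the one into singletons. You would have to show that every elimination order can be rearranged to respect yours, which is the exchange argument you yourself leave open.

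The obstruction the paper uses is of matching type rather than linear. It adds an apex $s_{root}$ adjacent to all white vertices and sets $\cT=V(\cG)\setminus\{s_{root}\}$. An acyclic $T$-odd orientation of $G$ then extends to an acyclic $\cT$-odd orientation of $\cG$ with $s_{root}$ a source. In a bad instance every white vertex lies on the outer boundary, so $\cG$ is planar. By the theorem of Király and Kisfaludi-Bak, the dual of a planar graph having an acyclic $V\setminus\{v\}$-odd orientation is factor-critical. Euler's formula gives $f=(p-1)(q-1)+|V\setminus T|$ faces, an odd number. One then exhibits $\lceil f/2\rceil$ pairwise edge-disjoint faces:
\begin{itemize}
\item the $\lceil (p-1)(q-1)/2\rceil$ inner squares of the checkerboard class containing the four corner squares;
\item the triangle $s_{root}xx'$ for each white pair $[x,x']$. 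The square on the other side of $xx'$ lies in the opposite checkerboard class, so no triangle shares an edge with a chosen square.
\end{itemize}
A dual graph with an independent set of size $\lceil f/2\rceil$ is not factor-critical, which gives the contradiction. This apex reduction together with the face count is the idea missing from your plan.
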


\begin{proof}
	Let $(G,T)$ be a bad grid instance of Problem \ref{prob:AOP} with $G = P_p \square P_q$.
	We show that $G$ does not admit any acyclic $T$-odd orientation.\newline

	\noindent The proof proceeds as follows.
	We construct a new instance $(\cG, \cT)$, where $|\cT| = |V(\cG)| - 1$, and we show that any acyclic $T$-odd orientation of $G$ would imply the existence of an acyclic $\cT$-odd orientation of $\cG$.
	\noindent Next, we recall the following theorem of Kiraly and Kisfaludi-Bak (Theorem 1.40 in \cite{kiralyDualCriticalGraphsNotes}).
	It provides a key characterization for acyclic $T$-odd orientations in planar graphs.
	To cite this theorem, we give some quick intuitive definitions; for formal ones, we refer to \cite{diestelGraphTheory2017}.
	A \textbf{planar embedding} is a drawing of a graph in the plane such that no edges cross each other.
	A graph that admits a planar embedding is called \textbf{planar}.
	The \textbf{dual} of a planar embedding of $G$ is the graph where the faces of this embedding are its vertex set, and two faces are adjacent whenever they share an edge of $G$.
	A graph is \textbf{factor-critical} if, after removing any vertex, the remaining graph has a perfect matching.
	\begin{quote}
		\textbf{(Theorem 1.40 in \cite{kiralyDualCriticalGraphsNotes})} The dual of a planar graph $G = (V,E)$ having an acyclic $V\setminus \{v\}$-odd orientation for some vertex $v\in V$ is always factor-critical.
		(So if there are multiple dual graphs depending on the planar embedding of $G$, then all of them are factor-critical.)
	\end{quote}
	\noindent From the contrapositive of this theorem, to show that $G$ does not admit any acyclic $T$-odd orientation, it is enough to show that the dual of some planar embedding of $\cG$ is not factor-critical.
	To do so, we show that some planar embedding $\phi(\cG)$ of $\cG$ contains $\lceil \frac{f}{2} \rceil$ pairwise independent faces, where $f$ is the number of faces in $\phi(\cG)$ and two faces are independent if they do not share an edge.
	Indeed, this implies that the dual of $\phi(\cG)$ contains an independent set of size $\lceil \frac{n}{2} \rceil$ and hence that it is not factor-critical.\newline

	\noindent \textbf{Construction of $(\cG, \cT)$:}\newline
	Define $\cG$ from $G$ by adding a vertex $s$ connected to every vertex in $V\setminus T$ as follows:
	\begin{itemize}
		\item $V(\cG) = V(G) \cup \{s_{root}\}$
		\item $E(\cG) = E(G) \cup \{ws_{root} \mid w\in V(G)\setminus T\}$
	\end{itemize}
	Set $\cT = V(\cG)\setminus \{s_{root}\}$.\newline

	\noindent \textbf{Correctness of $(\cG, \cT)$:}\newline
	Suppose that $G$ admits an acyclic $T$-odd orientation $O$.
	We construct an orientation $O'$ of $\cG$ as follows:
	\begin{itemize}
		\item $a \in O'$ for all arcs $a \in O$.
		\item For each $w \in V(G) \setminus T$, $\dir{s_{root}w} \in O'$.
	\end{itemize}

	\noindent By construction, $O'$ is a $\cT$-odd orientation of $\cG$.
	Furthermore, $O'$ remains acyclic $s_{root}$ is a source in $\cG_{O'}$.
	Hence, any acyclic $T$-odd orientation of $G$ implies an acyclic $\cT$-odd orientation of $\cG$.\newline

	\noindent \textbf{Planar Embedding $\phi(\cG)$:}\newline
	We now describe a planar embedding $\phi(\cG)$ of $\cG$.
	We first map the vertex $(u_i, v_j)\in V(G)$ to the point $(i, j) \in \mathbb Z^2$ for all $i \in \{0, \dots, p-1\}$ and $j \in \{0, \dots, q-1\}$.
	Each edge is drawn as a straight line segment between its endpoints.
	We say that a vertex $(v_i, v_j)$ is on the border of the grid if $i \in \{0, p-1\}$ or $j \in \{0, q-1\}$.
	\noindent The additional vertex $s_{root}$ is adjacent only to vertices of  $V(G) \setminus T$. Because $(G,T)$ is a bad grid instance, all those vertices lie on the border of the grid.
	Hence, we embed $s_{root}$ in the external face of the grid, at the point $(p, q)$, and for each vertex $w \in V(G) \setminus T$, we draw a curve from $s_{root}$ to $w$. These curves can be chosen so that they do not cross each other, nor any edge of $G$, except at their endpoints (see Figure \ref{fig:grid:bad_grid_embedding}).\newline

	\noindent \textbf{Independent faces of $\phi(\cG)$:}\newline
	First, note that $|V(\cG)| = pq + 1$ and $|E(\cG)| = q(p-1) + p(q-1) + |V(G)\setminus T|$, so by Euler's formula, $\phi(\cG)$ contains $f = (p-1)(q-1) + |V(G)\setminus T|$ faces.
	\newline

	\noindent We select the following faces of $\phi(\cG)$ defined by the cycles:
	\begin{enumerate}
		\item $(u_{2i}, v_{2j})(u_{2i}, v_{2j+1})(u_{2i+1}, v_{2j+1})(u_{2i+1}, v_{2j})$ for all $i \in \{0, \dots, \frac{p-2}{2}\}$ and $j \in \{0, \dots, \frac{q-2}{2}\}$
		\item $(u,v)(u',v')s_{root}$ for all $[(u,v), (u',v')] \in \text{Pairs}(G)$ such that $(u,v),(u',v') \in V(G)\setminus T$.
	\end{enumerate}

	\noindent Note that all selected faces of $\phi(\cG)$ are independent.
	Also, we selected precisely $\lceil \frac{(p-1)(q-1)}{2} \rceil$ faces in \textbf{1.} and $\frac{|V(G)\setminus T|}{2}$ faces in \textbf{2.}.
	This makes a total of $\lceil \frac{(p-1)(q-1)}{2} \rceil + \frac{|V(G)\setminus T|}{2} = \lceil \frac{f}{2} \rceil$ independent faces (see Figure \ref{fig:grid:bad_grid_embedding}).\newline

	\begin{figure}[ht]
		\centering
		\includegraphics[width=0.5\textwidth]{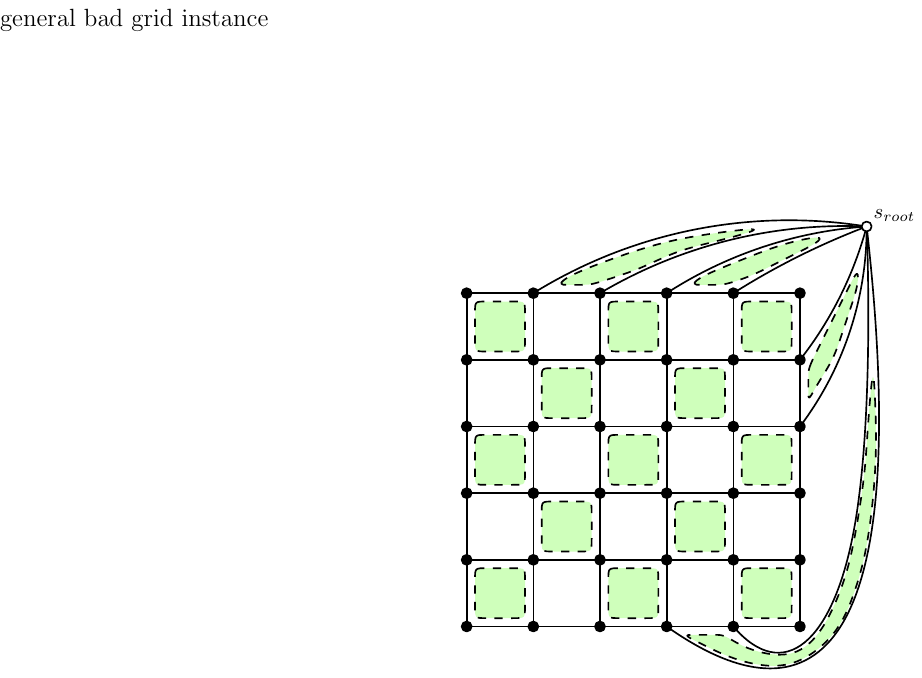}
		\caption{\centering Chosen faces on the planar embedding of $\cG$
		associated with the bad grid instance of Figure \ref{fig:grid:bad_grid_instance}.}
		\label{fig:grid:bad_grid_embedding}
	\end{figure}

\end{proof}

\noindent Combining Lemma \ref{lem:grid:notBadGrid} and Lemma \ref{lem:grid:badGrid}, we obtain the following theorem:

\begin{theorem}\label{thm:grid:carac}
	Let $(G,T)$ be a grid instance of Problem \ref{prob:AOP} such that $T$ satisfies the parity condition $\calP$.\newline
	$G$ admits an acyclic $T$-odd orientation if and only if $(G,T)$ is not a bad grid instance of Problem \ref{prob:AOP}.
	\qed
\end{theorem}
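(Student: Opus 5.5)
This theorem is essentially the conjunction of the two preceding lemmas, so the plan is simply to combine them in the two directions of the equivalence.

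For the direction ``if $(G,T)$ is not a bad grid instance, then $G$ admits an acyclic $T$-odd orientation'', I invoke Lemma \ref{lem:grid:notBadGrid}. Its hypotheses are exactly ours: $(G,T)$ is a grid instance, $T$ satisfies $\calP$, and $(G,T)$ is not bad. When a dimension is odd, the claim preceding Lemma \ref{lem:grid:notBadPath} shows that $\calS$ and $\overline\calS$ hold automatically, and Lemma \ref{lem:grid:notBadPath} case a applies. When both dimensions are even, the inductive argument of Lemma \ref{lem:grid:notBadGrid} applies. No extra work is needed here, since those lemmas already produce a good $T$-decomposition and Theorem \ref{thm:T-odd decomposition} turns it into an acyclic $T$-odd orientation.

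For the converse, ``if $G$ admits an acyclic $T$-odd orientation, then $(G,T)$ is not a bad grid instance'', I argue by contraposition and apply Lemma \ref{lem:grid:badGrid}: a bad grid instance admits no acyclic $T$-odd orientation. That lemma does not use $\calP$ in its argument, so the hypothesis on $T$ is harmless.

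The only step that deserves care is already hidden in Lemma \ref{lem:grid:badGrid}. There one must check that $\phi(\cG)$ really has $\lceil f/2\rceil$ pairwise independent faces. This rests on $|V(G)\setminus T|$ being even, and on white border vertices coming in consecutive pairs along the border, with the corners being black. With these facts the face count $f=(p-1)(q-1)+|V(G)\setminus T|$ is matched, because $(p-1)(q-1)$ is odd when $p,q$ are even. Taking both lemmas as established, the theorem follows immediately. It also yields the formulation of Theorem \ref{thm:grid_carac}, since ``bad grid instance'' is precisely the pair condition on $V\setminus T$ stated there, with all inner vertices in $T$.
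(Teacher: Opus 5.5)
Your proposal is correct and is exactly the paper's argument: the theorem is obtained by combining Lemma \ref{lem:grid:notBadGrid} for sufficiency with Lemma \ref{lem:grid:badGrid} for necessity. One caution about the step you single out in Lemma \ref{lem:grid:badGrid}: to obtain $\lceil f/2\rceil$ independent faces, take the inner squares whose lower-left corner $(i,j)$ has $i+j$ even; this checkerboard gives $\lceil (p-1)(q-1)/2\rceil$ squares, none of which contains a border-pair edge. The squares with both coordinates even that the paper lists number only $pq/4$, which is too few as soon as $p,q\geq 4$.
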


\subsection{Cylinders and Tori} \label{sec:cyclinder_torus}

\noindent In this section, we prove Theorem \ref{thm:tore_carac} which solves Problem \ref{prob:AOP} for graphs $G\square H$ when $G$ and $H$ are a path or a cycle.
Let us recall this second result on cylinders and tori:

\begin{theorem*}[\ref{thm:tore_carac}]
	Let $G = C_p \square H_q$ be a graph where $C_p$ is a cycle on $p$ vertices and $H_q$ is either a path or a cycle on $q$ vertices.
	Let $T \subseteq V(G)$ be a subset that satisfies $\calP \calS \overline\calS$.
	\begin{center}
		If $H_q$ is a path, or if $p,q \geq 4$, then $G$ has an acyclic $T$-odd orientation.\newline
	\end{center}
\end{theorem*}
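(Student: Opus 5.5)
The plan is to use good $T$-decompositions (Theorem \ref{thm:T-odd decomposition}) that cut off one or two ``layers'' of the cylinder or torus, so that each piece is a path, a cycle, or a grid. These pieces are handled by Lemma \ref{lem:tree}, Lemma \ref{lem:cycle} (with the Remark that any white vertex can be chosen as source), and Theorem \ref{thm:grid:carac}. For $G=C_p\square H_q$, write $C^j=\{(u_i,v_j)\}_i$ for the cycle layers and $R_i=\{(u_i,v_j)\}_j$ for the $H_q$-fibres. Parity of each piece will always be checked through Lemma \ref{lem:decomposition_P}, so that only all pieces but one need explicit verification.

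\textbf{Cylinder ($H_q=P_q$).} I would argue by induction on $q$, with $q=1$ being Lemma \ref{lem:cycle}.

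For $q\ge2$, try first the decomposition $\langle C^0, C^{\ge1}\rangle$. It satisfies $\calP$ exactly when $|T(C^0)|\equiv_2 p$, i.e.\ when $C^0$ has an even number of white vertices. In that case $C^0$ needs $\calS$, which holds if $C^0\not\subseteq T$. The remaining cylinder receives $T_1=T(C^{\ge1})\triangle C^1$, and I must check $\calP\calS\overline\calS$ there. $\calP$ is automatic. For $\calS$, since $C^1$ flips entirely, a failure would force a very rigid colouring, namely $C^1$ white and everything above black. The symmetric choice $\langle C^{q-1},C^{<q-1}\rangle$ covers many of the remaining configurations.

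If both end layers have an odd number of white vertices, I would instead cut the cylinder along a fibre. Removing one edge set $\delta(R_0,R_{p-1})$ is not a decomposition, so instead use $\langle R_0, G-R_0\rangle$, where $G-R_0$ is a grid $P_{p-1}\square P_q$ and $R_0$ is a path. Choose $R_0$ (or a block $R_0\cup\dots\cup R_k$, analogous to case a of Lemma \ref{lem:grid:notBadPath}) so that the path block has the right parity. Then the residual grid must be shown not to be a bad grid instance, using the freedom of which fibre to take. Bad grid instances require both dimensions even and an almost all-black interior, so a shift of the cut by one fibre changes the side parity or the colouring enough to escape.

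\textbf{Torus with $p,q\ge4$.} Here I would decompose $\langle C^0, C_p\square P_{q-1}\rangle$, reducing to the cylinder case just proved. It then suffices to pick a layer $C^j$ (any $j$, by symmetry) with an even number of white vertices and at least one white vertex, and to verify that the residual cylinder, whose two end layers get flipped, satisfies $\calS$ and $\overline\calS$. If every layer has an odd number of white vertices, use a fibre layer $R_i$ analogously. If all layers in both directions have odd white counts, take two consecutive layers $C^j\cup C^{j+1}$ as the first block; this block is the cylinder $C_p\square P_2$ and is handled by the cylinder case. The hypothesis $p,q\ge4$ guarantees that the residual piece is a genuine cylinder of height $\ge2$, in which the $\calS/\overline\calS$ conditions can fail only in degenerate all-black or all-flipped patterns.

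\textbf{Main obstacle.} I expect the hardest step to be the cylinder sub-case where the residual piece would be a bad grid instance, or where $\calS/\overline\calS$ fails after flipping. This needs a finite case analysis in the style of Part b of Lemma \ref{lem:grid:notBadGrid}. I would first try to exploit the rotational symmetry of $C_p$: by choosing the cut fibre $R_i$ among $p$ possibilities, at least one should avoid the rigid pair-pattern condition on the new boundary paths.
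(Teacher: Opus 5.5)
Your architecture (peel off whole layers or whole fibres, recurse, and finish with paths, cycles and grids) is close to the paper's. The gap is that decompositions whose pieces are unions of whole layers or whole fibres hit a parity obstruction, and your plan has no tool to break it.

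\textbf{Cylinder.} When both end cycles have an odd number of white vertices, your fallback is a block of $k+1$ consecutive fibres. This block is the grid $P_{k+1}\square P_q$ with $|E|\equiv_2 k+1+q$. It receives an even number of flips in either position, so by Lemma~\ref{lem:decomposition_P} it needs $(k+1)(q+1)+q$ white vertices modulo $2$, which means an odd number when $q$ is odd. A concrete failure is $C_4\square P_3$ with $V\setminus T=\{(u_0,v_0),(u_0,v_2)\}$:
\begin{itemize}
\item $|T|=10$ and $|E|=20$, there are two potential sources and six black degree-$3$ potential sinks, so $\calP\calS\overline\calS$ holds;
\item every fibre contains $0$ or $2$ white vertices, so no fibre block has the right parity;
\item each end cycle contains exactly one white vertex, and since $p$ is even the layer cut has $4$ edges, so reversing the order does not change its parity requirement;
\item the middle cycle is entirely black.
\end{itemize}
Hence already the first piece of any decomposition into whole layers or whole fibres fails.

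The missing idea is to break the boundary cycle. Case a of Lemma~\ref{lem:cyl:even} first removes a single white vertex $w$, using $\langle\{w\},X_{\ge1},X_0\setminus\{w\}\rangle$, so that the rest of $X_0$ becomes a path with no source requirement. The vertex $w$ is chosen among three consecutive vertices of $X_0$ containing an even number of black vertices. This ensures that the fallback $\langle\{w_1,w_2,w_3\},X_{\ge1},X_0\setminus\{w_1,w_2,w_3\}\rangle$ also satisfies $\calP$ and repairs $\calS$ or $\overline\calS$ in the residual cylinder. For $p$ odd you also miss three things:
\begin{itemize}
\item the reversed cut $\langle X_{\ge1},X_0\rangle$ does change the parity requirement, because the cut has $p$ edges;
\item monochromatic end rows need the column-and-first-row decompositions;
\item a bad residual grid after a fibre cut is fixed by reversing the order, $\langle V\setminus Y_i,Y_i\rangle$, which leaves the corners $(u_{i+1},v_0),(u_{i+1},v_{q-1})$ white. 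Your ``shift the fibre'' heuristic is unproved.
\end{itemize}

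\textbf{Torus.} A block of $k$ consecutive rows of $C_p\square C_q$ is $C_p\square P_k$, with $(2k-1)p$ edges. If $p$ is odd and every row has an odd number of white vertices, every such block has an even number of black vertices but an odd number of edges; the same holds for columns when $q$ is odd. Your ``two consecutive layers'' step therefore only works for $p$ even. In $C_5\square C_5$ with $V\setminus T=\{(u_i,v_i)\mid 0\le i\le 4\}$, which satisfies $\calP\calS\overline\calS$, no decomposition into unions of whole rows or of whole columns can even satisfy $\calP$, because its first piece cannot.

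The paper handles this case (b.2 of Lemma~\ref{lem:Torus}) by peeling a single white vertex $x$ and using the quasi 2-cylinder $(C_p\square P_2)\setminus\{x\}$ as a piece. That piece needs its own case analysis, Lemma~\ref{lem:quasi-2cylindre}, and it is also used in case b.1; neither appears in your plan.

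You also leave open what to do when the residual cylinder in $\langle C^0, C_p\square P_{q-1}\rangle$ fails $\calS$ or $\overline\calS$. In $C_4\square C_4$ with $V\setminus T=\{(u_0,v_0),(u_1,v_0)\}$, this happens for the only admissible row $X_0$. One checks that no decomposition into blocks of consecutive rows, or of consecutive columns, is good. The paper's remedy is again to split a cycle: it uses $\langle\{w_1,w_2\},X_{\ge1},X_0\setminus\{w_1,w_2\}\rangle$, where $w_1,w_2$ are the first two vertices of an acyclic $T(X_0)$-odd orientation of $X_0$.
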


\noindent As for grids, the proof of Theorem \ref{thm:tore_carac} works by construction and induction.
We decompose the proof into three lemmas (Lemmas \ref{lem:cyl:odd}, \ref{lem:cyl:even} on cylinders and \ref{lem:Torus} on tori) from which this theorem is a direct consequence.
We now give some preliminary results and definitions.\newline

\noindent\textbf{Preliminaries}~\newline

\noindent Let $p\geq 3$ and $q\geq 1$.
Let $G$ be $C_p\square H_q$ with $H_q\in \{P_q, C_q\}$.
If $H_q=C_q$, then $q\geq 3$.

\noindent We denote the vertex sets by $V(C_p)=\{u_0, \dots, u_{p-1}\}$ and $V(H_q)=\{v_0, \dots, v_{q-1}\}$.
We fix the edge sets as follows: $E(C_p)=\{u_iu_{i+1\bmod p} \mid i=0, \dots, p-1\}$ and $E(H_q)=\{v_jv_{j+1} \mid j=0, \dots, q-2\}$ if $H_q$ is a path; else $E(H_q)=\{v_jv_{j+1\bmod q} \mid j=0, \dots, q-1\}$.

\noindent For every $j\in \{0, \dots, q-1\}$, let $X_j^{[\geq i]}=\{(u_k, v_j) \mid k=i, \dots, p-1\}$ and $X_j^{[\leq i]}=\{(u_k, v_j) \mid k=0, \dots, i\}$.
We drop the superscript $[\geq i]$ and $[\leq i]$ when $i=0$ or $p-1$ respectively, and in those cases, the graph induced by $X_j$ is $C_p$; we call $X_j$ a \textbf{row}.
By extension, we denote $X_{\leq j}=\cup_{k=0}^j X_k, X_{< j}=X_{\leq j}\setminus X_j$ and $X_{\geq j}=\cup_{k=j}^q X_k, X_{> j}=X_{\geq j}\setminus X_j$.
In particular, the graph induced by $X_{\leq j}$ is a cylinder.
\newline
\noindent For every $i\in \{0, \dots, p-1\}$, let $Y_i^{[\geq j]} = \{(u_i, v_k) \mid j\leq k\leq q-1\}$ and $Y_i^{[\leq j]} = \{(u_i, v_k) \mid 0\leq k\leq j\}$.
We drop the superscript $[\geq j]$ and $[\leq j]$ when $j=0$ or $q-1$ respectively, and in those cases, the graph induced by $Y_i$ is $H_q$; we call $Y_i$ a \textbf{column}.
We also denote similarly $Y_{\leq i}=\cup_{k=0}^i Y_k, Y_{< i}=Y_{\leq i}\setminus Y_i$ and $Y_{\geq i}=\cup_{k=i}^p Y_k, Y_{> i}=Y_{\geq i}\setminus Y_i$.
In particular, the graph induced by $Y_{\leq i}$ is a grid whenever $H_q=P_q$ but a cylinder whenever $H_q=C_q$.\newline

\noindent \textbf{Cylinders}~\newline

\noindent This part is dedicated to the proof of Theorem \ref{thm:tore_carac} for cylinders.
For any cylinder $G = C_p\square P_q$, it is useful to remind that:
\begin{equation*}
	|V(G)| \equiv_2 p \times q, \quad |E(G)| \equiv_2 p \tag{C0}\label{decomposition:C0}
\end{equation*}

\noindent Let be a subset $T\subseteq V(G)$. Recall that $Source(T) = V(G)\setminus T$ and $Sink(T)$ is the set of vertices in $T$ that have an odd degree and vertices in $V(G)\setminus T$ that have an even degree. Hence, in cylinders, $Source(T) \neq \emptyset$ if and only if $T \neq V(G)$ and $Sink(T) \neq \emptyset$ if and only if $T \neq V(G)\setminus \{X_0,X_{q-1}\}$. Moreover there is no need to check whether $Source(T) \neq Sink(T)$ as it is always the case for non Eulerian graphs. \newline

\noindent Note that $T$ does not satisfy $\calS$ if and only if $T = V(G)$ and does not satisfy $\overline\calS$ if and only if $T = V(G)\setminus \{X_0,X_{q-1}\}$ (see Figure \ref{fig:bad_cylindre}). In the proof, to show that a cylinder instance $(G,T)$ satisfies $\calS$ or $\overline\calS$ we show that $Source(T)$ and $Sink(T)$ are non empty. 

\begin{figure}[ht]
	\centering
	\includegraphics[width = 0.5 \textwidth, keepaspectratio]{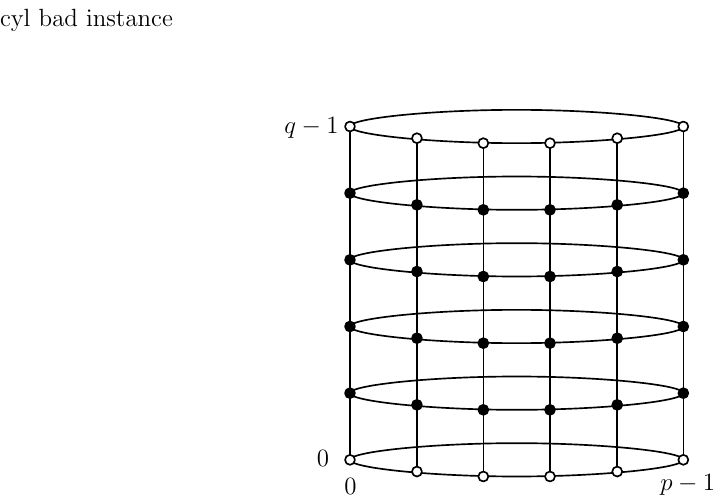}
	\caption{A cylinder with $T$ satisfying $\calP$ and $\calS$, but violating $\overline\calS$}
	\label{fig:bad_cylindre}
\end{figure}

\noindent In the next proofs, we use multiple $T$-decompositions of a cylinder instance $(G,T)$ that we shall describe carefully here.
Using Lemma \ref{lem:decomposition_P}, we give the necessary and sufficient conditions we use to prove that these $T$-decompositions satisfy the parity condition $\calP$.\newline

\noindent \textbf{first-row T-decompositions.}  We define here two different $T$-decompositions $\langle V_0, V_1 \rangle  = \langle X_{0},X_{\geq 1} \rangle $ or $\langle X_{\geq 1}, X_{0} \rangle $ of $(G,T)$ (see Figure \ref{fig:first_row_decomposition}).
By Lemma \ref{lem:decomposition_P}, these two $T$-decompositions satisfy $\calP$ if and only if:
\begin{align*}
	& |T(V_0)| \equiv_2 |E(X_{0})| \equiv_2 |E(X_{\geq 1})| \equiv_2 p \tag{C1}\label{decomposition:C1}
\end{align*}

\noindent \textbf{i\textsuperscript{th}-column T-decomposition.} It is defined as $\langle V_0,V_1 \rangle = \langle Y_i, V(G) \setminus Y_i \rangle$ for $i \in \{0 \dots p-1\}$ (see Figure \ref{fig:i_column_decomposition}):
$$ V_0 = Y_i, \quad V_1 = V(G) \setminus Y_i$$

\noindent  By Lemma \ref{lem:decomposition_P}, this $T$-decomposition satisfies $\calP$ if and only if: 
\begin{align*}
	& |T_0| \equiv_2 |E(G[V_0])| \text{ i.e. } |T(V_0)| \equiv_2 |E(Y_i)| \equiv_2 q-1 \tag{C2}\label{decomposition:C2}
\end{align*}

\noindent \textbf{i\textsuperscript{th}-column \& first-row T-decompositions.} We define here two different $T$-decompositions (see Figure \ref{fig:i_column_first_row_decomposition}).\newline
\noindent The first one is $\langle V_0,V_1, V_2 \rangle  = \langle Y_i^{[\geq 1]}, X_0, V(G)\setminus (Y_i^{[\geq 1]} \cup X_0) \rangle$.
By Lemma \ref{lem:decomposition_P}, it satisfies $\calP$ if and only if:
\begin{align*}
	&|T_0| \equiv_2 |E(G[V_0])| \text{ i.e. } |T(V_0)| \equiv_2 |E(Y_i^{[\geq 1]})| \equiv_2 q \tag{C3a}\label{decomposition:C3a}\\
	&|T_1| \equiv_2 |E(G[V_1])|  \text{ i.e. }  |T(V_1)| \equiv_2 |E(X_0)| - |Z_1| \equiv_2 p-1
\end{align*}

\noindent The second one is $\langle V_0,V_1, V_2 \rangle  = \langle V\setminus (Y_i^{[\geq 1]} \cup X_0), X_0, Y_i^{[\geq 1]}\rangle$. By Lemma \ref{lem:decomposition_P}, it satisfies $\calP$ if and only if:
\begin{align*}
	&|T_1| \equiv_2 |E(G[V_1])|  \text{ i.e. }  |T(V_1)| \equiv_2 |E(X_0)| - |Z_1| \equiv_2 p - (p-1) \equiv_2 1 \tag{C3b}\label{decomposition:C3b}\\
	&|T_2| \equiv_2 |E(G[V_2])|  \text{ i.e. }  |T(V_2)| \equiv_2 |E(Y_i^{[\geq 1]})| - |Z_2| \equiv_2 q-1
\end{align*}

\begin{figure}[htbp]
  \centering

  \begin{subfigure}[b]{0.3\textwidth}
    \includegraphics[width=\textwidth]{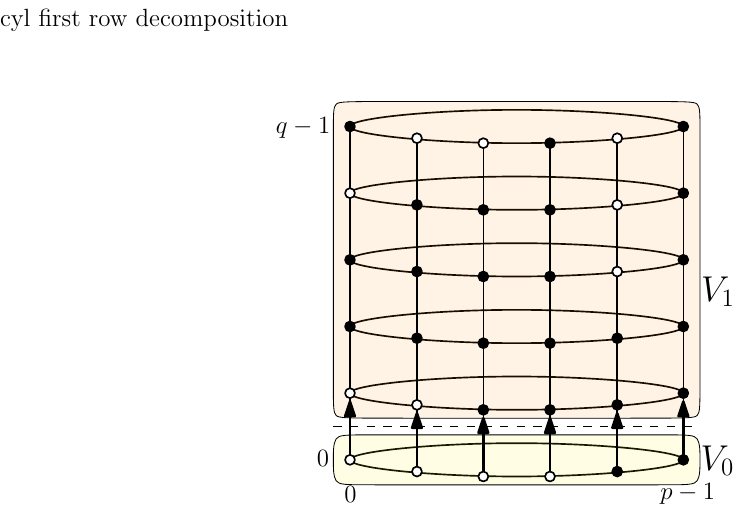}
    \caption{A first-row $T$-decomposition satisfying $\calP$}
    \label{fig:first_row_decomposition}
  \end{subfigure}
  \hfill
  \begin{subfigure}[b]{0.3\textwidth}
    \includegraphics[width=\textwidth]{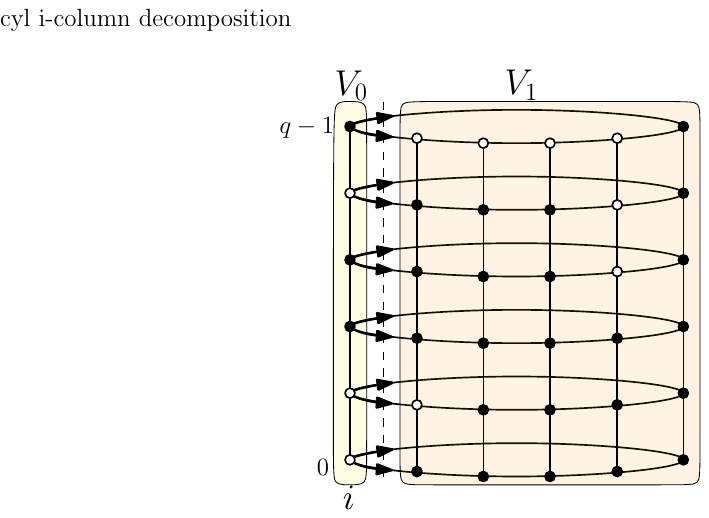}
    \caption{An i-column $T$-decomposition satisfying $\calP$}
	\label{fig:i_column_decomposition}
  \end{subfigure}
  \hfill
  \begin{subfigure}[b]{0.3\textwidth}
    \includegraphics[width=\textwidth]{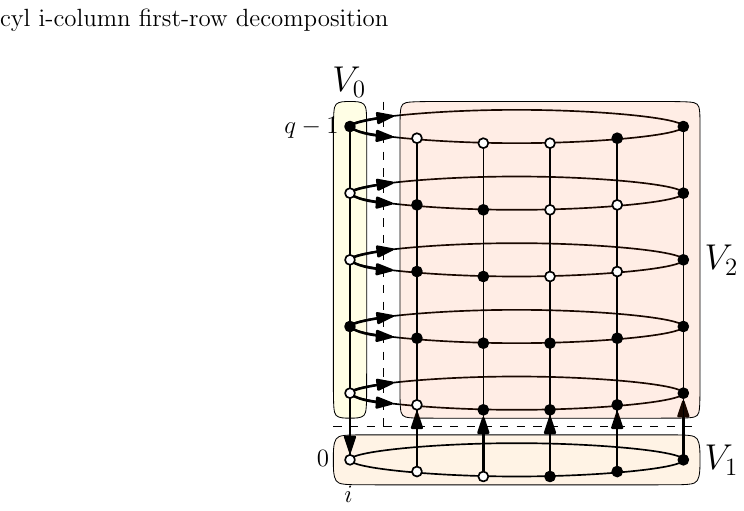}
	\caption{An i-column \& first-row $T$-decomposition satisfying $\calP$}
	\label{fig:i_column_first_row_decomposition}
  \end{subfigure}
  \caption{$T$-decompositions used in the proofs of Lemmas \ref{lem:cyl:even} and \ref{lem:cyl:odd}.}
\end{figure}

\begin{lemma}\label{lem:cyl:odd}
Let $G$ be a cylinder $C_p\square P_q$ with $p\geq 3$ odd and $q\geq 1$.
The graph $G$ admits an acyclic $T$-odd orientation for any subset $T\subseteq V(G)$ satisfying:
\begin{itemize}
	\item $\calP$ if $q$ is even.
	\item $\calP \calS \overline\calS$ if $q$ is odd.
\end{itemize}
\end{lemma}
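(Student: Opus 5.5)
The plan is to argue by induction on $q$, using the first-row $T$-decompositions and, for degenerate configurations, the column-type $T$-decompositions (C2), (C3a), (C3b). Each piece will be either a cycle, handled by Lemma \ref{lem:cycle} (cycles are in $\cC_{\calP\calS}$), a smaller cylinder $C_p\square P_{q'}$, handled by the induction hypothesis, or a path, handled by Lemma \ref{lem:tree}. Parity bookkeeping always goes through Lemma \ref{lem:decomposition_P}. By (C0), $|E(G)|\equiv_2 p\equiv_2 1$, so $\calP$ says that $|T|$ is odd. Every row $X_j$ induces an odd cycle, so $|E(G[X_j])|$ is odd.

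\textbf{Base case $q=1$.} Here $G=C_p$ is a cycle and is Eulerian, so $\calS$ and $\overline\calS$ coincide. The hypothesis $\calP\calS\overline\calS$ is then exactly the hypothesis of Lemma \ref{lem:cycle}, which gives the orientation.

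\textbf{Inductive step.} Let $q\geq 2$ and assume the statement for all heights smaller than $q$. I will look for a splitting index $j$ with $|T(X_{\le j})|$ odd and use the $T$-decomposition $\langle X_{\le j},X_{>j}\rangle$ or $\langle X_{>j},X_{\le j}\rangle$. Both blocks are cylinders with an odd number of edges, so by Lemma \ref{lem:decomposition_P} only the parity of $|T(X_{\le j})|$ has to be checked. Since $|T|$ is odd, some prefix $X_{\le j}$ has odd $|T(X_{\le j})|$; in particular $j=q-1$ always works, so I must pick $j<q-1$.

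A convenient choice is $j=0$ when $|T(X_0)|$ is odd, which is the first-row decomposition (C1). The two orderings give some flexibility.
\begin{itemize}
\item In $\langle X_0,X_{\ge1}\rangle$, the cycle $X_0$ keeps $T(X_0)$, while the block $X_{\ge1}$ receives $T(X_{\ge1})\triangle X_1$.
\item In $\langle X_{\ge1},X_0\rangle$, the block $X_{\ge1}$ keeps $T(X_{\ge1})$, while $X_0$ receives $T(X_0)\triangle X_0$.
\end{itemize}
In either ordering, the cycle piece only needs a white vertex. The cylinder piece needs $\calP$ when its height is even. When its height is odd it also needs $\calS\overline\calS$, i.e.\ its assigned set is neither all of the block nor all of the block minus its two boundary rows. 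I will pick the ordering, or slide $j$ to the next index with odd prefix, so that the non-degenerate option is available.

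\textbf{Remaining configurations.} When no row-splitting works, $T$ is heavily constrained: essentially every proper prefix has even $|T(X_{\le j})|$, which forces rows of even $T$-count, and the flips put the relevant block into the forbidden all-black or black-except-boundary patterns. In these cases I switch to an $i$-th column decomposition (C2). Its first piece $Y_i$ is a path that needs only $|T(Y_i)|\equiv_2 q-1$, by Lemma \ref{lem:tree}. Removing $Y_i$ turns the cylinder into a grid $P_{p-1}\square P_q$, which is handled by Theorem \ref{thm:grid:carac}. Since $p-1$ is even, this grid can only fail if it is a bad grid instance. When $q$ is odd that cannot happen at all, because a bad grid instance needs both dimensions even. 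When $q$ is even, I choose the column $i$ so as to break one of the pairs, or use (C3a)/(C3b), so that the $Z$-flips along $Y_i$ destroy the bad-path pattern on the adjacent columns.

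\textbf{Role of the hypotheses.} For $q$ even only $\calP$ is assumed, so I must check that $\calS\overline\calS$ is never actually needed. For $q$ odd, $\calS$ and $\overline\calS$ are exactly what exclude the two global obstructions $T=V(G)$ and $T=V(G)\setminus(X_0\cup X_{q-1})$.

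The main obstacle is this last degenerate case analysis: certifying, in every residual configuration, that some column decomposition yields a grid that is not a bad grid instance. I would first try taking $i$ so that $(u_{i-1},v_0)$ and $(u_{i+1},v_0)$ have different colours. This is possible because $p$ is odd and the top row cannot be colour-periodic with period $2$.
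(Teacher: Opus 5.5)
Your proposal has a genuine gap: the residual configurations for $q$ odd are never resolved, and the tool you propose for them is not always available. Before that, there is a parity error in the row-splitting step that distorts how you identify those configurations.

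\textbf{Row splitting.} Every block $X_{\le j}$, $X_{>j}$ is a cylinder $C_p\square P_h$ with $p$ odd, so it has an odd number of edges. The first block of a $T$-decomposition receives no flips. Hence, by Lemma \ref{lem:decomposition_P}, $\langle X_{\le j},X_{>j}\rangle$ satisfies $\calP$ iff $|T(X_{\le j})|$ is odd, while $\langle X_{>j},X_{\le j}\rangle$ satisfies it iff $|T(X_{\le j})|$ is even. In particular, when $|T(X_0)|$ is odd the order $\langle X_{\ge 1},X_0\rangle$ is not available: $X_0$ then receives $X_0\setminus T(X_0)$, an even set on an odd cycle. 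So the two orderings give no flexibility. Each split index $j$ comes with exactly one admissible order, and conversely parity never rules out a split. Row splitting fails only when a piece violates $\calS$ or $\overline\calS$, typically because a boundary row is monochromatic. If $T(X_0)=X_0$, the forced order keeps $X_0$ all black; if $T(X_0)=\emptyset$, the forced order flips it to all black. Your description of the residual configurations (``every proper prefix has even $|T(X_{\le j})|$'') is therefore not the right one.

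\textbf{The gap for $q$ odd.} As you note yourself, the degenerate case analysis is not carried out. Moreover, the tool you propose for it does not always apply. The $i$th-column decomposition (C2) requires a column with $|T(Y_i)|\equiv_2 q-1$, and you never show one exists. For $q$ even it does, since $|V\setminus T|$ is odd. For $q$ odd, every column may have odd $T$-count, and that is exactly the hard case. For instance, in $C_3\square P_3$ take $T=X_2$. It satisfies $\calP\calS\overline\calS$. Both row splits leave an all-black cycle, and every column contains exactly one vertex of $T$, so (C2) is unavailable. The fact that bad grids cannot occur when $q$ is odd does not help here.

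The paper handles the case where all columns have odd $T$-count as follows.
\begin{itemize}
\item It first splits off $X_0$ or $X_{q-1}$ when that row is not monochromatic. The other piece then has even height, so only $\calP$ is needed.
\item Otherwise it uses the mixed decompositions (C3a)/(C3b), e.g.\ $\langle Y_i^{[\ge 1]},X_0,V\setminus(Y_i^{[\ge 1]}\cup X_0)\rangle$. It switches to the mirrored version at $X_{q-1}$ when the leftover even grid is bad.
\item When $T(X_0)=T(X_{q-1})=\emptyset$, it uses $\overline\calS$ to find a white interior vertex. This gives a column with at least four white vertices, which prevents both mirrored versions from being bad.
\end{itemize}

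\textbf{The choice of $i$ for $q$ even.} Your heuristic fails here too. With $p$ odd, a $2$-periodic row is simply monochromatic, which is allowed. Also, $i$ must additionally satisfy the parity condition on $Y_i$. The paper's fix is simpler. If $\langle Y_i,V\setminus Y_i\rangle$ yields a bad grid, then the corners $(u_{i\pm1},v_0)$ and $(u_{i\pm1},v_{q-1})$ are white in $T$. The reversed order $\langle V\setminus Y_i,Y_i\rangle$ involves no flips on either piece, since each vertex of $Y_i$ has two neighbours outside $Y_i$. It therefore leaves a grid whose corners are not in $T_0$, hence not a bad grid.
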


\begin{proof}
	Let $G=C_p\square P_q$ with $p\geq 3$ odd and $q > 1$ (for $q = 1$, refer to Lemma \ref{lem:cycle}).\newline
	\phantomsection
	\label{proof:cyl:odd:case:a}
	\textbf{a.~}Suppose that $q\equiv_2 0$, so that $|V(G)|\equiv_2 0$. Let $T$ be a subset of $V(G)$ satisfying $\calP$ and recall we suppose $p$ odd. By \hyperref[decomposition:C0]{(C0)}, this implies $|T| \equiv_2 1$ and $|V(G)\setminus T| \equiv_2  1$. This implies that $T$ also satisfies $\calS$ and $\overline \calS$ ($T \neq V(G)$ and $T \neq V(G) \setminus \{X_0,X_{q-1}\}$ since $|V(G) \setminus \{X_0,X_{q-1}\} \equiv_2 0$).
	By \hyperref[decomposition:C0]{(C0)}, we have that $|V(G)| + |E(G)| \equiv_2 1$ so $T$ must verify whether $|V(G)\setminus T| \equiv_2 1$ and hence satisfies $\calS$ and $\overline\calS$ as well.
	Also, since $|V(G)\setminus T| \equiv_2 1$, there must exist
	$i \in \{1,\dots, p\}$ such that $|Y_i\setminus T(Y_i)| \equiv_2 1$.
	Let us consider the $i$\textsuperscript{th}-column $T$-decomposition $\cA_1 = \langle V_0, V_1 \rangle  = \langle Y_i, V(G)\setminus Y_i \rangle$ (see Figure \ref{fig:odd_case_a}).
	It satisfies $\calP$ (see \hyperref[decomposition:C2]{(C2)}):
	$$|T(V_0)| \equiv_2 q-1 \equiv_2 |E(G[V_0])|$$

	\noindent Now, $G[V_0]$ is a path and has an acyclic $T_0$-odd orientation by Lemma \ref{lem:tree}.
	$G[V_1]$ is a grid with even dimensions, hence, if $(G[V_1],T_1)$ is not a bad grid instance, it has an acyclic $T_1$-odd orientation by Lemma \ref{lem:grid:notBadGrid} and $\cA_1$ is good.\newline
	\noindent Otherwise, if $(G[V_1],T_{1})$ is a bad grid instance in $\cA_1$, we have in particular that $(u_{i+1},v_0)$ and $(u_{i+1},v_{q-1})$ are not in $T$.
	In this case, we define $\cA_2 = \langle V_0, V_1 \rangle  = \langle V(G)\setminus Y_i, Y_i  \rangle$.
	To show it satisfies $\calP$ it is enough by Lemma \ref{lem:decomposition_P} to check that $|T_1| \equiv_2 |E(G[V_1])|$ or equivalently that $|T(V_1)| \equiv_2 |E(G[V_1])| - |Z_1|$.
	But $|Z_1| \equiv_2 0$ hence:
	$$|T(V_1)| \equiv_2 |T(Y_i)| \equiv_2 q-1 \equiv_2 |E(G[V_1])| - |Z_1|$$

	\noindent Now, the subgraph induced by $V_0$ in $\cA_2$ is a grid with even dimensions but in this case $(G[V_0],T_0)$ is guaranteed to not be a bad grid instance in $\cA_2$ as its corner vertices $(u_{i+1},v_0)$ and $(u_{i+1},v_{q-1})$ are not in $T_0$.
	Hence, $G[V_0]$ has an acyclic $T_0$-odd orientation by Lemma \ref{lem:grid:notBadGrid}.
	Furthermore, $G[V_1]$ is a path and has an acyclic $T_1$-odd orientation by Lemma \ref{lem:tree}, achieving to prove that $\cA_2$ is good.\newline

	\noindent \textbf{b.~}From now on, suppose that $q \equiv_2 1$ and let $T$ be a subset of $V(G)$ satisfying $\calP \calS \overline\calS$.
	\begin{adjustwidth}{2ex}{0pt}
	\phantomsection
	\label{proof:cyl:odd:case:b1}
	\noindent \textbf{b.1.~} Suppose that there is a column $Y_i$ such that $T(Y_i)$ satisfies $\calP$ in $G[Y_i]$ for some $i\in \{0,\dots, p-1\}$.
	Then the $i$\textsuperscript{th}-column $T$-decomposition $\cB = \langle V_0, V_1 \rangle  = \langle Y_i, V(G)\setminus Y_i \rangle$ satisfies $\calP$ (refer to \hyperref[decomposition:C2]{(C2)} and Figure \ref{fig:odd_case_b1}):
	$$|T(V_0)| \equiv_2 |T(Y_i)| \equiv_2 q-1$$
	We now show it is a good decomposition.
	The subgraph $G[V_0]$ is a path and has an acyclic $T_0$-odd orientation by Lemma \ref{lem:tree}.
	The subgraph $G[V_1]$ is a grid with an odd dimension, hence it has an acyclic $T_1$-odd orientation by Lemma \ref{lem:grid:notBadPath}.\newline

	\noindent \textbf{b.2.~} Assume that for all $i \in \{0,\dots, p-1\}$, $T(Y_i)$ does not satisfy $\calP$ in $G[Y_i]$, i.e. $|T(Y_i)| \equiv_2 1$.
	\begin{adjustwidth}{2ex}{0pt}
	\phantomsection
	\label{proof:cyl:odd:case:b2i}
	\noindent \textbf{b.2.i.~} Suppose that $T(X_0) \neq X_0$ and $T(X_0) \neq \emptyset$.
	If $T(X_0)$ satisfies $\calP$ in $G[X_0]$, we consider the first-row $T$-decomposition $\cC_1 = \langle V_0, V_1 \rangle  = \langle X_0, X_{\geq 1} \rangle$; otherwise if $T(X_0)$ does not satisfy $\calP$ in $G[X_0]$, then $T(X_{\geq 1})$ satisfies $\calP$ in $G[X_{\geq 1}]$.
	Indeed :
	\begin{align*}
		|T| &= |T(X_0)| + |T(X_{\geq 1})|\\
		&\equiv_2 |E(G[X_0])| + |E(G[X_{\geq 1}])| + |\delta(X_0, X_{\geq 1})|\\
		&\equiv_2 |E(G[X_0])| + |E(G[X_{\geq 1}])| + 1
	\end{align*} 
	where $\delta(X_0, X_{\geq 1})$ denotes the set of edges with one endpoint in $X_0$ and the other in $X_{\geq 1}$.
	
	\noindent In this case, we consider the first-row $T$-decomposition $\cC_2 = \langle V_0, V_1 \rangle  = \langle X_{\geq 1}, X_0 \rangle$ (see Figure \ref{fig:odd_case_b2i}) and either $\cC_1$ or $\cC_2$ satisfies $\calP$ by Lemma \ref{lem:decomposition_P}.

	\noindent For $i \in \{0,1\}$, if the subgraph $G[V_i]$ is either a cylinder with $p$ odd and $q$ even or a cycle (with $T_i$ verifying $\calS$ since $T_i \neq V_i$), then it has an acyclic $T_i$-odd orientation by case \textbf{a.} or Lemma \ref{lem:cycle}, respectively.
	Hence, either $\cC_1$ or $\cC_2$ is good.
	We take care of the case where $T(X_{q-1}) \neq X_{q-1}$ and $T(X_{q-1}) \neq \emptyset$ by symmetry.\newline

	\phantomsection
	\label{proof:cyl:odd:case:b2ii}
	\noindent \textbf{b.2.ii.~} Suppose that $T(X_0) = X_0$ and $T(X_{q-1}) = X_{q-1}$.
	We consider an $i$\textsuperscript{th}-column \& first-row $T$-decomposition $\cD_1 = \langle V_0, V_1, V_2 \rangle = \langle V\setminus (Y_i^{[\geq 1]} \cup X_0), X_0, Y_i^{[\geq 1]} \rangle$ for some $i\in \{0,\dots, p-1\}$ such that $T(V\setminus (Y_i^{[\geq 1]} \cup X_0)) \neq V\setminus (Y_i^{[\geq 1]} \cup X_0)$ (see Figure \ref{fig:odd_case_b2ii}).
	This decomposition satisfies $\calP$ (refer to \hyperref[decomposition:C3b]{(C3b)}):
	$$|T(V_1)| \equiv_2 |T(X_0)| \equiv_2 p, \quad |T(V_2)| \equiv_2 |T(Y_i^{[\geq 1]})| \equiv_2 0 \equiv_2 q-1$$
	The subgraph $G[V_0]$ is a grid with even dimensions.
	If $(G[V_0],T_0)$ is a bad grid instance of Problem \ref{prob:AOP}, we consider symmetrically the $i$\textsuperscript{th}-column \& first-row $T$-decomposition $\cD_2 = \langle V_0, V_1, V_2 \rangle = \langle V\setminus (Y_i^{[\leq q-2]} \cup X_{q-2}), X_{q-1}, Y_i^{[\leq q-2]}\rangle$ (see Figure \ref{fig:odd_case_b2ii}) which satisfies $\calP$ as well.
	Now, $(G[V_0],T_0)$ cannot be a bad grid instance in both $\cD_1$ and $\cD_2$.
	Indeed, this would imply that $T(V\setminus Y_i) = V\setminus Y_i$, which is not possible because  $V\setminus Y_i \supset V\setminus (Y_i^{[\geq 1]} \cup X_0)$ and we supposed $T(V\setminus (Y_i^{[\geq 1]} \cup X_0)) \neq V\setminus (Y_i^{[\geq 1]} \cup X_0)$.
	By Lemma \ref{lem:grid:notBadGrid}, $G[V_0]$ has an acyclic $T_0$-odd orientation.
	The subgraph $G[V_1]$ is a cycle with $T_1$ satisfying $\calP$ and $\calS$ since $T_1 \neq V_1$, so it has an acyclic $T_1$-odd orientation by Lemma \ref{lem:cycle}.
	The subgraph $G[V_2]$ is a path and has an acyclic $T_2$-odd orientation by Lemma \ref{lem:tree}.
	We deduce that at least one decomposition among $\cD_1$ or $\cD_2$ is good.\newline

	\phantomsection
	\label{proof:cyl:odd:case:b2iii}
	\noindent \textbf{b.2.iii.~} Suppose that $T(X_0) = \emptyset$ and $T(X_{q-1}) = X_{q-1}$.
	We consider the $i$\textsuperscript{th}-column \& first-row $T$-decomposition $\cE = \langle V_0, V_1, V_2 \rangle  = \langle Y_i^{[\geq 1]}, X_0, V\setminus (Y_i^{[\geq 1]} \cup X_0) \rangle$ for some $i\in \{0, \dots, p-1\}$ (see Figure \ref{fig:odd_case_b2iii}).
	This decomposition satisfies $\calP$ (refer to \hyperref[decomposition:C3a]{(C3a)}):
	$$|T(V_0)| \equiv_2 |T(Y_i^{[\geq 1]})| \equiv_2 1 \equiv_2 q, \quad |T(V_1)| \equiv_2 |T(X_0)| \equiv_2 0 \equiv_2 p-1 $$
	We now show that $\cE$ is good.
	The subgraph $G[V_0]$ is a path and has an acyclic $T_0$-odd orientation by Lemma \ref{lem:tree}.
	The subgraph $G[V_1]$ is a cycle and $T_1$ satisfies $\calS$ since $T_1 \neq V_1$ from the hypothesis, so it has an acyclic $T_1$-odd orientation by Lemma \ref{lem:cycle}.
	The subgraph $G[V_2]$ is a grid with even dimensions, but $(G[V_2], T_2)$ is not a bad grid instance of Problem \ref{prob:AOP} as the corner vertex $(u_{i+1},v_{q-1}) \in T_2$, so we are done using Lemma \ref{lem:grid:notBadGrid}.\newline

	\phantomsection
	\label{proof:cyl:odd:case:b2iv}
	\noindent \textbf{b.2.iv.~} Suppose that $T(X_0) = \emptyset$ and $T(X_{q-1}) = \emptyset$, so that all vertices of odd degree are not in $T$.
	Thus, since $T$ satisfies $\overline \calS$, we have $T \neq V(G)\setminus \{X_0, X_{q-1}\}$ and there exists a vertex in $u \in V(G)\setminus \{X_0, X_{q-1}\}$ that does not belong to $T$ (otherwise it would precisely be the instance described in Figure \ref{fig:bad_cylindre}).
	But because $T$ satisfies $\calP$ we also have that $|V(G)\setminus T| \equiv_2 p\times q - |E(G)| \equiv_2 0$.
	Thus, there exist in fact two distinct vertices $u,v \in V(G)\setminus \{X_0, X_{q-1}\}$ such that $u,v \not \in T$.
	Moreover, because we assumed that for all $i \in \{0, \dots, p-1\}$, $|T(Y_i)| \equiv_2 p$, there must exist a $k \in \{0, \dots, p-1\}$ such that $|Y_k \setminus T(Y_k)| \geq 4$.
	We now consider an $i$\textsuperscript{th}-column \& first-row $T$-decomposition $\cF_1 = \langle V_0, V_1, V_2 \rangle  = \langle Y_i^{[\geq 1]}, X_0, V \setminus (Y_i^{[\geq 1]} \cup X_0) \rangle$ for some $i\in \{0, \dots, p-1\} \setminus \{k\}$ (see Figure \ref{fig:odd_case_b2iv}).
	It is essentially the same decomposition as in case \textbf{b.2.iii.} and it satisfies $\calP$ for the same reason.
	The subgraph $G[V_2]$ is a grid with even dimensions and $T_2$ verifies both $\calS$ and $\overline\calS$ in $G[V_2]$: indeed, since $Y_i \neq Y_k$ it is enough to look at vertices in $Y_k$, to find either two vertices of degree $4$ not in $T_2$ (i.e. in $Source(T_2)\cap Sink(T_2)$ of $G[V_2]$) whenever $k\neq i-1$ and $k \neq i+1$, or two vertices of degree $3$ in $T_2$ (i.e. in $Sink(T_2)\setminus Source(T_2)$ of $G[V_2]$) and a vertex of degree $3$ not in $T_2$ (i.e. in $Source(T_2)\setminus Sink(T_2)$ of $G[V_2]$) whenever $k = i-1$ or $k=i+1$.
	Now, if $(G[V_2],T_2)$ is a bad grid instance, we consider symmetrically the decomposition $\cF_2 = \langle V_0, V_1, V_2 \rangle = \langle Y_i^{[\leq q-2]}, X_{q-1}, V\setminus (Y_i^{[\leq q-2]} \cup X_{q-1})\rangle $ which satisfies $\calP$ as well (see Figure \ref{fig:odd_case_b2iv}).

	\noindent We now show that either $\cF_1$ or $\cF_2$ is good.
	The subgraph $G[V_0]$ is a path and has an acyclic $T_0$-odd orientation by Lemma \ref{lem:tree}.
	The subgraph $G[V_1]$ is a cycle and $T_1$ satisfies $\calS$ since $T_1 \neq V_1$ from the hypothesis, so it has an acyclic $T_1$-odd orientation by Lemma \ref{lem:cycle}.

	\noindent We now notice that $(G[V_2],T_2)$ cannot be a bad grid instance in both $\cF_1$ and $\cF_2$.
	Indeed, this would imply that $Y_j\setminus T(Y_j) = \{(u_j, v_0), (u_j, v_{q-1})\}$ for all $j \in \{0, \dots, p-1\}\setminus \{i\}$, which is a contradiction as $|Y_k \setminus T(Y_k)| \geq 4$ with $k \in \{0, \dots, p-1\}$ and $i \neq k$ by hypothesis.
	From this observation, we deduce that $G[V_2]$ has an acyclic $T_2$-odd orientation by Lemma \ref{lem:grid:notBadGrid} in either $\cF_1$ or $\cF_2$.
	Consequently, either $\cF_1$ or $\cF_2$ is good which ends the proof.
	

	\end{adjustwidth}
	\end{adjustwidth}
\end{proof}

\newpage
\begin{figure*}[htbp]
	\centering
	\begin{subfigure}[b]{0.66\textwidth}
		\centering
		\includegraphics[height=0.22\textheight, keepaspectratio]{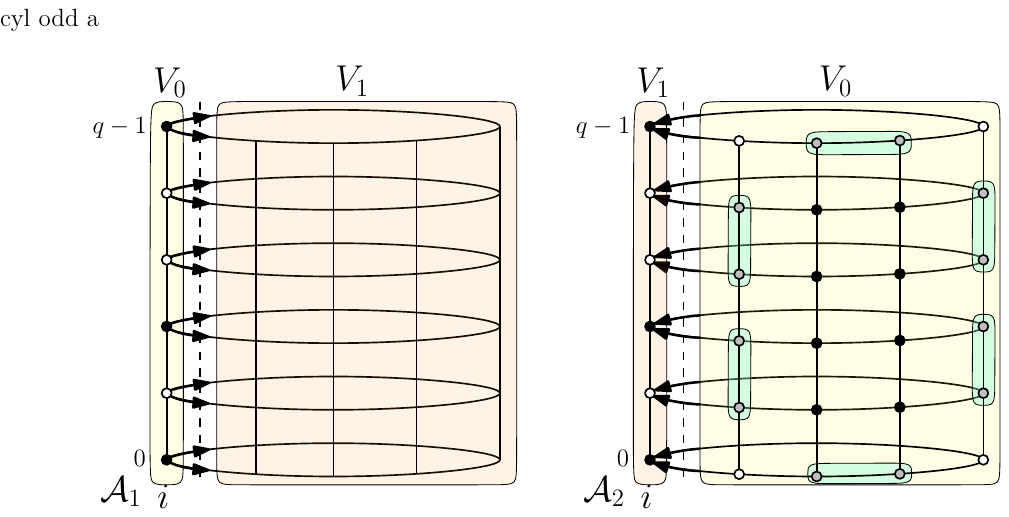}
		\caption{\hyperref[proof:cyl:odd:case:a]{Case a}: $q$ is even}
		\label{fig:odd_case_a}
	\end{subfigure}
	\hfill
	\begin{subfigure}[b]{0.33\textwidth}
		\centering
		\includegraphics[width = \textwidth, keepaspectratio]{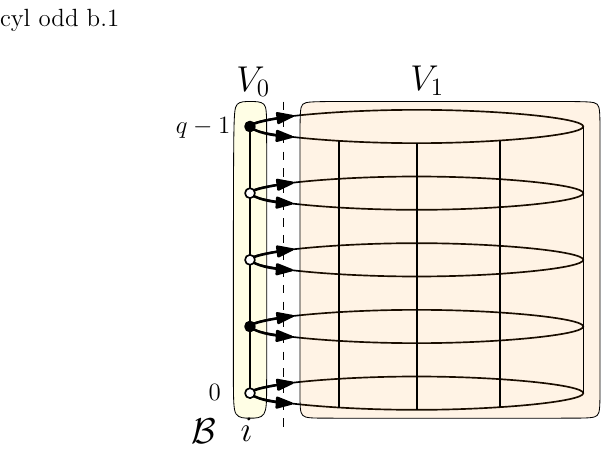}
		\caption{\hyperref[proof:cyl:odd:case:b1]{Case b.1}: $q$ is odd and $T(Y_i)$ satisfies $\calP$ in $G[Y_i]$}
		\label{fig:odd_case_b1}
	\end{subfigure}
	\hfill
	\begin{subfigure}[b]{\textwidth}
		\centering
		\includegraphics[height=0.22\textheight, keepaspectratio]{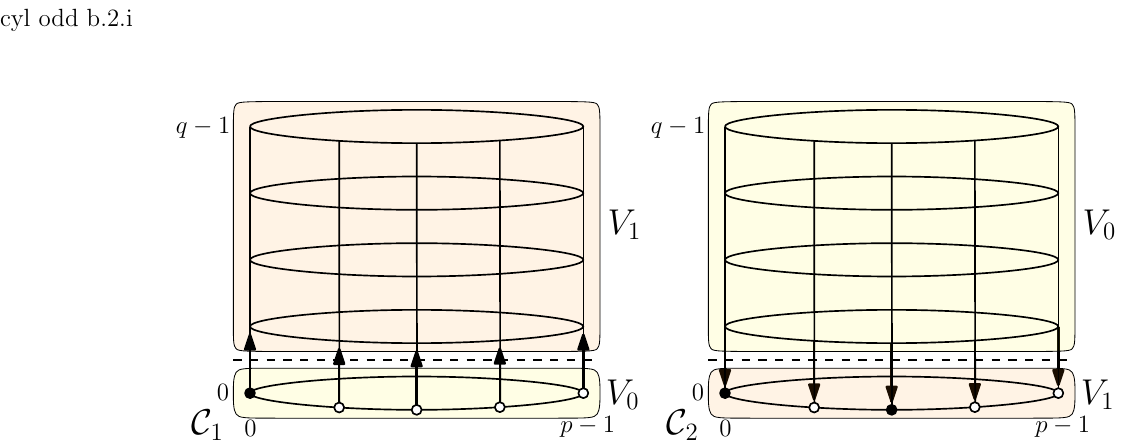}
		\caption{\hyperref[proof:cyl:odd:case:b2i]{Case b.2.i}: $q$ is odd, $T(X_0) \neq X_0$ and $T(X_0) \neq \emptyset$}
		\label{fig:odd_case_b2i}
	\end{subfigure}
	\hfill
	\begin{subfigure}[b]{\textwidth}
		\centering
		\includegraphics[ height=0.22\textheight, keepaspectratio]{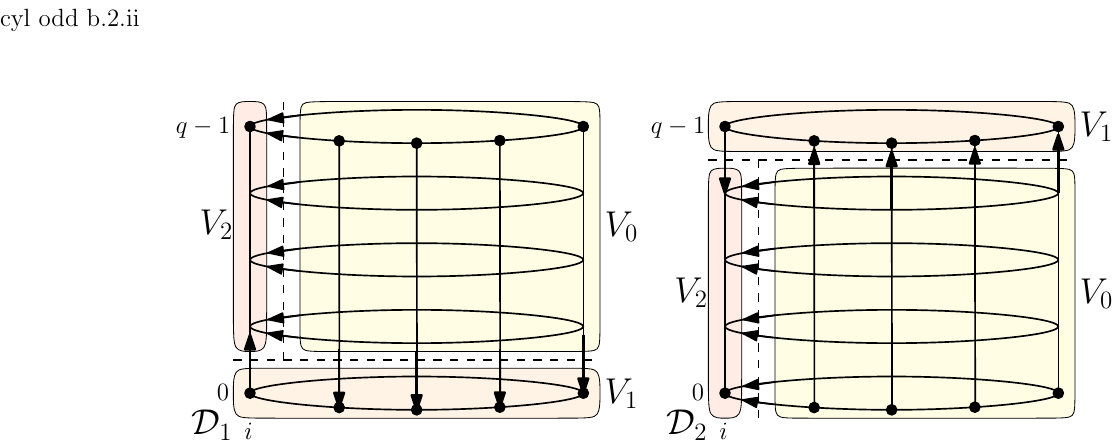}
		\caption{\hyperref[proof:cyl:odd:case:b2ii]{Case b.2.ii}: $q$ is odd, $T(X_0) = X_0$ and $T(X_{q-1}) = X_{q-1}$}
		\label{fig:odd_case_b2ii}
	\end{subfigure}
	\hfill
	\begin{subfigure}[b]{0.32\textwidth}
		\centering
		\includegraphics[width = \textwidth, keepaspectratio]{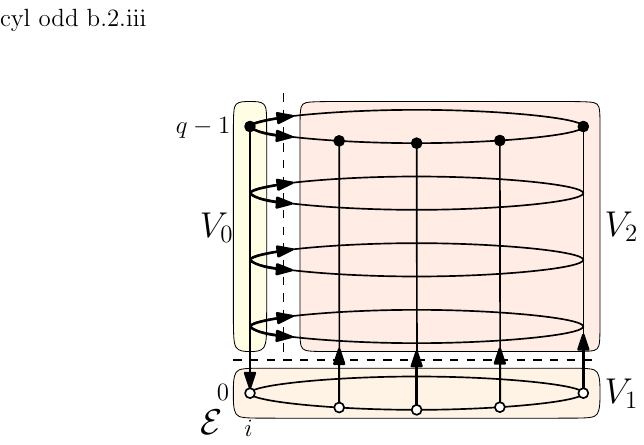}
		\caption{\hyperref[proof:cyl:odd:case:b2iii]{Case b.2.iii}: $q$ is odd, $T(X_0) = \emptyset$ and $T(X_{q-1}) = X_{q-1}$}
		\label{fig:odd_case_b2iii}
	\end{subfigure}
	\hfill
	\begin{subfigure}[b]{0.65\textwidth}
		\centering
		\includegraphics[width = \textwidth, keepaspectratio]{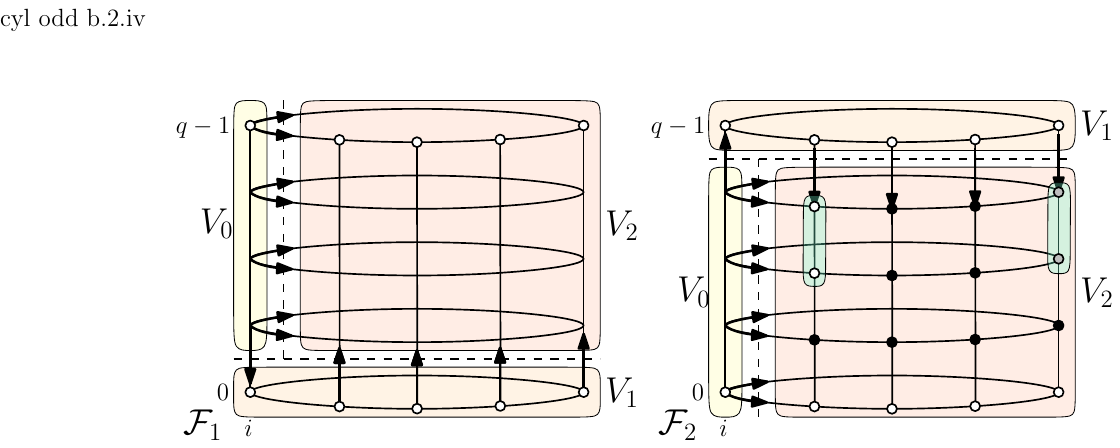}
		\caption{\hyperref[proof:cyl:odd:case:b2iv]{Case b.2.iv}: $q$ is odd, $T(X_0) = \emptyset$ and $T(X_{q-1}) = \emptyset$}
		\label{fig:odd_case_b2iv}
	\end{subfigure}
	\caption{$T$-decompositions used for the proof of Lemma \ref{lem:cyl:odd} in which $p$ is odd by hypothesis.}
\end{figure*}

\FloatBarrier

\begin{lemma}\label{lem:cyl:even}
Let $G$ be a cylinder $C_p\square P_q$ with $p\geq 4$ even and $q\geq 1$.\newline
Then $G$ admits an acyclic $T$-odd orientation for any subset $T\subseteq V(G)$ satisfying $\calP \calS \overline\calS$.
\end{lemma}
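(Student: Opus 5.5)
We follow the strategy of Lemma \ref{lem:cyl:odd}: build a good $T$-decomposition out of pieces already handled, namely paths (Lemma \ref{lem:tree}), cycles (Lemma \ref{lem:cycle}), odd-dimension grids (Lemma \ref{lem:grid:notBadPath}), and non-bad even grids (Lemma \ref{lem:grid:notBadGrid}). Then conclude by Theorem \ref{thm:T-odd decomposition}. Since $p$ is even, \hyperref[decomposition:C0]{(C0)} gives $|E(G)|\equiv_2 0$, so $\calP$ says $|T|$ is even. The case $q=1$ is Lemma \ref{lem:cycle}, so assume $q\geq 2$. We split according to the parity of $q$.

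\textbf{Case $q$ odd.} Removing a column $Y_i$ leaves the grid $P_{p-1}\square P_q$, whose dimensions are both odd. We look for an $i$ with $|T(Y_i)|\equiv_2 q-1\equiv_2 0$. If one exists, the $i$\textsuperscript{th}-column $T$-decomposition satisfies $\calP$ by \hyperref[decomposition:C2]{(C2)}, and it is good by Lemmas \ref{lem:tree} and \ref{lem:grid:notBadPath}. Otherwise every column has odd $|T(Y_i)|$. We then copy case \textbf{b.2} of Lemma \ref{lem:cyl:odd}, splitting on whether $T(X_0)$ and $T(X_{q-1})$ are proper nonempty, full, or empty, and using the first-row and $i$\textsuperscript{th}-column \& first-row decompositions. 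The parity requirements \hyperref[decomposition:C1]{(C1)}, \hyperref[decomposition:C3a]{(C3a)} and \hyperref[decomposition:C3b]{(C3b)} need rechecking with $p$ even. The leftover grid $V\setminus(Y_i^{[\geq1]}\cup X_0)$ is $P_{p-1}\square P_{q-1}$ with $p-1$ odd, so Lemma \ref{lem:grid:notBadPath} applies directly and no bad-grid analysis is needed. In the first-row case the piece $X_{\geq 1}$ is a cylinder with $p$ even and $q-1$ even. We handle it either by an induction on $q$ inside this lemma or by the even-$q$ case below, and in either case we must verify that $T_1$ satisfies $\calS\overline\calS$ there.

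\textbf{Case $q$ even.} Now $|V(G)|$ is even and removing a column leaves $P_{p-1}\square P_q$, which has one odd dimension. So again it suffices to find a column with $|T(Y_i)|$ odd, by \hyperref[decomposition:C2]{(C2)} with $q-1$ odd. If all columns have even $|T(Y_i)|$, we use rows instead. The piece $X_0$ is a cycle and $X_{\geq1}$ is a cylinder with odd height $q-1$. We choose the order $\langle X_0,X_{\geq1}\rangle$ or $\langle X_{\geq1},X_0\rangle$ according to the parity of $|T(X_0)|$, exactly as in case \textbf{b.2.i} of Lemma \ref{lem:cyl:odd}. Then the cycle part needs $T_i\neq V_i$, and the cylinder part needs $\calS\overline\calS$, which comes from induction on $q$. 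When $T(X_0)$ or $T(X_{q-1})$ is full or empty, we use the column \& row decompositions; the complementary grid is then $P_{p-1}\square P_{q-1}$, with both dimensions odd, so it is harmless.

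\textbf{Main obstacle.} The hard part is ensuring that, in the inductive sub-cylinder (or the cycle), the induced $T_i$ still satisfies $\calS$ and $\overline\calS$. Recall that in a cylinder these fail exactly when $T_i$ is everything or when $T_i$ equals the interior rows. We will avoid these configurations by choosing between the symmetric variants (top row versus bottom row, and different columns $i$), in the style of the $\cD_1/\cD_2$ and $\cF_1/\cF_2$ arguments. We will show the variants cannot all fail simultaneously unless $T$ itself violates $\calS$ or $\overline\calS$. The extremal configurations, $T=V$ or $V\setminus T=X_0\cup X_{q-1}$, are excluded by hypothesis. Near-extremal ones are handled by the freedom in choosing $i$, using $p\geq4$.
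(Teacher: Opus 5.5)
There is a genuine gap. It is a parity obstruction, not the $\calS\overline\calS$ bookkeeping you identify as the main difficulty. Since $p$ is even, $|E(X_0)|\equiv_2|E(X_{\geq1})|\equiv_2 0$ and $|T|$ is even. So by (C1), $\langle X_0,X_{\geq1}\rangle$ satisfies $\calP$ iff $|T(X_0)|$ is even, and $\langle X_{\geq1},X_0\rangle$ satisfies $\calP$ iff $|T(X_{\geq1})|$ is even, which is again iff $|T(X_0)|$ is even.

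The trick of case b.2.i of Lemma \ref{lem:cyl:odd} (choose the order according to the parity of $|T(X_0)|$) works only because $p$ is odd there. For $p$ even, neither first-row order is available when $|T(X_0)|$ is odd. Your claim in the even-$q$ case that you can choose the order ``exactly as in case b.2.i'' is therefore false.

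Symmetrically, with $p$ even, (C3a) and (C3b) both require $|T(X_0)|$ odd. So the $i$\textsuperscript{th}-column \& first-row decompositions cannot be used when $T(X_0)$ is full or empty, since then $|T(X_0)|\in\{0,p\}$ is even. That is exactly where you want to use them, as do the copied cases b.2.ii--b.2.iv: the roles of the two families are swapped relative to the odd case.

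A concrete instance your plan does not reach is $p=4$, $q=2$, $T=Y_0$. All degrees are $3$, $|T|=2$ and $Sink(T)=T$, so $T$ satisfies $\calP\calS\overline\calS$. Every column has $|T(Y_i)|$ even, so no column decomposition applies. Also $|T(X_0)|=|T(X_1)|=1$, so no first-row decomposition satisfies $\calP$ from either side. For $q$ odd, the instance $p=4$, $q=3$, $T=\{(u_0,v_0),(u_1,v_2),(u_2,v_1),(u_3,v_1)\}$ plays the same role.

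The missing idea is the paper's case a. If $|T(X_0)|$ is odd then, because $p$ is even, some three consecutive vertices $w_1,w_2,w_3$ of $X_0$ contain an even number of vertices of $T$. Otherwise, summing over the $p$ windows gives $3|T(X_0)|\equiv_2 p\equiv_2 0$.

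Take $w\notin T$ among them. The decomposition $\langle\{w\},X_{\geq1},X_0\setminus\{w\}\rangle$ satisfies $\calP$: the single flipped neighbour of $w$ in $X_1$ corrects the parity of the sub-cylinder. What remains of the row is a path, which lies in $\cC_{\calP}$ and needs no $\calS\overline\calS$ check. If $T_1$ is one of the two bad configurations of the sub-cylinder, use $\langle\{w_1,w_2,w_3\},X_{\geq1},X_0\setminus\{w_1,w_2,w_3\}\rangle$ instead. It flips three neighbours rather than one and avoids both bad configurations.

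Only once $|T(X_0)|$ and $|T(X_{q-1})|$ are both even does an induction on $q$ through first-row decompositions go through. Even then, some configurations need an extra decomposition. For example, when $T=V\setminus(X_0\cup X_1)$ the paper peels off a $4$-cycle from the first two rows.

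Your column step is correct and is a legitimate opener: a column with $|T(Y_i)|\equiv_2 q-1$ leaves $P_{p-1}\square P_q$ with $p-1$ odd, which is in $\cC_{\calP}$. It does not, however, remove the need for this row-peeling argument.
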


\begin{proof}
Let $G$ be a cylinder $C_p\square P_q$ with $p\geq 4$ even and $q > 1$ (for $q = 1$, refer to Lemma \ref{lem:cycle}) and let $T \subseteq V(G)$ such that it satisfies $\calP \calS \overline\calS$.
We go through multiple cases:

\phantomsection
\label{proof:cyl:even:case:a}
\noindent \textbf{a.~} Suppose that $T(X_0)$ does not satisfy $\calP$ in $G[X_0]$ i.e.
$|T(X_0)| \equiv_2 1 \not \equiv_2 |E(G)|$ by \hyperref[decomposition:C0]{(C0)}.
Then there must exist a subpath of $3$ adjacent vertices $w_1, w_2, w_3$ in $X_0$ such that $|T(\{w_1, w_2, w_3\})| \equiv_2 0$.
Indeed, suppose it is not the case, then:
$$
	3|T(X_0)| = \sum_{i=0}^{p-1} |T(\{(u_i,v_0), (u_{i+1}, v_0), (u_{i+2}, v_0) \})| \equiv_2 p
$$
But $p$ is even, so $|T(X_0)| \equiv_2 0$ and it satisfies $\calP$ in $G[X_0]$, a contradiction.
Let $w \in \{w_1, w_2, w_3\}$ such that $w \not \in T$.
We consider the decomposition $\cA_1 = \langle V_0, V_1, V_2 \rangle = \langle \{w\}, X_{\geq 1}, X_0\setminus \{w\} \rangle$ (see Figure \ref{fig:cyl:even_case_a}).
$T(V_0)$ trivially satisfies $\calP$ in $G[V_0]$, and we have $|T_1| = |T(V_1)|+ 1 = |T| - |T(V_0)| + 1 \equiv_2 0 \equiv_ 2 |E(G[V_1])|$, so using Lemma \ref{lem:decomposition_P} it shows that $\cA_1$ satisfies $\calP$.
If $T_1$ does not satisfy either $\calS$ or $\overline\calS$ in $G[V_1]$, then, instead of $\cA_1$, we consider the decomposition $\cA_2 = \langle V_0, V_1, V_2 \rangle = \langle \{w_1, w_2, w_3\}, X_{\geq 1}, X_0\setminus \{w_1, w_2, w_3\} \rangle$ (see Figure \ref{fig:cyl:even_case_a}) which also satisfies $\calP$.
Indeed, by Lemma \ref{lem:decomposition_P}, it is enough to check the following:
\begin{align*}
	&|T_0| \equiv_2 |T(V_0)| \equiv_2 0 \equiv_2 |E(G[V_0])| \\
	&|T_2| \equiv_2 |T(V_2)| - |Z_2| \equiv_2 1 - 1 \equiv_2 0 \equiv_2 |E(G[V_2])|
\end{align*}

\noindent Now, in both partitions, $V_0$ and $V_2$ are paths and admit an acyclic $T_0$-odd and $T_2$-odd orientation respectively by Lemma \ref{lem:tree}.
Moreover, $V_1$ induces a non empty cylinder (as $q>1$) and we show that it admits an acyclic $T_1$-odd orientation by induction on Lemma \ref{lem:cyl:even} in at least one of the partitions.
First, if we consider $\cA_1$ then $T_1$ satisfies both $\calS$ or $\overline\calS$ in $G[V_1]$ by hypothesis. Second, if we consider $\cA_2$, then we define $w_1', w_2', w_3' \in V_1$ as the three neighbors of $w_1, w_2, w_3$ in $V_1$, and we denote $w'\in \{w_1', w_2', w_3'\}$ the neighbor of $w$. By hypothesis, $T(V_1)\triangle \{w'\}$ does not satisfy either $\calS$ or $\overline\calS$ in $G[V_1]$ i.e. $T(V_1)\triangle \{w'\} = V_1$ or  $T(V_1)\triangle \{w'\} = V_1\setminus (X_1 \cup X_{q-1})$.
But then $T(V_1)\triangle \{w_1', w_2', w_3'\} = T_1$ is different from both $V_1$ and $V_1\setminus (X_1 \cup X_{q-1})$ and as such satisfies $\calS$ and $\overline\calS$ in $G[V_1]$ (see Figure \ref{fig:bad_cylindre}).
Hence, either $\cA_1$ or $\cA_2$ is good.\newline

\noindent \textbf{b.~}From now on, we assume that $T(X_0)$ satisfies $\calP$ in $G[X_0]$.
This implies that $|T(X_{\geq 1})| \equiv_2 p$, so that $T(X_{\geq 1}$ also satisfies $\calP$ in $G[X_{\geq 1}]$.
And up to symmetry, we can assume that $T(X_{q-1})$ and $T(X_{\leq q-2})$ satisfy $\calP$ as well in $G[X_{q-1}]$ and $G[X_{\leq q-2}]$ respectively.

\begin{adjustwidth}{2ex}{0pt}
\phantomsection
\label{proof:cyl:even:case:b1}
\noindent \textbf{b.1.~}Suppose that neither $T(X_{\geq 1})$ nor $T(X_{\leq q-2})$ satisfies both $\calS$ and $\overline\calS$ in $G[X_{\geq 1}]$ and $G[X_{\leq q-2}]$ respectively.
Then either $T = V(G)$ in which case $T$ does not satisfy $\calS$ which is a contradiction, or $q=3$ and $T = \emptyset$. In this case, we define the $i$\textsuperscript{th}-column $T$-decomposition $\cB = \langle V_0, V_1 \rangle = \langle Y_i, V(G)\setminus Y_i \rangle$. It satisfies $\calP$ for any $0\leq i\leq p-1$ (see Figure \ref{fig:cyl:even_case_b1}): indeed, we refer to \hyperref[decomposition:C1]{(C1)}, and remark that $|T(V_0)| \equiv_2 |T(Y_i)| \equiv_2 p$.
We now show that the decomposition is good.
$V_0$ induces a path so it has an acyclic $T_0$-odd orientation by Lemma \ref{lem:tree}.
The graph $G[V_1]$ is a grid with an odd dimension, hence it has an acyclic $T_1$-odd orientation by Lemma \ref{lem:grid:notBadPath}.\newline

\noindent \textbf{b.2.~}Assume without loss of generality that $T(X_{\geq 1})$ satisfies both $\calS$ and $\overline\calS$ in $G[X_{\geq 1}]$.
\begin{adjustwidth}{2ex}{0pt}
\phantomsection
\label{proof:cyl:even:case:b2i}
\noindent \textbf{b.2.i.~} Suppose that $T(X_0) \neq \emptyset$.
We consider the first-row $T$-decomposition $\cC = \langle V_0, V_1 \rangle = \langle X_{\geq 1}, X_0 \rangle$ (see Figure \ref{fig:cyl:even_case_b2i}).
This decomposition satisfies $\calP$ as $T(X_0)$ satisfies $\calP$ in $G[X_0]$ by hypothesis and $T_1$ satisfies $\calP$ in $G[V_1]$ by Lemma \ref{lem:decomposition_P}.

\noindent Furthermore, it is good.
Indeed, $G[V_0]$ is a cylinder with an even dimension and has an acyclic $T_0$-odd orientation by induction, and $G[V_1]$ is a cycle in which $T_1$ satisfies $\calS$ since  $T_1 = X_0 \triangle T(X_0) \neq X_0 = V_1$ from the hypothesis, so it has an acyclic $T_1$-odd orientation by Lemma \ref{lem:cycle}.\newline

\phantomsection
\label{proof:cyl:even:case:b2ii}
\noindent \textbf{b.2.ii.~} Suppose that $T(X_0) = \emptyset$.
We consider the first-row $T$-decomposition $\cD_1 = \langle V_0, V_1 \rangle = \langle X_0, X_{\geq 1} \rangle$ (see Figure \ref{fig:cyl:even_case_b2ii}).
The decomposition $\cD_1$ satisfies $\calP$: indeed, we refer to \hyperref[decomposition:C1]{(C1)} and remark that $|T(V_0)| \equiv_2 |T(X_0)| \equiv_2 p$.
Note also that $G[V_0]$ has an acyclic $T_0$-odd orientation by Lemma \ref{lem:cycle}. We now have three cases to consider:
\newline
\noindent \textbf{-~}If $T_1$ satisfies both $\calS$ and $\overline\calS$ in $G[V_1]$ then $G[V_1]$ has an acyclic $T_1$-odd orientation by induction and $\cD_1$ is good.\newline
\noindent \textbf{-~}If $T_1$ does not satisfy $\overline\calS$ in $G[V_1]$, then $T_1 = X_{\geq 1}\setminus \{X_1,X_{q-1}\}$ and $T = V(G)\setminus \{X_0,X_{q-1}\}$, which is precisely saying that $T$ does not satisfy $\overline\calS$ (see Figure \ref{fig:bad_cylindre}), a contradiction.\newline
\noindent \textbf{-~}Finally, if $T_1$ does not satisfy $\calS$ in $G[V_1]$, then $T_1 = X_{\geq 1}$, which implies that $T = V(G)\setminus \{X_0,X_1\}$.
If $q=2$ then $T$ does not satisfy $\overline\calS$ as every vertex of $G$ would be of degree $3$ but none of them would belong to $T$. Hence, we can assume that $q\geq 3$. We define $C_4 = \{(u_0,v_0), (u_0, v_1),(u_1,v_0), (u_1,v_1)\}$ and consider the $T$-decomposition $\cD_2 = \langle V_0,V_1,V_2 \rangle = \langle C_4, X_{\geq 2}, X_{< 2}\setminus C_4 \rangle$ (see Figure \ref{fig:cyl:even_case_b2ii}).
By Lemma \ref{lem:decomposition_P}, it satisfies $\calP$. Indeed, we have:
\begin{align*}
	&|T_0| \equiv_2 |T(V_0)| \equiv_2 0 \equiv_2 |E(G[V_0])| \\
	&|T_2| \equiv_2 |T(V_2)| - |Z_2| \equiv_2 0 - 0 \equiv_2 |E(G[V_2])| &\text{by \hyperref[decomposition:G0]{(G0)}}
\end{align*}

\noindent We now show that it is good.
$G[V_0]$ is a cycle and admits an acyclic $T_0$-odd orientation by Lemma \ref{lem:cycle}. $G[V_1]$ is a cylinder with $T_1$ satisfying $\calP\calS\overline\calS$ since $ (u_0, v_2), (u_1, v_2) \not \in T_1$ (i.e. in $Source(T_1)$ of $G[V_1]$), $(u_2,v_2) \in T_1$ is of degree $3$ in $G[V_1]$ (i.e. in $Sink(T_1)\setminus Source(T_1)$ of $G[V_1]$) .
So $G[V_1]$ has an acyclic $T_1$-odd orientation by induction.
Finally, $(G[V_2],T_2)$ is not a bad grid instance since both corner vertices $(u_3,v_1)$ and $(u_{p-1},v_{1})$ are not in $T_2$. Hence it has an acyclic $T_2$-odd orientation by Lemma \ref{lem:grid:notBadPath}.
\end{adjustwidth}
\end{adjustwidth}
\end{proof}

\begin{figure*}[htbp]
	\centering
	\begin{subfigure}[b]{\textwidth}
		\centering
		\includegraphics[height=0.25 \textheight, keepaspectratio]{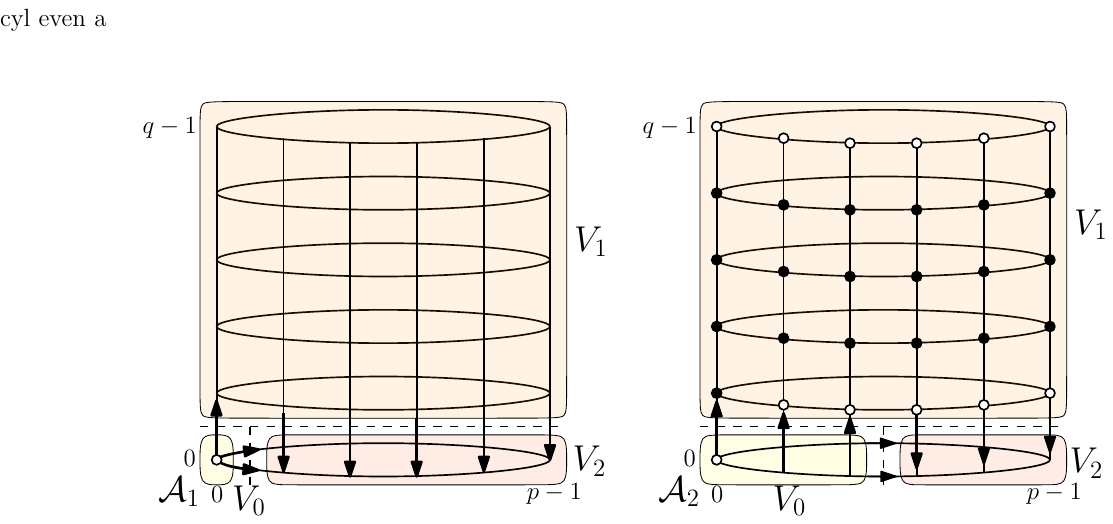}
		\caption{\hyperref[proof:cyl:even:case:a]{Case a}: $T(X_0)$ does not satisfy $\calP$ in $G[X_0]$.}
		\label{fig:cyl:even_case_a}
	\end{subfigure}
	\hfill
	\begin{subfigure}[b]{0.49\textwidth}
		\centering
		\includegraphics[height=0.25\textheight, keepaspectratio]{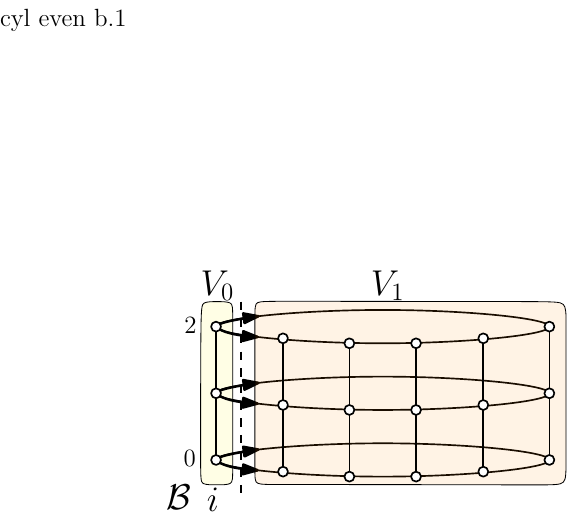}
		\caption{\hyperref[proof:cyl:even:case:b1]{Case b.1}: neither $T(X_{\geq 1})$ nor $T(X_{\leq q-2})$ satisfies both $\calS$ and $\overline\calS$.}
		\label{fig:cyl:even_case_b1}
	\end{subfigure}
	\hfill
	\centering
	\begin{subfigure}[b]{0.49\textwidth}
		\centering
		\includegraphics[height=0.25\textheight, keepaspectratio]{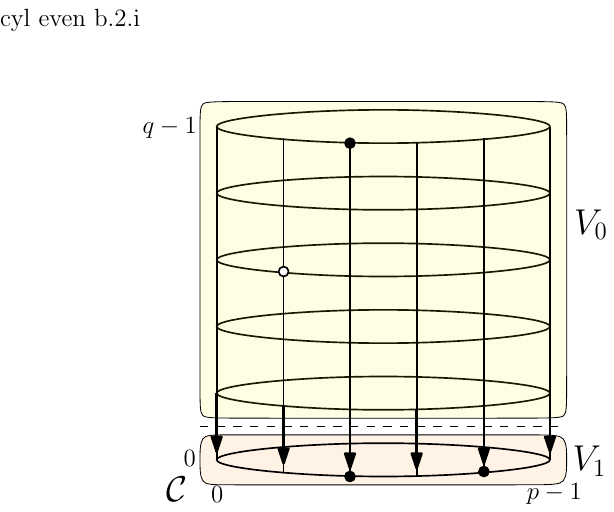}
		\caption{\hyperref[proof:cyl:even:case:b2i]{Case b.2.i}: $T(X_{\geq 1})$ satisfies both $\calS$ and $\overline\calS$ and $T(X_0) \neq \emptyset$.}
		\label{fig:cyl:even_case_b2i}
	\end{subfigure}
	\hfill
	\begin{subfigure}[b]{\textwidth}
		\centering
		\includegraphics[ height=0.25\textheight, keepaspectratio]{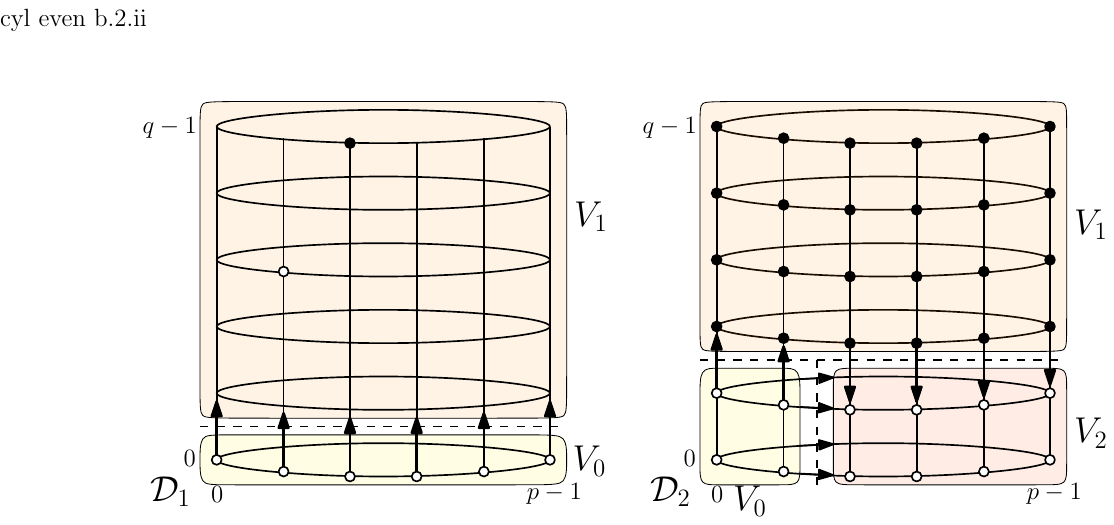}
		\caption{\hyperref[proof:cyl:even:case:b2ii]{Case b.2.ii}: $T(X_{\geq 1})$ satisfies both $\calS$ and $\overline\calS$ and $T(X_0) = \emptyset$}
		\label{fig:cyl:even_case_b2ii}
	\end{subfigure}
	\caption{$T$-decompositions used for the proof of Lemma \ref{lem:cyl:even} in which $p$ is even by hypothesis.}
\end{figure*}

\FloatBarrier

\noindent \textbf{Quasi 2-cylinders}\newline

\noindent In this subsection, we take care of special instances that prove to be useful in the proof of Theorem \ref{thm:tore_carac} for tori.

\noindent We define a quasi 2-cylinder $Q_p$ as a cylinder $C_p\square P_2$ for some integer $p \geq 4$ from which the vertex $x = (u_0,v_1)$ is removed.
We abusively use the notations defined for cylinders and tori, and as such, we note the set $\{(u_1, v_1), \dots, (u_{p-1}, v_1)\}$ as $X_1$ and $\{(u_0, v_0)\}$ as $Y_0$.
It is useful to keep in mind that:
\begin{equation*}
	|V(Q_p)| \equiv_2 1 \text{  and  } |E(Q_p)| \equiv_2 p-1 \tag{QC}\label{decomposition:QC}
\end{equation*}

\noindent Also, in a quasi 2-cylinder $Q_p$, a subset $T$ of the vertex set does not satisfy $\calS$ if and only if $T = V(Q_p)$ and does not satisfy $\overline\calS$ if and only if $T = \{(u_0, v_0), (u_1, v_1), (u_{p-1}, v_1)\}$ (see Figure \ref{fig:bad_quasi2cylindre}).

\begin{figure}[ht]
	\centering
	\includegraphics[width = 0.5 \textwidth, keepaspectratio]{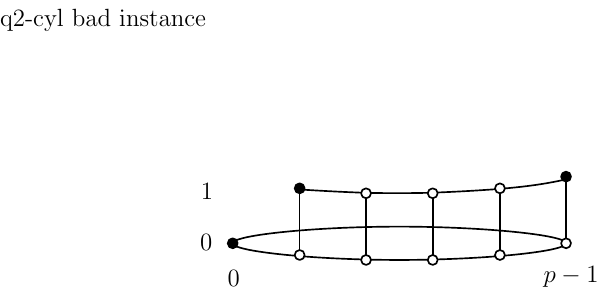}
	\caption{A quasi 2-cylinder $Q_p$ and a subset $T$ satisfying $\calP$ and $\calS$ but not $\overline\calS$.}
	\label{fig:bad_quasi2cylindre}
\end{figure}

\FloatBarrier

\begin{lemma}\label{lem:quasi-2cylindre}
Let $p \geq 4$.
The quasi 2-cylinder $G = (C_p \square P_2) \setminus \{x\}$ admits an acyclic $T$-odd orientation for any subset $T\subseteq V(G)$ satisfying:
\begin{itemize}
	\item $\calP$ if $p$ is odd.
	\item $\calP \calS \overline\calS$ if $p$ is even.
\end{itemize}

\end{lemma}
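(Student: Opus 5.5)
The quasi 2-cylinder $Q_p$ consists of the row $X_0$, which induces the cycle $C_p$, and the row $X_1$, which induces a path on $p-1$ vertices. They are joined by the $p-1$ rungs $(u_i,v_0)(u_i,v_1)$ for $1\le i\le p-1$. I would build a good $T$-decomposition in every case and conclude with Theorem \ref{thm:T-odd decomposition}. I would use Lemma \ref{lem:decomposition_P} throughout, so that with two or three blocks only one or two parities need checking. By (QC) and $\calP$ we have $|T|\equiv_2 p-1$. The argument splits according to the parity of $|T(X_0)|$ and the colour of the special vertex $y=(u_0,v_0)$, the unique degree-$2$ vertex of $X_0$.

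\textbf{Step 1 (row decompositions).} Suppose first that $|T(X_0)|\equiv_2 p$, so that $T(X_0)$ satisfies $\calP$ in the cycle $G[X_0]$. I take $\langle X_0,X_1\rangle$. Then $X_1$ is a path, handled by Lemma \ref{lem:tree}, and $X_0$ is handled by Lemma \ref{lem:cycle} unless $T(X_0)=X_0$. Suppose instead that $|T(X_0)|\equiv_2 p-1$. Then $|T(X_1)|\equiv_2 0$, which is exactly $\calP$ for the path $X_1$ when $p$ is even. In that case I take $\langle X_1,X_0\rangle$. Here $Z_1=X_0\setminus\{y\}$ and $T_1=T(X_0)\triangle Z_1$, and $T_1$ satisfies $\calS$ in the cycle unless $T(X_0)\cap Z_1=\emptyset$ and $y\in T$.

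\textbf{Step 2 (peeling the vertex $y$).} This step covers the leftover configurations. If $y\notin T$, I take $\langle \{y\}, V\setminus\{y\}\rangle$. The second block is the grid $P_2\square P_{p-1}$, with $T$ flipped at $(u_1,v_0)$ and $(u_{p-1},v_0)$.
\begin{itemize}
\item When $p$ is even, the grid has an odd dimension, so it is solvable by Lemma \ref{lem:grid:notBadPath} whenever $\calP$ holds.
\item When $p$ is odd, I apply Theorem \ref{thm:grid:carac}. A bad instance would force all four corners into $T_1$ and the pair condition along the long rows. In that situation I would swap to the order $\langle V\setminus\{y\},\{y\}\rangle$, which is valid exactly because $y\notin T$ and $y$ has an even number of neighbours in the first block. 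Its first block carries $T$ unflipped, so its two corners $(u_1,v_0),(u_{p-1},v_0)$ change colour relative to the previous case. Hence the two grid instances cannot both be bad.
\end{itemize}
If $y\in T$, this step is unavailable, and I fall back on Step 1 or on a column-type decomposition $\langle Y_i,\dots\rangle$ analogous to (C2).

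\textbf{Step 3 (necessary conditions).} The remaining failures of Step 1 are $T(X_0)=X_0$, and $T\cap X_0=\{y\}$ with everything forced. I would compare these with the two exceptional sets violating $\calS$ or $\overline\calS$, namely $T=V(Q_p)$ and $T=\{(u_0,v_0),(u_1,v_1),(u_{p-1},v_1)\}$ (Figure \ref{fig:bad_quasi2cylindre}). When $p$ is odd, I would also check that $\calP$ alone excludes these sets: $|V(Q_p)|$ is odd and $p-1$ is even, and the second set has odd size. In the remaining subcases some vertex of $X_1$ is white. For such a configuration I would cut a rung or a short column, as in case b.2.ii of Lemma \ref{lem:cyl:even}: remove a small cycle $C_4$ or a column first, and leave a grid or path whose corner colours certify, via Theorem \ref{thm:grid:carac}, that it is not a bad grid instance.

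I expect the main obstacle to be the case $p$ odd with $|T(X_0)|\equiv_2 p-1$ and $y\in T$. There, neither row order satisfies $\calP$ and $y$ cannot be peeled off, so I would first try the decomposition $\langle Y_i, V\setminus Y_i\rangle$ for a well-chosen $i$. The second block is again a grid-like piece, and the choice of $i$ has to be tuned, using the white vertices, so that it is not a bad grid instance.
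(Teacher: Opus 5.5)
Your Steps 1 and 2 are correct. The row decompositions $\langle X_0,X_1\rangle$ and, for $p$ even, $\langle X_1,X_0\rangle$ work as you say. Peeling $y=(u_0,v_0)$ either first or last also works: the two resulting ladder instances give opposite colours to the corners $(u_1,v_0),(u_{p-1},v_0)$, so they cannot both be bad, and this disposes of every $T$ with $y\notin T$. It also gives a clean treatment of the subcase $p$ odd, $T(X_0)=\emptyset$, where the paper's written decomposition uses the deleted vertex $x=(u_0,v_1)$.

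The gap is that every configuration with $y\in T$ that escapes Step 1 is left without a construction. Two of these are easy, but you did not close them.
\begin{itemize}
\item For $p$ odd and $T(X_0)=X_0$, you missed that $|T(X_1)|$ is then odd, i.e. $\equiv_2 p-2=|E(G[X_1])|$. So $\langle X_1,X_0\rangle$ satisfies $\calP$ and works, with $T_1=\{y\}\neq X_0$.
\item For $p$ odd, $|T(X_0)|$ even and $y\in T$ (your ``main obstacle''), the suggested $\langle Y_i,V\setminus Y_i\rangle$ does not leave a grid. For $2\le i\le p-2$ the second block is two ladders joined through $y$; for $i\in\{1,p-1\}$ it is a ladder with $y$ pendant. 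So Theorem~\ref{thm:grid:carac} does not apply as you intend. The paper instead takes a black $w=(u_i,v_0)$ with $i\ge 1$, which exists because $|T(X_0)|$ is even and nonzero. It then uses $\langle X_0\setminus\{w\},X_1,\{w\}\rangle$, which is three paths.
\end{itemize}

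The real content of the lemma lies in the two cases for $p$ even that your Step 3 only gestures at: $T(X_0)=X_0$ and $T(X_0)=\{y\}$. These are exactly where $\calS$ and $\overline\calS$ are needed. Knowing that some vertex of $X_1$ is white is not enough, and ``cut a rung or a $C_4$'' is not a construction. What is missing is a parity scan along the path $X_1$.
\begin{itemize}
\item If $T(X_0)=X_0$, then $|T(X_1)|$ is odd. One shows that, up to the reflection $u_i\mapsto u_{p-i}$, some $l\in[1,\frac{p-2}{2}]$ has $|T(X_1^{[\geq 2l]})|$ odd. Otherwise every vertex of $X_1$ is forced into $T$, and $T=V$ violates $\calS$. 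Then $\langle X_1^{[\geq 2l]},X_0,X_1^{[<2l]}\rangle$ is good: a path, then the cycle with $T_1=X_0\setminus X_0^{[\geq 2l]}\neq X_0$, then a path.
\item If $T(X_0)=\{y\}$, then $|T(X_1)|$ is even. One shows that, up to reflection, some $l$ has $|T(X_1^{[\geq 2l]})|$ even. Otherwise $T=\{y,(u_1,v_1),(u_{p-1},v_1)\}$, which violates $\overline\calS$. Then the column decomposition $\langle Y_{\geq 2l},Y_0,Y_{<2l}\setminus Y_0\rangle$ is good. Its first block is an even ladder that is not a bad grid instance because its corner $(u_{p-1},v_0)$ is white. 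Next comes $y$, which is flipped once and so carries $T_1=\emptyset$. Last is a ladder with an odd dimension.
\end{itemize}
Without these arguments the proposal does not prove the lemma.
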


\begin{proof}
Let $p \geq 4$ and $G=(C_p \square P_2) \setminus \{x\}$.

\noindent \textbf{a.~} Suppose that $p \equiv_2 1$ and let $T\subseteq V(G)$ be a subset satisfying $\calP$.
Then by \hyperref[decomposition:QC]{(QC)}, we have that $|T| \equiv_2 |E(G)| \equiv_2 0$ and $|V(G)\setminus T| \equiv_2 1$.
Hence $T$ satisfies both $\calS$ and $\overline\calS$ as well.
\begin{adjustwidth}{2ex}{0pt}
\phantomsection
\label{proof:quasi-2cylindre:case:a1}
\noindent \textbf{a.1.~} Suppose that $T(X_0)$ satisfies $\calP$ in $G[X_0]$, i.e.,
$|T(X_0)| \equiv_2 p \equiv_2 1$.
Then $T(X_1)$ satisfies $\calP$ in $G[X_1]$ as well, indeed:
\begin{align*}\tag{$*$}
	|T(X_1)| &\equiv_2 |T| - |T(X_0)|\\
	&\equiv_2 |E(G)| - |E(G[X_0])|\\
	&\equiv_2 |E(G[X_1])| + |\delta(X_0, X_1)| - 1\\
	&\equiv_2 |E(G[X_1])|
\end{align*}
\noindent \textbf{-~}If $T(X_0) \neq \{(u_0, v_0)\}$, we consider the $T$-decomposition $\cA_2 = \langle  V_0, V_1 \rangle = \langle  X_1, X_0 \rangle$ (see Figure \ref{fig:q2-cyl:case_a1}).
By Lemma \ref{lem:decomposition_P}, it satisfies $\calP$ since $|T_0| = |T(X_1)|$ satisfies $\calP$ in $G[V_0]$.
Furthermore, it is good. Indeed, the path $G[V_0]$ admits an acyclic $T_0$-odd orientation by Lemma \ref{lem:tree}. Moreover, since $T(X_0) \neq \{(u_0,v_0)\}$, there exists $(u_i,v_0) \in T(X_0)$.
Thus $(u_i,v_0) \in Source(T_1)$ in $G[V_1]$  (see Figure \ref{fig:q2-cyl:case_a1}), so $T_1$ satisfies $\calS$ in $G[V_1]$ and the cycle $G[V_1]$ admits an acyclic $T_1$-odd orientation by Lemma \ref{lem:cycle}.\newline

\noindent \textbf{-~} Else if $T(X_0) = \{(u_0, v_0)\}$, we consider the $T$-decomposition $\cA_1 = \langle  V_0, V_1 \rangle = \langle  X_0, X_1 \rangle$ (see Figure \ref{fig:q2-cyl:case_a1}).
By Lemma \ref{lem:decomposition_P}, it satisfies $\calP$ since $|T_0| = |T(X_0)|$ satisfies $\calP$ in $G[V_0]$.
Furthermore it is good, because $T(X_0)\neq X_0$, so $T_0$ satisfies $\calS$ in $G[V_0]$ and the cycle $G[V_0]$ admits an acyclic $T_0$-odd orientation by Lemma \ref{lem:cycle}.
Moreover, the path $G[V_1]$ admits an acyclic $T_1$-odd orientation by Lemma \ref{lem:tree}.\newline

\phantomsection
\label{proof:quasi-2cylindre:case:a2}
\noindent \textbf{a.2.~} Suppose now that $T(X_0)$ does not satisfy $\calP$ in $G[X_0]$, i.e., $|T(X_0)| \equiv_2 0$.
Then, by $(*)$, $T(X_1)$ does not satisfy $\calP$ either, i.e., $|T(X_1)| \equiv_2 0$.\newline
\noindent \textbf{-~} If $T(X_0) \neq \emptyset$, then there exists $i \in \{1, \dots, p-1\}$ such that $(u_i,v_0) \in T$.
We consider the $T$-decomposition $\cB_1 = \langle  V_0, V_1, V_2 \rangle = \langle  X_0\setminus \{(u_i, v_0)\}, X_{1}, \{(u_i, v_0)\} \rangle$ (see Figure \ref{fig:q2-cyl:case_a2}).
By Lemma \ref{lem:decomposition_P}, it satisfies $\calP$ since:
\begin{align*}
	&|T_0|= |T(V_0)| = |T(X_0)| - |T(\{(u_i, v_0)\})| \equiv_2 1 - 1 \equiv_2 |E(G[V_0])|\\
	&|T_2| = |T(V_2)| - |Z_2| \equiv_2 1 - 3 \equiv_2 |E(G[V_2])|
\end{align*}
Furthermore, it is good since for all $k \in \{0,1,2\}$, the path $G[V_k]$ admits an acyclic $T_k$-odd orientations by Lemma \ref{lem:tree}.\newline

\noindent \textbf{-~} Suppose now that $T(X_0) = \emptyset$.
We consider the $T$-decomposition $\cB_2 =$ $\langle  V_0, V_1, V_2 \rangle =$ \\ $\langle  \{(u_0, v_1)\}, X_{0}, X_1\setminus \{(u_0, v_1)\} \rangle$ (see Figure \ref{fig:q2-cyl:case_a2}).
By Lemma \ref{lem:decomposition_P}, it satisfies $\calP$ since:
\begin{align*}
	&|T_0| = |T(V_0)| \equiv_2 0 \equiv_2 |E(G[V_0])|\\
	&|T_1|= |T(X_0)| - |Z_1| \equiv_2 0 - 1 \equiv_2 |E(G[X_0])|
\end{align*}
Furthermore, it is good since for all $k \in \{0,1,2\}$, the path $G[V_k]$ admits an acyclic $T_k$-odd orientations by Lemma \ref{lem:tree}.\newline
\end{adjustwidth}

\noindent \textbf{b.~} Suppose that $p \equiv_2 0$, so in particular $|X_0| \equiv_2 0$ and $|X_1| \equiv_2 1$.
Let $T\subseteq V(G)$ be a subset satisfying $\calP \calS \overline\calS$. Then by \hyperref[decomposition:QC]{(QC)} we have that $|T| \equiv_2 |E(G)| \equiv_2 1$ and $|V(G)\setminus T| \equiv_2 0$.
\begin{adjustwidth}{2ex}{0pt}
\phantomsection
\label{proof:quasi-2cylindre:case:b1}
\noindent \textbf{b.1.~} Suppose first that $T(X_0)$ satisfies $\calP\calS\overline\calS$ in $G[X_0]$.
We consider the decomposition $\cC = \langle V_0, V_1\rangle = \langle X_0, X_1\rangle$ (see Figure \ref{fig:q2-cyl:case_b1}).
Since $T_0 = T(X_0)$ satisfies $\calP$ and $\calS$ in $G[V_0]$ by hypothesis, using Lemma \ref{lem:decomposition_P} is enough to show that $\cC$ satisfies $\calP$.
Furthermore, it is good since the cycle $G[V_0]$ admits an acyclic $T_0$-odd orientation by Lemma \ref{lem:cycle} and the path $G[V_1]$ admits an acyclic $T_1$-odd orientation by Lemma \ref{lem:tree}.\newline

\phantomsection
\label{proof:quasi-2cylindre:case:b2}
\noindent \textbf{b.2.~} Suppose now that $T(X_0)$ satisfies $\calP$ but does not satisfy $\calS$ (or equivalently $\overline\calS$) in $X_0$, then $T(X_0) = X_0$ and $|T(X_1)| \equiv_2 1$.
In this case, we claim that we can split $X_1$ into two paths such that one of them, say $P$, is of even length and $T(P)$ satisfies $\calP$ in $G[P]$.
More precisely, we claim that there exists $l \in [1,\frac{p-2}{2}]$ such that either $|T(X_1^{[\leq 2l]})| \equiv_2 1$ or $|T(X_1^{[\geq 2l]})| \equiv_2 1$.
Suppose by contradiction that it is not the case, then for all $k \in [1,\frac{p-2}{2}]$, we have $|T(X_1^{[\leq 2k]})|\equiv_2 |T(X_1^{[\geq 2k]})|\equiv_2 0$ and:
$$0 \equiv_2|T(X_1^{[\leq 2k]})| + |T(X_1^{[\geq 2k]})| \equiv_2 |T(X_1)| + |T(\{(u_{2k}, v_1)\})| \equiv_2 1 + |T(\{(u_{2k}, v_1)\})|$$
So for all $k \in [1,\frac{p-2}{2}]$, we get that $(u_{2k}, v_1) \in T(X_1)$.
But we also have for all $k \in [1,\frac{p-2}{2}]$, that $|T(X_1^{[\leq 2k]})|\equiv_2 0$ by hypothesis, so we deduce inductively that $(u_{2k-1}, v_1) \in T(X_1)$ for all $k\in [1,\frac{p-2}{2}]$.
This implies that $|T(X_1)| = |X_1|$ and thus that $T = V(G)$ and does not satisfy $\calS$, which contradicts our assumptions.\newline
Thus, let $l$ be such that, up to symmetry, $|T(X_1^{[\geq 2l]})| \equiv_2 1$.
We define the decomposition $\cD = \langle V_0, V_1, V_2\rangle = \langle X_1^{[\geq 2l]}, X_0, X_1^{[< 2l]}\rangle$ (see Figure \ref{fig:q2-cyl:case_b2}).
By Lemma \ref{lem:decomposition_P}, it satisfies $\calP$ since:
\begin{align*}
	&|T_0| = |T(X_1^{[\geq 2l]})| \equiv_2 1 \equiv_2 |E(G[X_0])|\\
	&|T_2| = |T(X_1^{[< 2l]})| - |Z_2| \equiv_2 |T(X_1)| - |T(X_1^{[\geq 2l]})| -  0 \equiv_2 1 - 1 \equiv_2 |E(G[X_1^{[< 2l]}])|
\end{align*}
Note that $|Z_2| \equiv_2 0$ because $|Z_2| \equiv_2 |\delta(V_0 \cup V_1, V_2)| \equiv_2 |\delta(V_0, V_2)| + |\delta(V_1, V_2)| \equiv_2 1 + 1 \equiv_2 0$.\newline
\noindent We now show that the decomposition is good.
$G[V_0]$ and $G[V_2]$ are paths and admit an acyclic $T_0$-odd and $T_2$-odd orientation respectively by Lemma \ref{lem:tree}. $G[V_1]$ is a cycle and $T_1$ satisfies $\calS$ in $G[V_1]$ since $T_1= T(V_1) \triangle Z_1 = X_0 \triangle X_0^{[\geq 2l]} \neq X_0$ (i.e. $Source(T_1)\neq \emptyset$ in $G[V_1]$), so $G[V_1]$ admits an acyclic $T_1$-odd orientation by Lemma \ref{lem:cycle}.\newline

\noindent \textbf{b.3.~} Suppose now that $T(X_0)$ does not satisfy $\calP$, i.e., $|T(X_0)|\equiv_2 1$ and $|T(X_1)| \equiv_2 0$.
\begin{adjustwidth}{2ex}{0pt}
\phantomsection
\label{proof:quasi-2cylindre:case:b3i}
\noindent \textbf{b.3.i.~} Suppose first that $T(X_0) \neq \{(u_0, v_0)\}$. We consider the decomposition $\cE = \langle V_0, V_1\rangle = \langle X_1, X_0\rangle$ (see Figure \ref{fig:q2-cyl:case_b3i}).
By Lemma \ref{lem:decomposition_P} it satisfies $\calP$ since $|T_0| = |T(X_1)| \equiv_2 |T| - |T(X_0)| \equiv_2 1 - 1 \equiv_2 |E(G[X_1])|$.
We now show it is good.
The path $G[V_0]$ admits an acyclic $T_0$-odd orientation by Lemma \ref{lem:tree}. In order to show that the cycle $G[V_1]$ admits an acyclic $T_1$-odd orientation using Lemma \ref{lem:cycle}, we have to show that $T_1 \neq V_1$. By contradiction, we have:
\begin{align*}
	T_1 = V_1 &\iff T(V_1) \triangle Z_1 = V_1\\
	&\iff T(X_0) \triangle X_0^{[\geq 1]} = X_0 \\
	&\iff T(X_0) = X_0 \triangle X_0^{[\geq 1]}\\
	&\iff T(X_0) = \{(u_0, v_0)\} &\text{ a contradiction}
\end{align*}

\phantomsection
\label{proof:quasi-2cylindre:case:b3ii}
\noindent \textbf{b.3.ii.~} Now suppose that $T(X_0) = \{(u_0, v_0)\}$.
In this case, we claim that we can split $X_1$ into two paths such that one of them, say $P$, is of even length and $T(P)$ does not satisfy $\calP$ in $G[P]$.
More precisely, we claim that there exists $l \in [1,\frac{p-2}{2}]$ such that either $|T(X_1^{[\leq 2l]})| \equiv_2 0$ or $|T(X_1^{[\geq 2l]})| \equiv_2 0$.
Suppose it is not the case, then for all $k \in [1,\frac{p-2}{2}]$, we have $|T(X_1^{[\leq 2k]})|\equiv_2 |T(X_1^{[\geq 2k]})|\equiv_2 1$ and:
$$0 \equiv_2|T(X_1^{[\leq 2k]})| + |T(X_1^{[\geq 2k]})| \equiv_2 |T(X_1)| + |T(\{(u_{2k}, v_1)\})| \equiv_2 |T(\{(u_{2k}, v_1)\})|$$
So for all $k \in [1,\frac{p-2}{2}]$, we get that $(u_{2k}, v_1) \not \in T(X_1)$.
Using that $|T(X_1^{[\leq 2]})| = |T(\{(u_1, v_1), (u_2, v_1)\})|\equiv_2 0$, we deduce that $(u_1, v_1)\in T(X_1)$.
Similarly, because $|T(X_1^{[\geq p-2]})| = |T(\{(u_{p-2}, v_1), (u_{p-1}, v_1)\})|\equiv_2 0$ we deduce that $(u_{p-1}, v_1)\in T(X_1)$.
Then using $|T(X_1^{[\leq 2k]})|\equiv_2 0$ for all $k \in [2,\frac{p-2}{2}]$ we further deduce inductively that no other vertex belongs to $T(X_1)$ and that $T(X_1) = \{(u_1,v_1), (u_{p-1}, v_1)\}$.
This implies that $T = \{(u_0, v_0), (u_1,v_1), (u_{p-1}, v_1)\}$ and does not satisfy $\overline\calS$ as it contains precisely every vertex of even degree (see Figure \ref{fig:bad_quasi2cylindre}), which contradicts our initial assumptions.\newline
Thus, let be such an $l$ and assume up to symmetry that $|T(X_1^{[\geq 2l]})| \equiv_2 0$.
We define the decomposition $\cF = \langle V_0, V_1, V_2\rangle = \langle Y_{\geq 2l}, Y_0, Y_{< 2l}\setminus Y_0 \rangle$ (see Figure \ref{fig:q2-cyl:case_b3ii}).
By Lemma \ref{lem:decomposition_P}, it satisfies $\calP$ since:
\begin{align*}
	&|T_0| = |T(Y_{\geq 2l})| = |T(X_1^{[\geq 2l]})| + |T(X_0^{[\geq 2l]})|\equiv_2 0 + 0 \equiv_2 |E(G[X_1^{[\geq 2l]}])|\\
	&|T_1| = |T(V_1)| - |Z_1| \equiv_2 1 - 1 \equiv_2 |E(G[V_1])|
\end{align*}
\noindent We now prove that the decomposition is good.
The graphs $G[V_0]$ and $G[V_2]$ are grids and admit an acyclic $T_0$-odd and $T_2$-odd orientation respectively by Theorem \ref{thm:grid:carac}.
Indeed, $(G[V_0],T_0)$ is not a bad grid instance since the corner vertex $(u_{p-1}, v_0) \not \in T_0$ and $(G[V_2],T_2)$ is not a bad grid instance either since $G[V_2]$ is a grid with an odd dimension.
\end{adjustwidth}
\end{adjustwidth}
\end{proof}

\begin{figure*}[htbp]
	\centering
	\begin{subfigure}[b]{\textwidth}
		\centering
		\includegraphics[width = \textwidth, keepaspectratio]{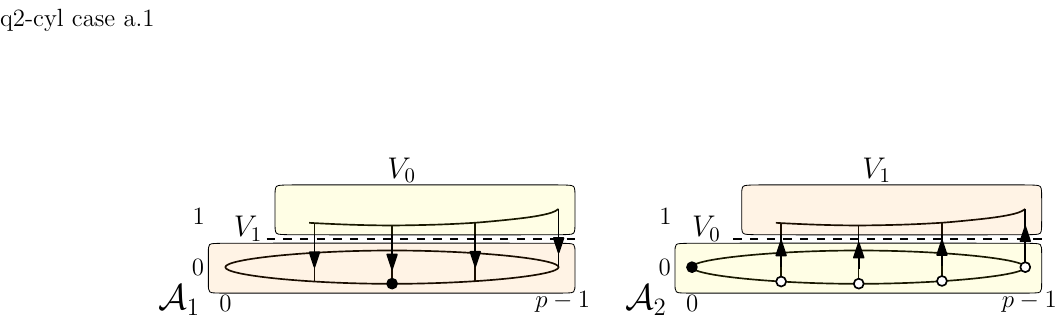}
		\caption{\hyperref[proof:quasi-2cylindre:case:a1]{Case a.1}: $p\equiv_2 1$ and $T(X_0)$ satisfies $\calP$ in $G[X_0]$}
		\label{fig:q2-cyl:case_a1}
	\end{subfigure}
	\hfill
	\begin{subfigure}[b]{\textwidth}
		\centering
		\includegraphics[width = \textwidth, keepaspectratio]{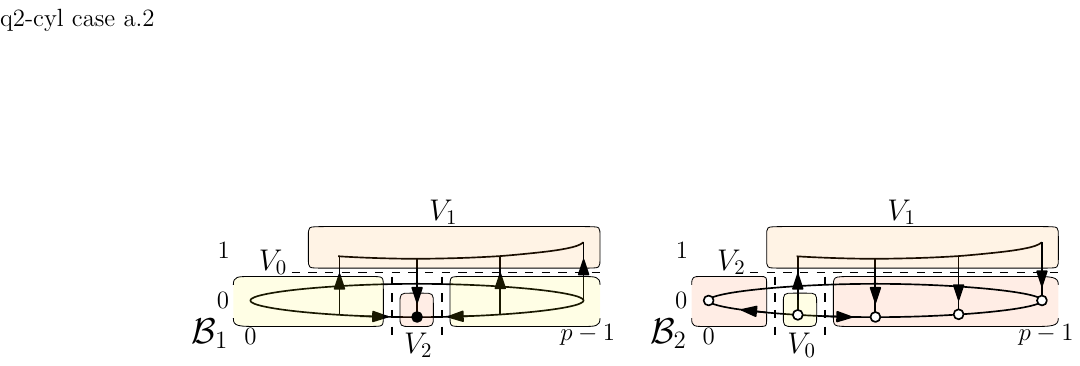}
		\caption{\hyperref[proof:quasi-2cylindre:case:a2]{Case a.2}: $p\equiv_2 1$ and $T(X_0)$ does not satisfy $\calP$ in $G[X_0]$}
		\label{fig:q2-cyl:case_a2}
	\end{subfigure}
	\begin{subfigure}[b]{0.49\textwidth}
		\centering
		\includegraphics[width=\textwidth, keepaspectratio]{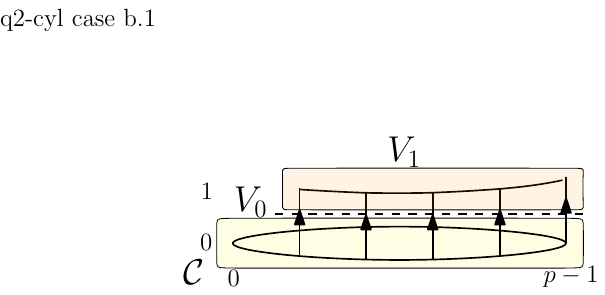}
		\caption{\hyperref[proof:quasi-2cylindre:case:b1]{Case b.1}: $p\equiv_2 0$ and $T(X_0)$ satisfies $\calP \calS \overline\calS$ in $G[X_0]$}
		\label{fig:q2-cyl:case_b1}
	\end{subfigure}
	\hfill
	\centering
	\begin{subfigure}[b]{0.49\textwidth}
		\centering
		\includegraphics[width=\textwidth, keepaspectratio]{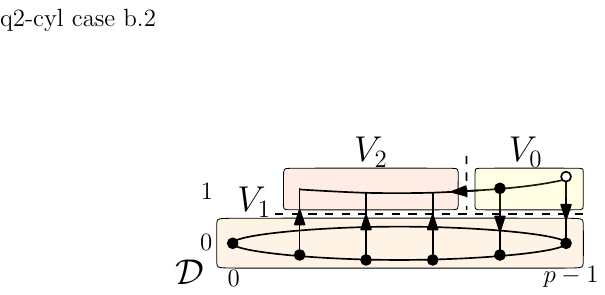}
		\caption{\hyperref[proof:quasi-2cylindre:case:b2]{Case b.2}: $p\equiv_2 0$ and $T(X_0)$ satisfies $\calP$ but not $\calS$ and $\overline\calS$ in $G[X_0]$}
		\label{fig:q2-cyl:case_b2}
	\end{subfigure}
	\hfill
	\begin{subfigure}[b]{0.49\textwidth}
		\centering
		\includegraphics[width=\textwidth, keepaspectratio]{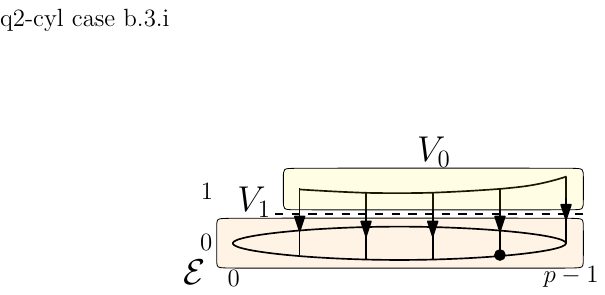}
		\caption{\hyperref[proof:quasi-2cylindre:case:b3i]{Case b.3.i}: $p\equiv_2 0$, $T(X_0)$ does not satisfy $\calP$ in $G[X_0]$ and $T(X_0) \neq \{(u_0, v_0)\}$}
		\label{fig:q2-cyl:case_b3i}
	\end{subfigure}
	\begin{subfigure}[b]{0.49\textwidth}
		\centering
		\includegraphics[width = \textwidth, keepaspectratio]{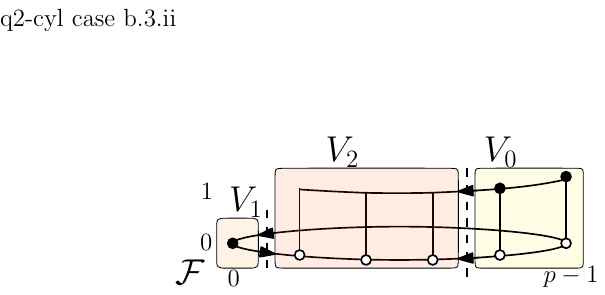}
		\caption{\hyperref[proof:quasi-2cylindre:case:b3ii]{Case b.3.ii}: $p\equiv_2 0$ and $T(X_0) = \{(u_0, v_0)\}$ in $G[X_0]$}
		\label{fig:q2-cyl:case_b3ii}
	\end{subfigure}
	\caption{$T$-decompositions used for the proof of Lemma \ref{lem:quasi-2cylindre}.}
\end{figure*}

\FloatBarrier

\noindent \textbf{Torus}\newline

\noindent This part is dedicated to proving Theorem \ref{thm:tore_carac} for tori.
As before, the proof works by construction and induction.
For any torus $G = C_p\square C_q$, it is useful to remind that:
\begin{equation*}
	|V(G)| \equiv_2 p \times q \text{  ,  } |E(G)| \equiv_2 0 \tag{T}\label{decomposition:T}
\end{equation*}

\begin{lemma}\label{lem:Torus}
Let $p, q\geq 4$.
The Torus $G=C_p\square C_q$ admits an acyclic $T$-odd orientation for any subset $T\subseteq V(G)$ satisfying $\calP \calS \overline\calS$.
\end{lemma}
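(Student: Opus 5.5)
The plan is induction on the number of rows, using $T$-decompositions of $G=C_p\square C_q$ whose pieces are cycles, cylinders and quasi 2-cylinders, so that Lemmas \ref{lem:cycle}, \ref{lem:cyl:odd}, \ref{lem:cyl:even} and \ref{lem:quasi-2cylindre} apply.

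\textbf{What the hypothesis says.} $G$ is $4$-regular, hence Eulerian, and $|E(G)|\equiv_2 0$ by \eqref{decomposition:T}. So $\calP$ means $|T|\equiv_2 0$. Since $Source(T)=Sink(T)=V\setminus T$, both $\calS$ and $\overline\calS$ reduce to $|V\setminus T|\geq 2$. By symmetry between rows and columns, I may assume whatever is convenient about $p$ versus $q$.

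\textbf{Main step: first-row $T$-decompositions.} There are two of them.
\begin{itemize}
\item $\langle X_{\geq 1},X_0\rangle$. Every vertex of $X_0$ has exactly two neighbours in $X_{\geq 1}$, so $Z_1=\emptyset$ and $T_1=T(X_0)$. Here $X_{\geq1}$ induces the cylinder $C_p\square P_{q-1}$ and $T_0=T(X_{\geq 1})$.
\item $\langle X_0,X_{\geq 1}\rangle$. Here $T_0=T(X_0)$ and $T_1=T(X_{\geq1})\triangle(X_1\cup X_{q-1})$.
\end{itemize}
By Lemma \ref{lem:decomposition_P}, either decomposition satisfies $\calP$ as soon as $|T(X_0)|\equiv_2 p$. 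The cycle piece is solvable by Lemma \ref{lem:cycle} whenever $T(X_0)\neq X_0$.

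The cylinder piece is handled by the cylinder lemmas. For $p$ even, Lemma \ref{lem:cyl:even} only needs $\calS\overline\calS$: the relevant set must differ from $X_{\geq1}$ and from $X_{\geq 1}\setminus(X_1\cup X_{q-1})$. For $p$ odd, Lemma \ref{lem:cyl:odd} needs only $\calP$ when $q-1$ is even, and needs $\calS\overline\calS$ when $q-1$ is odd. The two decompositions flip the colour of the boundary rows of the cylinder in opposite ways. So if one of them makes the cylinder set equal to one of the two forbidden configurations, I expect the other one, or the same construction applied to another row $X_j$, to avoid it. The plan is to look for a row $X_j$ with $|T(X_j)|\equiv_2 p$ and $T(X_j)\neq X_j$, and then settle the forbidden-configuration check by a short case analysis.

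\textbf{Obstacle: no usable row.} This happens when every row either has the wrong parity or is entirely black (and likewise for columns, after swapping roles).
\begin{itemize}
\item If there is a white vertex $x$ whose row is in this bad situation, I would instead peel two rows at once. Take the decomposition $\langle X_{\geq 2},\,(X_0\cup X_1)\setminus\{x\},\,\{x\}\rangle$ or a similar ordering. The middle piece is a quasi 2-cylinder $Q_p$ and the first piece is the cylinder $C_p\square P_{q-2}$. The single vertex $x$ needs $T_2=\emptyset$, i.e.\ $x$ white after its even number of flips. Parity of each piece is checked with \eqref{decomposition:QC} and Lemma \ref{lem:decomposition_P}. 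The quasi 2-cylinder is solved by Lemma \ref{lem:quasi-2cylindre}, whose exceptional sets for $p$ even are only $V(Q_p)$ and $\{(u_0,v_0),(u_1,v_1),(u_{p-1},v_1)\}$.
\item The freedom in choosing $x$ among the at least two white vertices, together with the row/column symmetry, should let me avoid these finitely many exceptions and the two exceptional cylinder sets. The hypotheses $p,q\geq4$ are needed for the quasi 2-cylinder lemma and to leave room for this choice.
\end{itemize}

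I expect this last bookkeeping to be the hardest part. It means ensuring that in every parity pattern of $p,q$ and of the row counts $|T(X_j)|$, some choice of row or of white vertex gives pieces that all avoid their exceptional configurations. I would organise it by the parity of $p$, as in Lemmas \ref{lem:cyl:odd} and \ref{lem:cyl:even}.
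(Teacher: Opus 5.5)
Your split into a usable-row case and an obstacle case matches the paper's, but the main step has a genuine gap. For the cylinder $G[X_{\geq 1}]$ the two forbidden sets are $F_1=X_{\geq 1}$ and $F_2=X_{\geq 1}\setminus(X_1\cup X_{q-1})$, and $F_1\triangle(X_1\cup X_{q-1})=F_2$. Your two first-row decompositions hand the cylinder $T(X_{\geq 1})$ and $T(X_{\geq 1})\triangle(X_1\cup X_{q-1})$ respectively. So they always fail together, and the second never rescues the first.

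Switching to another line does not help in general. Take $p=q=4$ and $V\setminus T=\{(u_0,v_0),(u_1,v_0)\}$. Then $T$ satisfies $\calP\calS\overline\calS$, $X_0$ is the only row or column that is usable in your sense, and $T(X_{\geq 1})=F_1$. Your obstacle branch is not triggered here. Even if you run it, every choice of two consecutive rows or columns, of the white vertex $x$, and of the order of the three pieces fails, for one of two reasons:
\begin{itemize}
\item if the quasi 2-cylinder $Q$ is entirely black, the two parity-admissible orders hand it either $Q$ itself or exactly the three neighbours of $x$;
\item if the remaining cylinder is entirely black, it receives $F_1$ or $F_2$.
\end{itemize}

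The missing idea is to split the row. When $T(X_{\geq 1})\in\{F_1,F_2\}$, the paper takes $w_1,w_2$, the first two vertices eliminated in an acyclic $T(X_0)$-odd orientation of the cycle $G[X_0]$. It then uses $\langle \{w_1,w_2\},X_{\geq 1},X_0\setminus\{w_1,w_2\}\rangle$. The two outer pieces are solvable by construction. The cylinder receives $T(X_{\geq 1})$ flipped at two vertices of $X_1$ and two of $X_{q-1}$, which is neither $F_1$ nor $F_2$.

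Your obstacle branch is also under-equipped, not merely unfinished bookkeeping. Take $p=q=4$ and $V\setminus T=\{(u_0,v_0),(u_2,v_1)\}$. Every row and column is unusable, and every admissible choice of two consecutive lines and of $x$ lands in one of the two failing situations above. So no decomposition of the shape cylinder $+$ quasi 2-cylinder $+$ single vertex works.

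For $p$ even the paper needs two other shapes. Let $X_k$ be the first row after $X_0$ with $|T(X_k)|$ odd.
\begin{itemize}
\item The first shape is two cylinders $\langle X_{<k+1},X_{\geq k+1}\rangle$.
\item The second is used when all rows after $X_k$ are entirely black. It takes the path $Y_0^{[<q-1]}$, then the quasi 2-cylinder $(X_{q-1}\cup X_0)\setminus\{x\}$, then a grid with the odd dimension $p-1$, solved by Lemma \ref{lem:grid:notBadPath}.
\end{itemize}
The case where $p$ and $q$ are both odd requires further row and column splits of its own.
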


\begin{proof}
Let $p, q\geq 4$ and $G=C_p\square C_q$.
Let $T\subseteq V(G)$ be a subset which satisfies $\calP \calS \overline\calS$.
By \hyperref[decomposition:T]{(T)} we have that $|T|$ is even.
\newline

\phantomsection
\label{proof:torus:case:a}
\noindent \textbf{a.~} First, suppose that $T(X_0)$ satisfies $\calP$ and $\calS$, that is, $|T(X_0)|\equiv_2 p$, $X_0\setminus T\neq \emptyset$, and consequently that $|T(X_{\geq 1})|\equiv_2 p$.
By Lemma \ref{lem:cycle}, we already have that $G[X_0]$ admits an acyclic $T(X_0)$-odd orientation.
We distinguish two cases depending on the set $T(X_{\geq 1})$.\newline
\noindent \textbf{-~} Suppose that $T(X_{\geq 1}) \neq X_{\geq 1}$ and $T(X_{\geq 1}) \neq X_{> 1}\setminus X_{q-1}$ (this last set is well defined since $q\geq 4$) and let us consider the decomposition $\cA_1= \langle V_0, V_1 \big\rangle= \langle X_0, X_{\geq 1}\big\rangle$ (see Figure \ref{fig:tore:case_a}).
By Lemma \ref{lem:decomposition_P}, this decomposition satisfies $\calP$, indeed $|T_0| = |T(X_0)|$ satisfies $\calP$ in $G[X_0]$ by hypothesis. In order to prove that $G[V_1]$ admits an acyclic $T_1$-odd orientation, and thus that $\cA_1$ is good, we use Theorem \ref{thm:tore_carac} on cylinders. To do so, we have to prove that $T_1$ satisfies $\calS$ and $\overline \calS$. Equivalently, we prove by contradiction that $T_1 \neq X_{\geq 1}$ (i.e. $Source(T_1) \neq \emptyset$ in $G[V_1]$) and $T_1 \neq X_{> 1}\setminus X_{q-1}$ (i.e. $Sink(T_1) \neq \emptyset$ in $G[V_1]$):
\begin{align*}
	T_1 = (X_{\geq 1} \text{ or } X_{> 1}\setminus X_{q-1})&\iff T(V_1) \triangle Z_1 = (X_{\geq 1} \text{ or } X_{> 1}\setminus X_{q-1})\\
	&\iff T(X_{\geq 1}) \triangle (X_1 \cup X_{q-1}) = (X_{\geq 1} \text{ or } X_{> 1}\setminus X_{q-1})\\
	&\iff T(X_{\geq 1}) = (X_{\geq 1}\triangle (X_1 \cup X_{q-1})  \text{ or } (X_{> 1}\setminus X_{q-1}) \triangle (X_1 \cup X_{q-1}))\\
	&\iff T(X_{\geq 1}) = (X_{> 1}\setminus X_{q-1} \text{ or } X_{\geq 1})  \quad \text{ a contradiction}
\end{align*}

\noindent \textbf{-~} Now, suppose that $T(X_{\geq 1})$ is either $X_{\geq 1}$ or $X_{> 1}\setminus X_{q-1}$.
Let $O$ be an acyclic $T(X_0)$-odd orientation of $G[X_0]$.
Let $w_1$ and $w_2$ be vertices of $X_0$ such that $d_{G[X_0]_O}^-(w_1)=d_{G[X_0\setminus\{w_1\}]_O}^-(w_2)=0$. Note that this directly implies that $T(\{w_1, w_2\})$ satisfies $\calP$ in $G[\{w_1,w_2\}]$.
We define the decomposition $\cA_2 = \langle V_0, V_1, V_2\big\rangle = \langle \{w_1, w_2\}, X_{\geq 1}, X_0\setminus \{w_1, w_2\}\big\rangle$ (see Figure \ref{fig:tore:case_a}). By Lemma \ref{lem:decomposition_P} this decomposition satisfies $\calP$ : indeed, $T_0 = T(\{w_1, w_2\})$ satisfies $\calP$ in $G[V_0]$ and:
\begin{align*}
	|T_1| &= |T(X_{\geq 1})| - |Z_1|\\
	&\equiv_2 |T| - |T(X_0)| - 2\\
	&\equiv_2 |E(G)| - |E(G[X_0])|\\
	&\equiv_2 |E(G[X_{\geq 1}])| + |\delta(X_0, X_{\geq 1})|\\
	&\equiv_2 |E(G[X_{\geq 1}])|
\end{align*}
proving that $T_1$ satisfies $\calP$ in $G[V_1]$.
Furthermore, $\cA_2$ is good.
Indeed, by construction, $G[V_0]$ and $G[V_2]$ admit acyclic $T_0$-odd and $T_2$-odd orientations respectively.
Finally, we use \ref{thm:tore_carac} on cylinders to prove that $G[V_1]$ admits an acyclic $T_1$-orientation. To do so, we have to show that $T_1$ satisfies $\calS$ and $\overline \calS$: call $w_1', w_1''$ and $w_2', w_2''$ the four neighbors of $w_1$ and $w_2$ in $V_1$. Then $T_1 = T(V_1)\triangle Z_1 = T(X_{\geq 1}) \triangle \{w_1', w_2', w_1'', w_2''\} \neq X_{\geq 1}$ (i.e. $Source(T_1) \neq \emptyset$ in $G[V_1]$) and $T_1 \neq X_{> 1}\setminus X_{q-1}$ (i.e. $Sink(T_1) \neq \emptyset$ in $G[V_1]$) as $T(X_{\geq 1})$ is either $X_{\geq 1}$ or $X_{> 1}\setminus X_{q-1}$ by hypothesis. Thus $(G[V_1], T_1)$ cannot be one of the instances on which $T_1$ fails $\calS$ or $\overline\calS$ in the cylinder $G[V_1]$ (see Figure \ref{fig:bad_cylindre}).\newline

\noindent \textbf{b.~}
One may assume now that for all $i$ $\in \{0, \dots, p-1\}$, either $T(X_i)$ does not verify $\calS$ in $G[X_i]$ (i.e., $X_i \subseteq T$) or does not verify $\calP$ in $G[X_i]$ (i.e., $|T(X_i)| \equiv_2 p-1$). Equivalently, for all $i \in \{0, \dots, p-1\}$, $|X_i \setminus T(X_i)|$ is either $0$ or odd.
By symmetry, assume also that for all $j \in \{0, \dots, q-1\}$, $|T(Y_j)| \equiv_2 q$ if and only if $Y_j\subseteq T$ (i.e., $|Y_j \setminus T(Y_j)|$ is either $0$ or odd).
Furthermore, since $T$ verifies $\calS$, we may suppose $T(X_0)$ satisfies $\calS$, which implies $|T(X_0)| \equiv_2 p-1$ and $X_0\setminus T\neq \emptyset$. As a consequence, and since $|T|$ is even by \hyperref[decomposition:T]{(T)}, we have $|T(X_{\geq 1})| \equiv_2 p-1$.
Let $x\in X_0\setminus T$ and by symmetry, suppose that $x=(u_0, v_0)$.

\begin{adjustwidth}{2ex}{0pt}
\phantomsection
\label{proof:torus:case:b1}
\noindent \textbf{b.1.~} If $p$ is even then $|T(X_{0})|\equiv_2 1$, so we can assume up to rotation that $(u_{p-1}, v_{0})\in T$.
Also, it implies $|T(X_{\geq 1})|\equiv_2 1$, so there is $0<i\leq q-1$ such that $|T(X_i)|\equiv_2 1$.
Let $k$ be the smallest such index.
Since $p\geq 3$, up to symmetry, we may assume that $k<q-1$.
By definition of $k$, we have that $X_j\subseteq T$ for all $0<j<k$. We now consider two cases:

\noindent \textbf{-~} Suppose there exists $j \in \{k+1, \ldots, q-1\}$ such that $X_j \not \subset T$.
We consider the decomposition $\cB_1 = \langle V_0, V_1\rangle = \langle X_{< k+1}, X_{\geq k+1}\rangle$ (see Figure \ref{fig:tore:case_b1}).
By Lemma \ref{lem:decomposition_P}, it satisfies $\calP$ as $|T_0| = |T(X_{< k+1})| \equiv_2 |T(X_0)| + |T(X_k)| \equiv_2 1 + 1 \equiv_2 0$.
Also, $\cB_1$ is good.
Indeed, $T_0$ satisfies $\calS$ and $\overline\calS$ in $G[V_0]$ as $x \in Source(T_0)$ in the cylinder $G[V_0]$ and $(u_{p-1}, v_0) \in Sink(T_0)$ in $G[V_0]$. Thus, $G[X_0]$ admits an acyclic $T_0$-odd orientation by Lemma \ref{lem:cyl:even}.
Furthermore, $T_1$ satisfies $\calS$ in the cylinder $G[V_1]$ since $T(X_j)\neq  X_j$ by hypothesis (i.e. $Source(T_1)\neq \emptyset$ in $G[V_1]$) and it satisfies $\overline\calS$ in $G[V_1]$ indeed: 
by hypothesis $T(X_{\geq k+1}) \neq X_{\geq k+1}$ and $T_1 = T(V_1) \triangle Z_1 = T(X_{\geq k+1}) \triangle (X_{k+1} \cup X_{q-1}) \neq X_{> k+1}\setminus X_{q-1}$ (i.e. $Sink(T_1)\neq \emptyset$ in $G[V_1]$).
Thus $(G[V_1], T_1)$ is not an instance for which $T_1$ fails $\overline\calS$ in $G[V_1]$ (see Figure \ref{fig:bad_cylindre}).
So, by Lemma \ref{lem:cyl:even}, the cylinder $G[V_1]$ admits an acyclic $T_1$-odd orientation.\newline

\noindent \textbf{-~} Suppose now that $T(X_j) = X_j$ for all $j \in \{k+1, \ldots, q-1\}$.
We consider the decomposition $\cB_2 = \langle V_0, V_1, V_2\rangle = \langle Y_0^{[< q-1]}, X_{q-1} \cup X_0\setminus \{x\}, V\setminus (V_0 \cup V_1) \rangle$ (see Figure \ref{fig:tore:case_b1}).
By Lemma \ref{lem:decomposition_P}, it satisfies $\calP$ since:
\begin{align*}
	|T_0| &= |T(Y_{0}^{[<q-1]})| \equiv_2 |T(Y_{0})| - 1 \equiv_2 q \equiv_2 |E(G[Y_{0}])| \equiv_2 |E(G[Y_{0}^{[<q-1]}])|\\
	|T_1| &= |T(X_{q-1} \cup X_0\setminus \{x\})| - |Z_1| \\
	&\equiv_2 |T(X_{q-1})| + |T(X_0)| - 4 \\
	&\equiv_2 0 + 1 - 4 \\
	&\equiv_2 |E(G[X_{q-1} \cup X_0\setminus \{x\}])| \quad \text{by \hyperref[decomposition:QC]{(QC)}}
\end{align*}
Furthermore, $\cB_2$ is good.
By Lemma \ref{lem:tree}, the path $G[V_0]$ admits an acyclic $T_0$-odd orientation.
By Lemma \ref{lem:quasi-2cylindre}, the quasi-cylinder $G[V_1]$ admits an acyclic $T_1$-odd orientation.
Indeed, $T_1$ verifies $\overline\calS$ because $T_1(X_{q-1}) = X_{q-1} \subset Sink(T_1)$ in $G[V_1]$ and it verifies $\calS$ since $(u_{p-1}, v_0) \in Source(T_1)$ in $G[V_1]$.
Furthermore, $V_2$ induces a grid with an odd dimension so Lemma \ref{lem:grid:notBadPath} ensures that $G[V_2]$ admits an acyclic $T_2$-odd orientation.
\newline
\end{adjustwidth}

\begin{adjustwidth}{2ex}{0pt}
\noindent \textbf{b.2.~} Suppose now that $p$ and $q$ are odd.
Recall that we supposed in \textbf{b.} that for all $i \in \{0, \ldots, p-1\}$ either $T(X_i) = X_i$ or $|T(X_i)|$ is even, and for all $j \in \{0, \ldots, q-1\}$ either $T(Y_j) = Y_j$ or $|T(Y_j)|$ is even.\newline
\end{adjustwidth}

\begin{adjustwidth}{4ex}{0pt}
\phantomsection
\label{proof:torus:case:b2i}
\noindent \textbf{b.2.i~} Assume that $X_1\subseteq T$. Then, $T(X_{<2})$ and $T(X_{\geq 2})$ verify $\calP$ in $G[X_{<2}]$ and $G[X_{\geq 2}]$ respectively.
Indeed:
\begin{align*}
 	&|T(X_{<2})|=|T(X_0)| + |T(X_1)| \equiv_2 0 + 1 \equiv_2 |E(G[X_{<2}])| &\text{by \hyperref[decomposition:C0]{(C0)}}\\
	&|T(X_{\geq 2})| = |T| - |T(X_{<2})| \equiv_2 0 - 1 \equiv_2 |E(G[X_{\geq 2}])|  &\text{by \hyperref[decomposition:C0]{(C0)}}
\end{align*}
We rely on Figure \ref{fig:tore:case_b2i} to illustrate the following cases.\newline
\noindent \textbf{-~}If $T(X_{\geq 2})$ verifies $\calS$ and $\overline\calS$ in $G[X_{\geq 2}]$ then the decomposition $\cC_1 = \langle V_0, V_1\rangle = \langle X_{\geq 2}, X_{<2}\rangle$ is good by Lemma \ref{lem:cyl:odd}.\newline
\noindent \textbf{-~}If $T(X_{\geq 2})$ does not verify $\overline\calS$ in $G[X_{\geq 2}]$ then $T(X_2)=T(X_{q-1})=\emptyset$ and $X_i\subseteq T$ for all $2<i<q-1$ (see Figure \ref{fig:bad_cylindre}).
In this case, since $p\geq 5$ and by Lemma \ref{lem:decomposition_P}, the decomposition $\cC_2 = \langle V_0, V_1\rangle =  \langle X_1\cup X_2, X_0\cup X_{\geq 3}\rangle$ satisfies $\calP$ as $|T(X_{1}\cup X_2)|=|T(X_1)| + |T(X_2)| \equiv_2 1 + 0 \equiv_2 |E(G[X_1 \cup X_2])|$.
Furthermore, the decomposition is good by Lemma \ref{lem:cyl:odd}.
Indeed, $T_0$ satisfies $\calS$ and $\overline\calS$ in $G[V_0]$ as $X_1 \in Sink(T_0)$ in $G[V_0]$ and $X_2 \in Source(T_0)$ in $G[V_0]$, and $T_1$ satisfies both $\calS$ and $\overline\calS$ in $G[V_1]$ as $X_{q-1} \in Source(T_1)\cap Sink(T_1)$ in $G[V_1]$. Note also that because we supposed that for all $i \in \{0, \ldots, p-1\}$ either $T(X_i) = X_i$ or $|T(X_i)|$ is even, it imposes that $T(X_0) = \emptyset$. 
 \newline
\noindent \textbf{-~}If $T(X_{\geq 2})$ does not verify $\calS$ in $G[X_{\geq 2}]$ then $X_i\subseteq T$ for all $1 \leq i \leq p-1$.
Suppose first that $T(X_0)\neq \emptyset$ and up to symmetry we can assume that $(u_1, v_0)\in T$. Also, since $T$ satisfies $\calS$, i.e., $V \setminus T \neq \emptyset$ and $|V \setminus T \cup \{v \in T, d(v) \equiv_2 1\}| > 1$, we know $|V\setminus T|> 1$.
Hence, there exists $y\neq (u_{0}, v_0)$ with $y\in X_0\setminus T$.
Let us now consider the decomposition $\cC_3 = \langle V_0, V_1\rangle = \langle Y_{<2}, Y_{\geq 2}\rangle$.
By Lemma \ref{lem:decomposition_P} it satisfies $\calP$ as $|T_0| = |T(Y_{<2})| = |T(Y_0)| + |T(Y_1)| \equiv_2 0 + 1 \equiv_2 |E(G[Y_{<2}])|$.
Furthermore, the decomposition is good by Lemma \ref{lem:cyl:odd}.
Indeed, $T_0$ satisfies $\calS$ and $\overline\calS$ in the cylinder $G[V_0]$ as $x \in Source(T_0)$ in $G[V_0]$ and $(u_1, v_0)\in Sink(T_0)$ in $G[V_0]$.
Furthermore, $T_1$ satisfies $\calS$ in the cylinder $G[V_1]$ as $Y_2^{[\geq 1]} \in Source(T_1)$ in $G[V_1]$ and also satisfies $\overline\calS$ in $G[V_1]$ as $y$ either belongs to $Y_2$ or $Y_{q-1}$, in which case it belongs to $T_1$ and is of degree $3$ in $G[V_1]$, or it does not belong to $Y_2$ nor $Y_{q-1}$, in which case it does not belong to $T_1$ and is of degree $4$ in $G[V_1]$. In both cases, $y \in Sink(T_1)$ in $G[V_1]$.\newline

\noindent Finally, assume that $T(X_0)=\emptyset$.
In this case, we consider the decomposition $\cC_4 = \langle V_0, V_1, V_2\rangle = \langle \{x\}, Y_{\geq 2}, Y_{<2}\setminus \{x\}\rangle$.
By Lemma \ref{lem:decomposition_P}, the decomposition satisfies $\calP$:
\begin{align*}
	&|T_0| = |T(\{x\})| \equiv_2 0\\
	&|T_1| = |T(Y_{\geq 2})| - |Z_1| \equiv_2 1 - 1 \equiv_2 |E(G[Y_{\geq 2}])|
\end{align*}
Furthermore, the decomposition is good.
Indeed, the cylinder $G[V_1]$ admits an acyclic $T_1$-odd orientation by Lemma \ref{lem:cyl:odd} since $T_1$ satisfies both $\calS$ and $\overline\calS$ in $G[V_1]$ as the vertex $(u_3,v_0) \in Sink(T_1)$ in $G[V_1]$ because $q\geq 4$ and $(u_2,v_0) \in Source(T_1)$ in $G[V_1]$.
Also the quasi-cylinder $G[V_2]$ admits an acyclic $T_2$-orientation by Lemma \ref{lem:quasi-2cylindre} since $q$ is odd.\newline
\end{adjustwidth}

\begin{adjustwidth}{4ex}{0pt}
\phantomsection
\label{proof:torus:case:b2ii}
\noindent \textbf{b.2.ii~} Suppose now that for all $i \in \{0,\dots, p-1\}$ and $j\in \{0,\dots, q-1\}$, we have $|T(X_i)| \equiv_2 |T(Y_j)| \equiv_2 0$.
We consider the decomposition $\cD = \langle \{x\}, X_{\geq 2}, X_{<2}\setminus \{x\}\rangle$ (see Figure \ref{fig:tore:case_b2ii}).
By Lemma \ref{lem:decomposition_P}, it satisfies $\calP$:
\begin{align*}
	&|T_0| = |T(\{x\})| \equiv_2 0\\
	&|T_1| = |T(X_{\geq 2})| - |Z_1| \equiv_2 (q-2)\times (p-1) - 1 \equiv_2 p \equiv_2 |E(G[X_{\geq 2}])| \quad \text{by \hyperref[decomposition:C0]{(C0)}}
\end{align*}
We now show the decomposition is good. 
From the hypothesis, $|T(X_2)| \equiv_2 p-1$ and by construction of the decomposition $Z_1 = X_2 \cup X_{q-1}$. This implies that $X_2$ contains both vertices that are and are not in $T_1 = T(V_1) \triangle Z_1$. In particular, $T_1(X_2)\subset Sink(T_1)$ in $G[V_1]$ as those are vertices of degree $3$, and $X_2\setminus T_1(X_2) \subset Source(T_1)$ in $G[V_1]$. Therefore $T_1$ satisfies both $\calS$ and $\overline\calS$ in $G[V_1]$ and the cylinder $G[V_1]$ admits an acyclic $T_1$-odd orientation by Lemma \ref{lem:cyl:odd}. Also the quasi-cylinder $G[V_2]$ admits an acyclic $T_2$-orientation by Lemma \ref{lem:quasi-2cylindre} since $q$ is odd.\newline

\end{adjustwidth}

\noindent In all cases, we get a good $T$-decomposition and one can conclude by applying Theorem \ref{thm:T-odd decomposition}.
\end{proof}

\FloatBarrier

\begin{figure*}[htbp]
	\centering
	\begin{subfigure}[b]{\textwidth}
		\centering
		\includegraphics[height=0.23 \textheight, keepaspectratio]{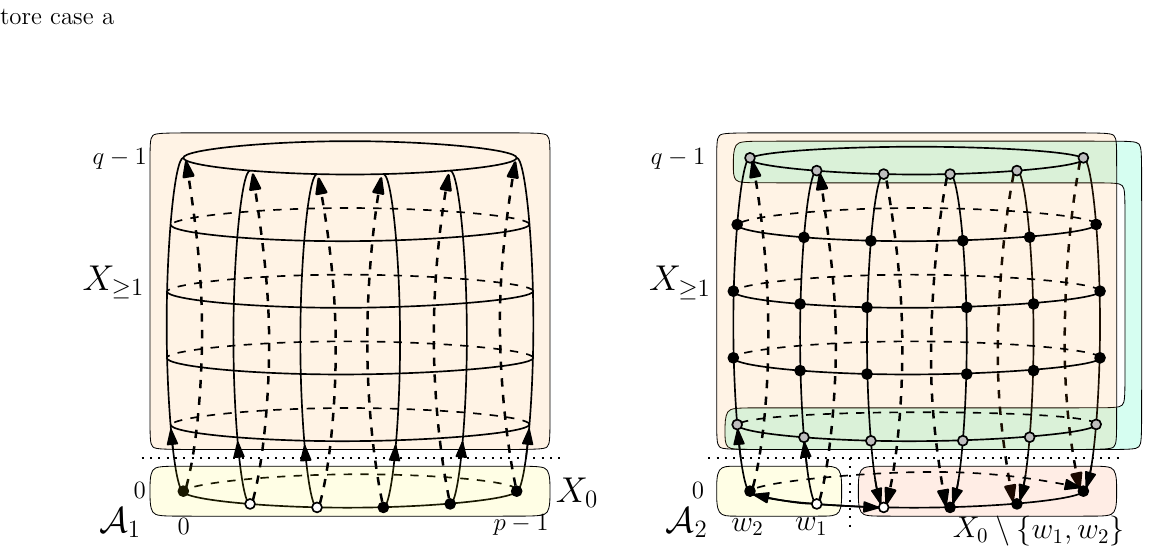}
		\caption{\hyperref[proof:torus:case:a]{Case a}: $T(X_0)$ satisfies $\calS$ and $\calP$ in $G[X_0]$}
		\label{fig:tore:case_a}
	\end{subfigure}
	\hfill
	\begin{subfigure}[b]{\textwidth}
		\centering
		\includegraphics[height=0.23 \textheight, keepaspectratio]{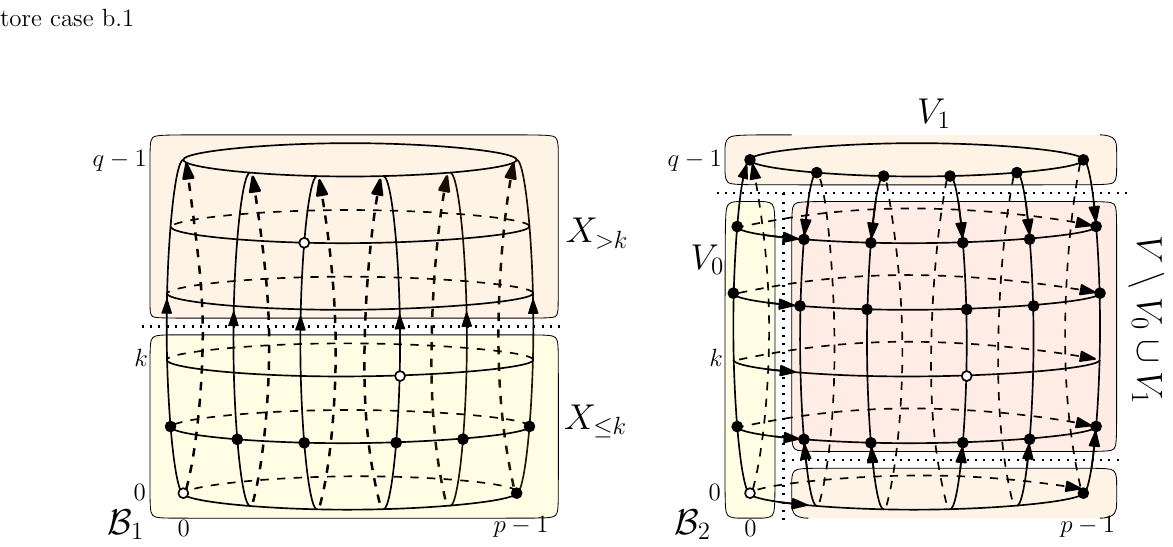}
		\caption{\hyperref[proof:torus:case:b1]{Case b.1}: $p$ is even, $|X_i\setminus T(X_i)|$ and $|Y_j\setminus T(Y_j)|$ are $0$ or odd for all $i,j$}
		\label{fig:tore:case_b1}
	\end{subfigure}
	\hfill
	\begin{subfigure}[b]{\textwidth}
		\centering
		\includegraphics[height=0.46 \textheight, keepaspectratio]{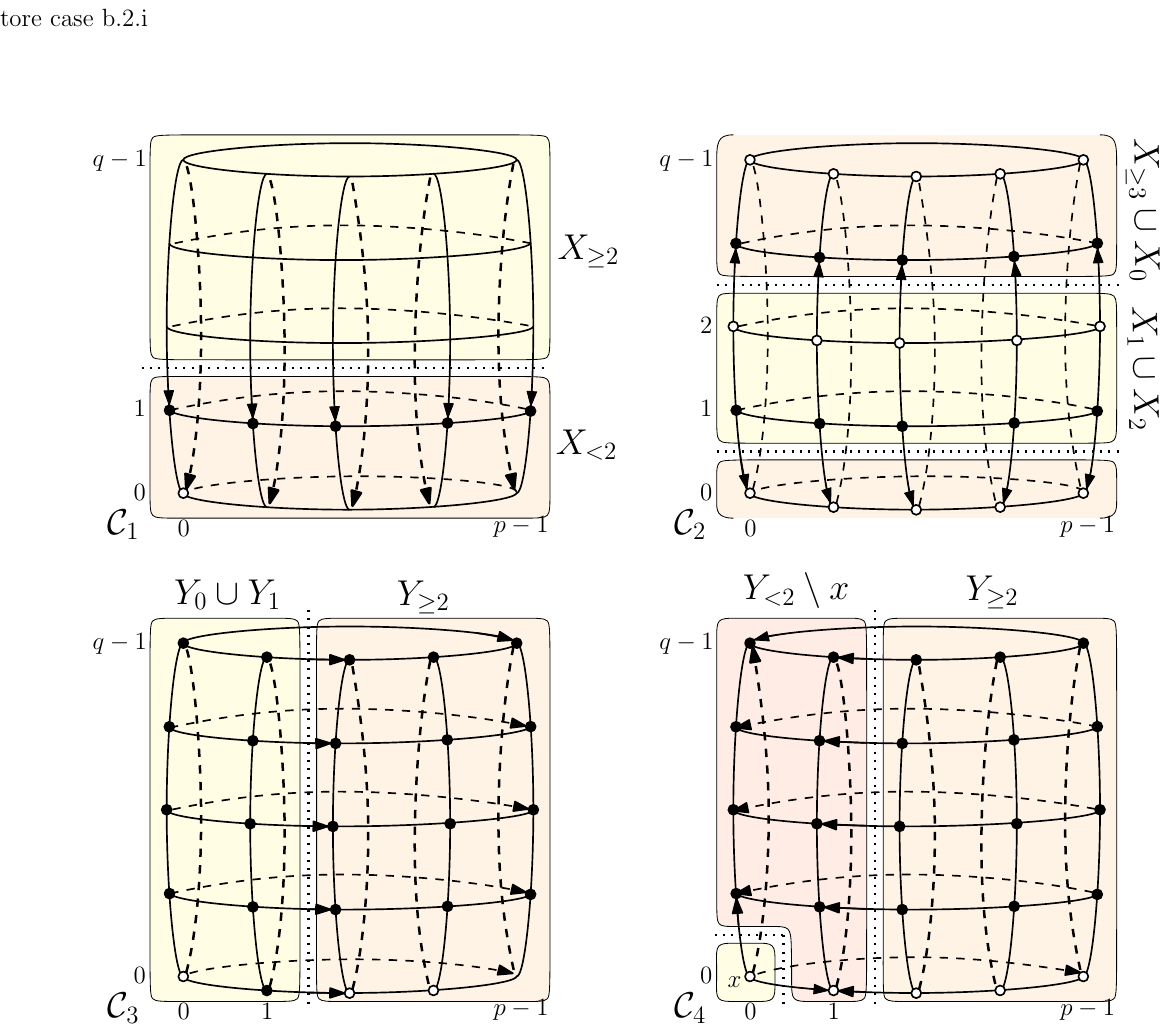}
		\caption{\hyperref[proof:torus:case:b2i]{Case b.2.i}: $p,q$ are odd, $|X_i\setminus T(X_i)|$ and $|Y_j\setminus T(Y_j)|$ are $0$ or odd for all $i,j$ and $X_1\subseteq T$}
		\label{fig:tore:case_b2i}
	\end{subfigure}
	\caption{$T$-decompositions used for the proof of Lemma \ref{lem:Torus}}
\end{figure*}

\begin{figure*}[htbp]
	\centering
	\includegraphics[height=0.25 \textheight, keepaspectratio]{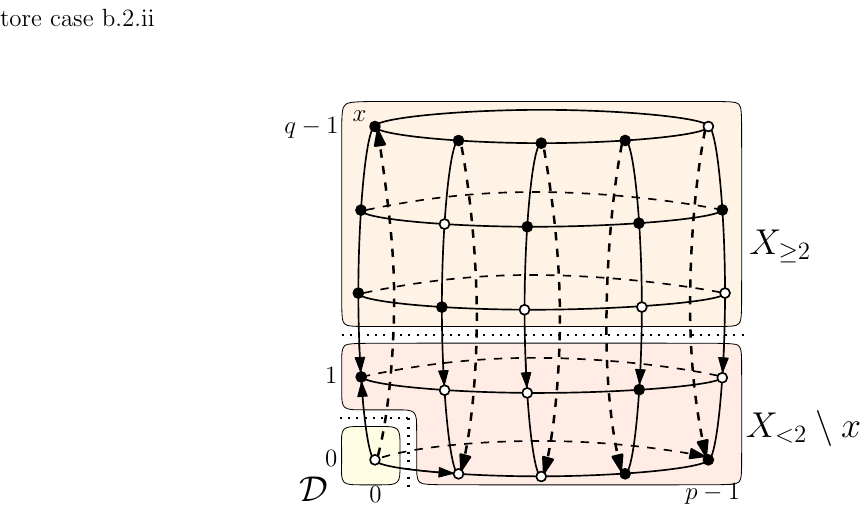}
	\caption{\hyperref[proof:torus:case:b2ii]{Case b.2.ii}: $p,q$ are odd, $|X_i\setminus T(X_i)| \equiv_2 |Y_j\setminus T(Y_j)| \equiv_2 1$ for all $i$ and $j$}
	\label{fig:tore:case_b2ii}
\end{figure*}

\subsection{Final remarks}

It seems that Lemma \ref{lem:Torus} works when $p=3$ and $q>3$.
Nevertheless, the proof would be even more technical to settle this case.
Moreover, the configuration illustrated in Figure \ref{fig:badboy} shows that with $p=q=3$ the result fails.

\begin{figure}[ht]
    \centering
    \includegraphics[width=0.3\textwidth]{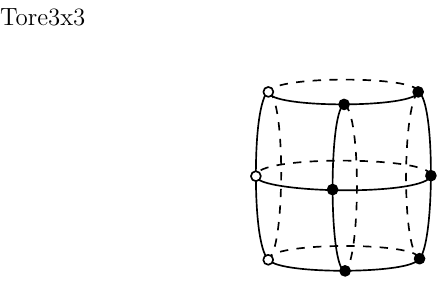}
    \caption{The Torus $C_3 \times C_3$ with a bad $T$.}
    \label{fig:badboy}
\end{figure}

\FloatBarrier

\section{Proof of Corollary \ref{thm:strict_inclusion}} \label{sec:class_inclusion}
This section is dedicated to the proof of Corollary \ref{thm:strict_inclusion}.
For readability, we recall Theorem \ref{thm:treilli} and Corollary \ref{thm:strict_inclusion} here :

\begin{theorem*}[\ref{thm:treilli}]~
    \begin{enumerate}[label= $(\alph*)$]
        \item $\cC_{\calS} = \cC_{\overline\calS}= \{\OneCirc, \TwoCirc\}$,
        \item $\cC_{\calS\overline\calS}=\{\OneCirc, \TwoCirc, \TwoCircLinked\}$,
        \item $\cC_{\calP}= \{\OneCirc\}\cup \{G \in \cC_{\calP \calS \overline\calS} \ | \ G \text{ is connected, non-Eulerian, and } |V(G)| + |E(G)| \equiv_2 1\}$,

        \item $\cC_{\calP\calS} \setminus \cC_{\calP}
        =\cC_{\calP\overline\calS} \setminus \cC_{\calP}
        = \{\TwoCirc, \ThreeCircIndependant\}\cup \{G\in \cC_{\calP \calS \overline\calS} \mid G \text{ is Eulerian}\}$,

		\item $\cC_{\calP \calS \overline\calS} \setminus  \cC_{\calP \calS} \subset \{\ThreeCirc\} \cup \{G \ | \ G \text{ is connected, non-Eulerian, and } |V(G)| + |E(G)| \equiv_2 0\}$.
    \end{enumerate}
\end{theorem*}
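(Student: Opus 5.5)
The plan is to prove the five items of Theorem \ref{thm:treilli} in order, always reducing to two ingredients: the exhaustive small-case table of Lemma \ref{lem:smallcases} for graphs on at most three vertices, and Lemma \ref{lem:connexe}, which forces every graph of $\cC_{\calP\calS\overline\calS}$ on at least four vertices to be connected. For each item, the graphs in $\cG_{\leq 3}$ are read off from Lemma \ref{lem:smallcases}. For $|V(G)|\geq 4$, the strategy is to exhibit a \emph{witness} set $T$ that satisfies the relevant subset of necessary conditions but fails one of the others. Such a $T$ cannot admit an acyclic $T$-odd orientation, so $G$ is excluded from the class.

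For (a) and (b), fix $|V(G)|\geq 4$ and choose $T\subsetneq V(G)$ with $|T|\in\{1,2\}$ and $|T|\not\equiv_2 |E(G)|$, so that $\calP$ fails. If $G$ has an odd-degree vertex, put one inside $T$. Then $|Source(T)|\geq 2$, and either $Sink(T)=Source(T)$ (all degrees even) or $Sink(T)$ contains an odd-degree vertex of $T$, so both $\calS$ and $\overline\calS$ hold. Hence no graph on at least four vertices lies in $\cC_{\calS\overline\calS}$, and the inclusions $\cC_\calS,\cC_{\overline\calS}\subseteq\cC_{\calS\overline\calS}$ together with Lemma \ref{lem:smallcases} give (a) and (b).

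For (c), the forward inclusion goes as follows. Let $G\in\cC_\calP$ with more than one vertex; then $G$ is connected by Lemma \ref{lem:connexe}. The choice $T=V(G)$ shows that $|V|+|E|$ must be odd, since otherwise $T$ satisfies $\calP$ but $Source(T)=\emptyset$. The choice $T=V\setminus\{v\}$ rules out Eulerian $G$, since then $Source(T)=Sink(T)=\{v\}$. For the reverse inclusion, take $G\in\cC_{\calP\calS\overline\calS}$ connected, non-Eulerian, with $|V|+|E|$ odd. Any $T$ satisfying $\calP$ has $|V\setminus T|$ odd, so $Source(T)\neq\emptyset$, and $Source(T)\neq Sink(T)$ because odd-degree vertices exist. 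Moreover $Sink(T)\neq\emptyset$: either $T$ contains an odd-degree vertex, or $V\setminus T$ contains all odd-degree vertices (an even number) and hence, by parity, also an even-degree vertex. So $T$ satisfies $\calS\overline\calS$ and the orientation exists.

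For (d), take $G\in\cC_{\calP\calS}\setminus\cC_\calP$ with $|V(G)|\geq 4$, and suppose it is non-Eulerian. By (c) and connectivity, $|V|+|E|$ is even. Let $O(G)$ be the set of odd-degree vertices, and test $T_1=V\setminus O(G)$ and $T_2=V$. Both satisfy $\calP$ since $|O(G)|$ is even. The set $T_1$ satisfies $\calS$ but not $\overline\calS$ (here $Sink(T_1)=\emptyset$), and symmetrically $T_2$ satisfies $\overline\calS$ but not $\calS$. This contradicts membership in $\cC_{\calP\calS}$, and in $\cC_{\calP\overline\calS}$ respectively. For the converse, when all degrees are even we have $Source(T)=Sink(T)$, so $\calS\Leftrightarrow\overline\calS$ and membership in $\cC_{\calP\calS\overline\calS}$ transfers to both $\cC_{\calP\calS}$ and $\cC_{\calP\overline\calS}$. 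Item (e) then follows directly by combining Lemma \ref{lem:connexe}, (d) and (c).

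The main obstacle is Lemma \ref{lem:connexe} (connectivity), which must produce, for every disconnected graph on at least four vertices, a $T$ satisfying all of $\calP\calS\overline\calS$ but leaving some component with no potential source. I would split according to whether all degrees are even. If they are, take $T$ to be everything except two or three vertices (chosen to fix parity) outside a smallest component. Otherwise, take $T=V\setminus\{x\}$ or $T=V\setminus\{u,x\}$ depending on the parity of $|V|+|E|$. The delicate subcase is when the odd-degree pair $u,v$ forms an isolated edge, which I would handle separately by placing both removed vertices elsewhere.
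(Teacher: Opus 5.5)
Your proposal is correct and follows the paper's proof essentially step for step. It uses the same witness sets: $|T|\in\{1,2\}$ with $|T|\not\equiv_2|E(G)|$ for (a)--(b), $T=V(G)$ and $T=V(G)\setminus\{v\}$ for (c), and $T_1=V(G)\setminus O(G)$ and $T_2=V(G)$ for (d); it also derives (e) from Lemma~\ref{lem:connexe}, (c) and (d) as the paper does, and your sketch of Lemma~\ref{lem:connexe} has the same case analysis. Two small points should be made explicit: connectivity of the 2- and 3-vertex members of $\cC_{\calP}$ comes from Lemma~\ref{lem:smallcases}, since Lemma~\ref{lem:connexe} needs $|V(G)|\geq 4$; and in the converse of (d), the fact that Eulerian graphs with more than one vertex are not in $\cC_{\calP}$ is supplied by (c).
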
~

\begin{corollary*}[\ref{thm:strict_inclusion}] Let $p,q,k \geq 1$ be three integers.
    \begin{itemize}
        \item [(c)'] The Cylinders $C_{2p+1}\square P_{2q}$, the Grids $P_{2p+1}\square P_q$, and Trees are in $\cC_{\calP}$.
        \item [(d)'] The Tori $C_{p}\square C_q$ with $p,q \geq 4$ and Cycles are in $\cC_{\calP\calS}\setminus\cC_{\calP}$.
        \item [(e)'] The Cylinders $C_{2p}\square P_q, C_{2p+1}\square P_{2q+1}$ are in $\cC_{\calP \calS \overline\calS}\setminus\cC_{\calP \calS}$.
	\end{itemize}
	Moreover,
	\begin{itemize}
        \item [(c)''] The Cliques $K_{4k+2} \in \{G \text{ connected and non-Eulerian} \mid |V(G)| + |E(G)| \equiv_2 1\} \setminus \cC_{\calP}$.
		\item [(d)''] The Cliques $K_{2k+1} \in \{ \circ \ \circ\} \cup \{G \mid G \text{ is Eulerian}\} \setminus \cC_{\calP \calS}$.
		\item [(e)''] The Cliques $K_{4k} \in \{G \text{ connected and non-Eulerian}  \mid |V(G)| + |E(G)| \equiv_2 0\} \setminus \cC_{\calP \calS \overline\calS}$.
    \end{itemize}
\end{corollary*}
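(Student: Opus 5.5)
The plan is to prove each item by pairing the structural results of Section \ref{sec:cartesian_product} with the descriptions in Theorem \ref{thm:treilli}, plus one explicit bad set $T$ per clique.

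For (c)', trees are in $\cC_\calP$ by Lemma \ref{lem:tree}. For $C_{2p+1}\square P_{2q}$, Lemma \ref{lem:cyl:odd} (case $q$ even) gives membership in $\cC_\calP$ directly. For the grids $P_{2p+1}\square P_q$, Theorem \ref{thm:grid:carac} applies: a bad grid instance needs both dimensions even, so every $T$ satisfying $\calP$ admits an acyclic $T$-odd orientation.

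For (d)', cycles are handled by Lemma \ref{lem:cycle}. For tori $C_p\square C_q$ with $p,q\ge 4$, Lemma \ref{lem:Torus} places them in $\cC_{\calP\calS\overline\calS}$. They are $4$-regular and connected, hence Eulerian, so item (d) of Theorem \ref{thm:treilli} gives membership in $\cC_{\calP\calS}\setminus\cC_\calP$; the proof of (d) shows that any Eulerian graph of $\cC_{\calP\calS\overline\calS}$ lies there. Non-membership in $\cC_\calP$ also follows from (c), since Eulerian graphs are excluded.

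For (e)', Lemmas \ref{lem:cyl:even} and \ref{lem:cyl:odd} put $C_{2p}\square P_q$ and $C_{2p+1}\square P_{2q+1}$ in $\cC_{\calP\calS\overline\calS}$. To exclude them from $\cC_{\calP\calS}$ I use the set of Figure \ref{fig:bad_cylindre}: $T=V\setminus(X_0\cup X_{q-1})$. It satisfies $\calS$, it violates $\overline\calS$ (so no orientation exists), and I only need to check $\calP$. We have $|T|=p'(q-2)$ for a cycle of length $p'$, and $|E|\equiv_2 p'$ by (C0). If $p'$ is even then $|T|$ is even and so is $|E|$. If $p'$ and $q$ are odd then $|T|$ is odd and so is $|E|$. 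The degenerate case $q=1$ is a cycle, which is Eulerian and so already lies in $\cC_{\calP\calS}$; I would therefore restrict to $q\ge2$, or $q\ge 3$ in the odd case, noting the statement's implicit range. Equivalently, the cylinders are connected, non-Eulerian, with $|V|+|E|$ even, so by (d) they cannot be in $\cC_{\calP\calS}$.

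For the cliques, I first compute parities. For $K_n$, $|E|=n(n-1)/2$. When $n=4k+2$ this gives $|E|$ odd, $|V|$ even, $|V|+|E|$ odd, and all degrees odd, so the graph is non-Eulerian. When $n=4k$, $|V|+|E|$ is even and the graph is non-Eulerian. When $n=2k+1$, all degrees are even, so the graph is Eulerian. Each clique then needs one $T$ satisfying the relevant conditions but admitting no acyclic $T$-odd orientation. I use the game of Lemma \ref{lemme:equivalence_Jeu_AOP}, observing that in $K_n$ removing a vertex flips every remaining vertex. After $m$ removals a vertex's colour is its initial colour flipped $m$ times. So a valid elimination order needs the $(m+1)$-th removed vertex to be initially white if $m$ is even and initially black if $m$ is odd. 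Hence $K_n$ is solvable iff $|V\setminus T|=\lceil n/2\rceil$ and $|T|=\lfloor n/2\rfloor$.

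For (c)'' with $n=4k+2$, take $|T|$ odd with $|T|\ne 2k+1$, say $|T|=1$, which satisfies $\calP$. For (d)'' with $n=2k+1$, $\calP$ requires $|T|\equiv k$; take $|T|=k+2$ (parity matches, $V\setminus T$ has $k-1\ge 2$ vertices when $k\ge3$, else choose $|T|=k-2$ suitably) so that $\calS$ holds but the count fails. For small $k$ I would adjust by hand, $K_3$ being excluded via Lemma \ref{lem:smallcases}. For (e)'' with $n=4k$, $\calP$ needs $|T|$ even; take $T=\emptyset$, which has sources and sinks (even-degree-free graph: $Sink(T)$ equals the odd-degree vertices in $T$, so this is empty). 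Instead I take $|T|=2k+2$, which keeps odd-degree vertices in $T$ and $\ge2$ white vertices, giving $\calS$ and $\overline\calS$ but $|T|\ne 2k$.

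The main obstacle is the parity bookkeeping at the boundary cases, namely small $k$ for cliques and $q\in\{1,2\}$ for cylinders, which must be checked individually.
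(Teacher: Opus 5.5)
You follow the paper's route for trees, grids, cylinders and tori: the same lemmas from Section~\ref{sec:cartesian_product}, combined with items (c) and (d) of Theorem~\ref{thm:treilli}. Your witness $T=V\setminus(X_0\cup X_{q-1})$ for (e)' is exactly the set $T_1=V\setminus O(G)$ used in the proof of (d). You are right that $C_{2p}\square P_1$ is a cycle and hence lies in $\cC_{\calP\calS}$, so (e)' needs $q\ge 2$ there, a point the paper misses. Likewise it needs $p\ge 2$, since Lemma~\ref{lem:cyl:even} requires a cycle of length at least $4$. Your game argument that $K_n$ is solvable iff $|T|=\lfloor n/2\rfloor$ is correct and more explicit than the paper's ``easy induction''.

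The gap is in the small cliques, which you defer to ``adjust by hand''. That adjustment cannot succeed, because the statement is false for $k=1$ in (d)'' and (e)''.

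For (e)'' with $n=4$, your choice $|T|=2k+2=4$ is all of $V$: there are $2k-2=0$ white vertices, not at least $2$, so $\calS$ fails. No other choice works either. $\calP$ forces $|T|$ even, $\calS$ rules out $|T|=4$, and $\overline\calS$ rules out $T=\emptyset$, since $Sink(T)=T$ when every degree is odd. The only admissible sets therefore have $|T|=2=\lfloor 4/2\rfloor$, which are solvable by your own characterization. Hence $K_4\in\cC_{\calP\calS\overline\calS}$; in fact $K_4\in\cC_{\calP\calS\overline\calS}\setminus\cC_{\calP\calS}$.

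Similarly, for (d)'' with $n=3$, $\calP$ and $\calS$ force $|T|=1=\lfloor 3/2\rfloor$, so $K_3\in\cC_{\calP\calS}$. Lemma~\ref{lem:smallcases} and Lemma~\ref{lem:cycle} already say this, so Lemma~\ref{lem:smallcases} does not exclude $K_3$ from $\cC_{\calP\calS}$; it shows $K_3$ is a counterexample.

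The paper's own argument has the same hole. Lemma~\ref{lem:K2p+1} only gives $K_n\notin\cC_{\calP\calS\overline\calS}$ for $n\ge 5$, and its $\overline\calS$ check fails at $n=4$, $T=\emptyset$. The correct conclusion is to state (d)'' and (e)'' for $k\ge 2$. In that range your witnesses all satisfy the required conditions and violate $|T|=\lfloor n/2\rfloor$, so your proof goes through:
\begin{itemize}
\item $|T|=1$ for $K_{4k+2}$, valid for every $k\ge 1$;
\item $|T|=k-2$ for $K_{2k+1}$, which works uniformly for $k\ge 2$;
\item $|T|=2k+2$ for $K_{4k}$.
\end{itemize}
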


\subsection{Proof of (c'), (d') and (e')}

\noindent Proof of \textbf{(c)'}:
By Lemma \ref{lem:tree}, all trees belong to $\cC_\calP$.
Furthermore, by Lemma \ref{lem:grid:notBadPath} on grids, if $G = P_p \square P_q$ with $p$ or $q$ odd, then $G \in \cC_{\calP \calS \overline\calS}$.
Similarly, by Lemma \ref{lem:cyl:odd} on cylinders, if $G = C_p \square P_q$ with $p$ odd and $q$ even, then $G \in \cC_{\calP \calS \overline\calS}$.
 In both cases, the graph $G$ is connected, non-Eulerian and the condition $|V(G)| + |E(G)| \equiv_2 1$ is satisfied. So, by \textbf{(c)} of Theorem \ref{thm:cp_carac}, we have $G \in \cC_\calP$ .
\qed\newline

\noindent Proof of \textbf{(d)'}:
By Lemma \ref{lem:cycle}, all cycles belong to $\cC_{\calP \calS}\setminus \cC_\calP$.
Furthermore, by Lemma \ref{lem:Torus}, any tori $C_p \square C_q$ with $p>3$ or $q>3$ is in $\cC_{\calP \calS \overline\calS}$.
In these cases, those graphs are Eulerian and hence belong to $\cC_{\calP \calS} \setminus \cC_\calP$ from \textbf{(d)} of Theorem \ref{thm:cp_carac}.
\qed\newline

\noindent Proof of \textbf{(e)'}: Theorem \ref{thm:tore_carac} states that all cylinders belongs to  $\cC_{\calP \calS \overline\calS}$. Furthermore, $C_{2p+1}\square P_{2q+1}$ and $C_{2p}\square P_{q}$ are non Eulerian and verify $|V(G)| + |E(G)| \equiv_2 0$, thus $C_{2p+1}\square P_{2q+1}, C_{2p}\square P_{q}$ are in $\cC_{\calP \calS \overline\calS}\setminus\cC_{\calP \calS}$ from \textbf{(c)} and \textbf{(d)} of Theorem \ref{thm:cp_carac}.
\qed \newline

\subsection{Proof of (c''), (d'') and (e'')}

\noindent We first prove the following lemma on cliques:

\begin{lemma}\label{lem:K2p+1}
    Let $p> 1$ be an integer, and $T$ be any subset of $p-2$ vertices of a complete graph $K_{n}$ on $n\in \{2p+1, 2p+2\}$ vertices.
We have:
    \begin{itemize}
    \item [-] $K_{2p+1}\in \{G \ | \ |V(G)| + |E(G)| \equiv_2 p+1\}\cap \{G \ | \ G \text{ is Eulerian} \}$.
    \item [-] $K_{2p}\in \{G \ | \ |V(G)| + |E(G)| \equiv_2 p\}\cap \{G \ | \ G \text{ is connected and not Eulerian} \}$.
    \item [-] $T$ verifies $\calP \calS \overline\calS$ in $K_n$.
    \item [-] If $n\geq 5$ then $K_n$ has no acyclic $T$-odd orientation.
    \end{itemize}
\end{lemma}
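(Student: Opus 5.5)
The plan is to check the four items one by one. The first three are parity and degree bookkeeping; the last is a short argument in the game formulation (Problem \ref{prob:Jeu_AOP}, equivalent to Problem \ref{prob:AOP} by Lemma \ref{lemme:equivalence_Jeu_AOP}).

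\textbf{First two items.} Write $|E(K_n)|=n(n-1)/2$.
\begin{itemize}
\item For $n=2p+1$: $|E|=p(2p+1)\equiv_2 p$, so $|V|+|E|\equiv_2 p+1$. Every degree equals $2p$, which is even, and $K_{2p+1}$ is connected, so it is Eulerian.
\item For $K_{2p}$: $|E|=p(2p-1)\equiv_2 p$ and $|V|$ is even, so $|V|+|E|\equiv_2 p$. Every degree equals $2p-1$, which is odd, so $K_{2p}$ is connected and not Eulerian.
\end{itemize}

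\textbf{Third item: the conditions $\calP\calS\overline\calS$.}
\begin{itemize}
\item $\calP$: compare $|T|=p-2\equiv_2 p$ with $|E(K_n)|$. For $n=2p+1$ this gives $\calP$ directly.
\item $\calS$: $Source(T)=V\setminus T$ has at least $p+3\geq 2$ vertices.
\item $\overline\calS$, $n$ odd: all degrees are even, so $Sink(T)=Source(T)$ and $\overline\calS$ follows from $\calS$.
\item $\overline\calS$, $n$ even: all degrees are odd, so $Sink(T)=T$. This is nonempty only if $p\geq 3$. If $|T|=1$ then $Sink(T)=T\neq V\setminus T=Source(T)$, so the singleton clause holds.
\end{itemize}
This step is where I expect trouble with the statement as written. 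For $n=2p+2$ one gets $|E|=(p+1)(2p+1)\equiv_2 p+1\not\equiv_2 |T|$, so $\calP$ fails. Likewise, for small $p$ the set $Sink(T)$ can be empty. So I would first pin down the intended size of $T$ in the even case, presumably $|T|\equiv_2 p+1$ with $T\neq\emptyset$, for instance $p-1$ vertices. I would then redo these checks for that size.

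\textbf{Fourth item: nonexistence of an orientation.} I would give an exact characterization of solvable instances on cliques. In the game on $K_n$, removing a white vertex flips the colour of every remaining vertex. After $k$ removals, each remaining vertex therefore has its original colour flipped exactly $k$ times.

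The $(k+1)$-st removed vertex must be white at that moment. So it was originally white when $k$ is even and originally black when $k$ is odd. Hence any valid elimination order alternates originally-white and originally-black vertices, starting with a white one. Conversely, any such alternating order is valid.

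Consequently $K_n$ has an acyclic $T$-odd orientation if and only if $|V\setminus T|=\lceil n/2\rceil$ and $|T|=\lfloor n/2\rfloor$. For $n\in\{2p+1,2p+2\}$ this forces $|T|\in\{p,p+1\}$. A set of size $p-2$, or the corrected size in the even case, never meets this, so no acyclic $T$-odd orientation exists. The hypothesis $n\geq 5$ only serves to exclude the degenerate small cases where the size of $T$ would coincide with $\lfloor n/2\rfloor$.

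The main obstacle is not this final argument, which is clean. It is reconciling the parameters in the statement so that $\calP$ and $\overline\calS$ genuinely hold for $K_{2p+2}$.
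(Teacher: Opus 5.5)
Your proposal is correct and is essentially the paper's proof: the same parity and degree bookkeeping for the first three items, and for the last the fact that $K_n$ has an acyclic $T$-odd orientation iff $|T|=\lfloor n/2\rfloor$, which the paper merely asserts ``by induction on $n$'' and your alternation argument in the game proves directly. Your diagnosis of the typo is also right---the paper's proof works with $n\in\{2p,2p+1\}$, i.e.\ your corrected version after re-indexing, except that it also admits $n=4$, $T=\emptyset$, where $Sink(T)=\emptyset$ and $\overline\calS$ fails, so your fix ($|T|=p-1$ for $n=2p+2$) is the cleaner one (and there $n\geq 5$ is automatic; it never serves to avoid $|T|=\lfloor n/2\rfloor$, which cannot occur).
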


\begin{proof}
The first two parts come from the facts that for $n\in \{2p, 2p+1\}$, we have $|V(K_n)| =n$, $|E(K_n)| = \frac{n (n-1)}{2} \equiv_2 p$ and $d(v)\equiv_2 n-1$ for every vertex $v$.
For the third part, $\calP$ follows from $|T| \equiv_2 p$ and $\calS$ from $\vert V(K_n)\setminus T\vert \geq 2$, since $p>1$. Moreover, $\overline\calS$ follows from $d(v)\equiv_2 n-1$ for all $v$ in $V(K_n)$: indeed,  vertices in $V(G)\setminus T$ are potential sinks whenever $p$ is odd and vertices in $T$ are potential sinks whenever $p$ is even.

\noindent The last assertion is a direct consequence of the claim asserting that the complete graph on $n$ vertices admits an acyclic $T$-odd orientation if and only if $|T| = \lfloor \frac{n}{2}\rfloor$.
This claim can be easily proved by induction on $n$.
\end{proof}

\noindent Now, we know from Lemma \ref{lem:K2p+1}, that $K_n \not \in \cC_{\calP \calS \overline\calS}$ for any $n\geq 5$.
And we also know that:\newline
\noindent \textbf{-~} For any odd $p\geq 1$, we have that $K_{2p} \in \{G \ | \ |V(G)| + |E(G)| \equiv_2 1\}\setminus  \cC_{\calP\calS\overline\calS}$ and is non-Eulerian.
Thus, \textbf{(c'')} of Theorem \ref{thm:strict_inclusion} is a direct consequence of this last observation.

\noindent \textbf{-~} For any $p\geq 1$, we have $K_{2p+1} \in  \{G \ | \ G \text{ is Eulerian} \}$.
Thus, \textbf{(d'')} of Theorem \ref{thm:strict_inclusion} is a direct consequence of this last observation.

\noindent \textbf{-~} For any even $p\geq 1$, we have that $K_{2p} \in \{G \ | \ |V(G)| + |E(G)| \equiv_2 0\}\setminus  \cC_{\calP\calS\overline\calS}$ and is non-Eulerian.
Thus, \textbf{(e'')} of Theorem \ref{thm:strict_inclusion} is a direct consequence of this last observation.
\hfill \qed

\section{Conclusion}

\noindent This paper advances our understanding of the acyclic orientation problem with parity constraints. We introduced new necessary conditions, the first one, $\calS$ ensures the existence of a source which would not be the unique sink, and symmetrically $\overline\calS$ ensures the existence of a sink which would not be the unique source . 
From those necessary condition, we established a hierarchy of graph classes, $\cC_{\cN}$ for $\cN \subseteq \{\calP, \calS, \overline\calS\}$, which characterize when those conditions are also sufficient. Our preliminary characterization of these classes in Theorem \ref{thm:treilli} reveals strict inclusions and highlights the special role of Eulerian graphs (see Figure \ref{fig:class_intersections}).

\noindent In fact, we could take apart a condition $\calS\cap\overline\calS$ that is currently included in $\calS$ and $\overline\calS$. That is, if $Source(T) = Sink(T) = \{v\}$ for some vertex $v$, then no acyclic T-odd orientation exists unless the graph is a singleton. We would like to discuss some properties of this condition.
Remind that $Source(T) = Sink(T)$ if and only if the graph has no odd degree vertices. Now, in graphs satisfying $|E|+|V|\equiv_2 0$ the condition $\calP$ already imposes that $|V\setminus T|\equiv_2 0$ so in particular $\calS\cap\overline\calS$ is always satisfied. Hence, the condition $\calS\cap\overline\calS$, for graphs with at least $4$ vertices, finds its real meaning for Eulerian graphs verifying $|E|+|V|\equiv_2 1$ and taking it apart would allow only to refine the class $\cC_{\calP \calS}\setminus \cC_{\calP}$ into two subclasses depending of whether the graphs verify $|E|+|V|\equiv_2 1$ or $|E|+|V|\equiv_2 0$.\newline

\noindent A central contribution of this work is the complete characterization of solvable instances for grids, cylinders, and large tori. These results not only solve the problem for significant graph families but also serve to demonstrate the strictness of the inclusions within our proposed hierarchy. Methodologically, we introduced $T$-decompositions, a powerful constructive tool that provides a pathway to polynomial-time algorithms for these specific graph classes.\newline

\noindent Our findings open several avenues for future research. A primary open question is the computational complexity of recognizing graphs in the class $\cC_{\calP \calS \overline\calS}$. A polynomial-time algorithm for this problem would significantly clarify the landscape of tractable instances especially since it would directly provide from our results a polynomial-time algorithm for recognizing instances in $\cC_{\calP \calS}$ and $\cC_{\calP}$. Further investigation is also needed to fully characterize the graphs within this class, for which we have provided initial insights and counterexamples using cliques. We thus pose the following open problem:
\begin{problem}
	Given a graph $G$, does there exists a polynomial time algorithm to decide whether $G \in \cC_{\calP \calS \overline\calS}$?
\end{problem}

\noindent Finally, extending our results on Cartesian products presents a natural next step. Indeed, a consequence of Theorem $1.8$ in \cite{kiralyDualCriticalGraphsNotes} gives that, given two graphs $G,H \in \cC_{\calP}$ such that $|V(G)|\equiv_2 1$, we have $G \square H \in \cC_{\calP}$. More generally, determining the conditions on factors $G$ and $H$ such that $G \square H$ belongs to one of our classes remains a challenging problem. The case of small tori, such as $C_3 \square C_q$, which our current methods do not cover, stands as an intriguing open problem as well.\newline

\begin{figure}
	\centering
	\includegraphics[width=0.4\textwidth]{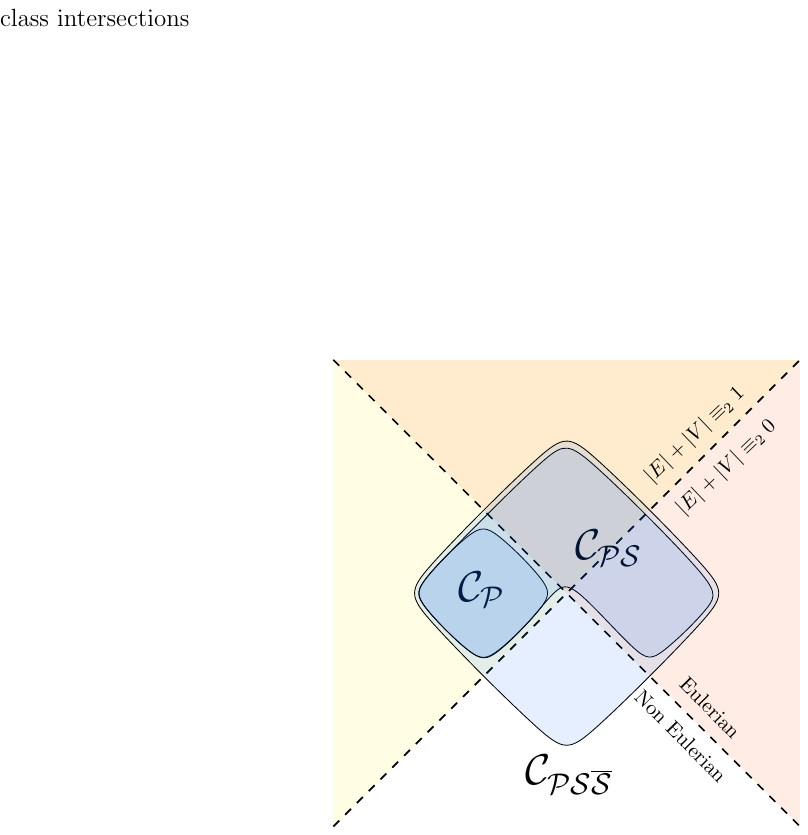}
	\caption{Hierarchy of classes $\cC_{\cN}$ for $\cN \subseteq \{\calP, \calS, \overline\calS\}$.}
	\label{fig:class_intersections}
\end{figure}

 \FloatBarrier
\printbibliography

\end{document}